\documentclass{article}

\usepackage{MacrosRidges}
\synctex=1 


\usepackage[ruled, vlined]{algorithm2e}
\usepackage{float}

\usepackage{xcolor}
\usepackage{cite}

\usepackage{fullpage}
\usepackage{mathtools}
\setcounter{secnumdepth}{2}
\usepackage{graphicx}
\usepackage{subcaption}

\def\blambda{{\bs{\lambda}}}
\def\bLambda{{\bs{\Lambda}}}
\def\Vo{\V_{\!\!\perp}}
\def\Vp{\V_\parallel}
\def\Sigmahx{{\bs{\Sigma}_h(\x)}}
\def\muhx{{\bs{\mu}_h(\x)}}
\def\balpha{{\bs{\alpha}}}
\def\bgamma{{\bs{\gamma}}}

\setlength{\parindent}{0pt}
\setlength{\parskip}{1em}

\begin{document}
\title{Log-Concave Ridge Estimation\footnote{This work was supported by Swiss National Science Foundation. It is part of the author's PhD dissertation.}}
\author{Christof Str\"ahl\\University of Bern, Switzerland}
\date{\today}
\maketitle

\begin{abstract}
  We develop a density ridge search algorithm based on a novel density
  ridge definition. This definition is based on a conditional variance
  matrix and the mode in the lower dimensional subspace.  It is compared to
  the subspace constraint mean shift algorithm in \citet{OzertemETAL2011},
  based on the gradient and Hessian of the underlying probability density
  function. We show the advantages of the new algorithm in a simulation
  study and estimate galaxy filaments from a data set of the Baryon
  Oscillation Spectroscopic Survey.
\end{abstract}

\tableofcontents

\section{Introduction}
\label{sec:introduction}

\subsection{From Principle Curves to Ridges}
\label{sec:from-princ-curv}

Nowadays, a ridge of a probability distribution is understood as a lower
dimensional structure, where each point on the ridge is the mode in an affine
subspace. The subspace is given by some eigenvectors of the Hessian of the
underlying probability density function; see \citet{OzertemETAL2011},
\citet{Genovese2014}. We give a more general definition, where the subspace
is given by the directions with smallest local variance.

An early approach to define lower dimensional structures of a distribution
were made by \citet{HastieStuetzle1989}. In their paper, a principle curve
is a smooth curve through the middle of the data. For any point on the
curve the average of all data points coincides with the point. More
formally, they give a definition for principle curves of probability
density functions as follows:

\begin{defi}
  \label{def:Principal-Curves-Hastie}
  Let $f$ be a probability density function on $\R^d$ with corresponding
  random vector $\X$ with finite second moments and assume without
  loss of generality that $\Ex(\X) = 0$.  Let $\gamma$ be a differentiable
  unit-speed curve in $\R^d$ parameterized over $I \subset \R$,
  i.e.~$\lVert \gamma'(t)\rVert = 1$ for $t \in I$, that does not
  intersect itself and has finite length inside any finite ball in
  $\R^d$. Define the \emph{projection index} $t_{\gamma}: \R^d \rr \R$ as
  \begin{displaymath}
    t_\gamma(\x) = \sup_t \bigl\{ t : \lVert \x - \gamma(t) \rVert =
    \inf_s \lVert \x - \gamma(s) \rVert \bigr\} .
  \end{displaymath}
  The curve $\gamma$ is called \emph{self-consistent} or a
  \emph{principle curve} of $f$ if $\Ex(\X \, \vert \, t_\gamma(\X) =
  t) = \gamma(t)$ for a.e.~$t$.
\end{defi}

The intuition behind the principle curves in Definition
\ref{def:Principal-Curves-Hastie} is that for any parameter value $t$ we
collect the points projected on $\gamma(t)$ on the principle curve, and
their average should lie on the principle curve. A distribution may have
multiple principle curves, i.e.~for any spherically symmetric distribution
any straight line trough the center is a principle curve. The existence of
principal curves remains an open question, expect for very special cases, i.e~
elliptical distributions.

\citet{Kegl2000} address the issue in \citet{HastieStuetzle1989}, that
principle curves do not exist for any distribution. To resolve this
problem they generalize a property of principle components: A straight line
$\gamma(t)$ is the first principle component, if and only if,
\begin{displaymath}
  \Ex\bigl( \min_t\lVert \X - \gamma(t) \rVert^2 \bigr) \leq \Ex\bigl(\min_t
  \lVert\X - \hat{\gamma}(t) \rVert^2 \bigr)
\end{displaymath}
for any other straight line $\hat{\gamma}$. Instead of considering straight
lines only, they restrict the class of curves on those with finite length. The
finite length constraint is necessary, because otherwise the expected
squared distance between $\X$ and the curve becomes arbitrary small and the
length of the curve tends to infinity. The formal definition of a principle
curve is the following:

\begin{defi}
  \label{def:Principal-Curves-Kegl}
  A curve $\gamma$ is called \emph{principal curve of length $L$ for $\X$}
  if $\gamma$ minimizes
  \begin{displaymath}
    \Delta(\gamma) \defeq \Ex\bigl( \inf_t \lVert \X - \gamma(t) \rVert^2
    \bigr) = \Ex\bigl( \lVert \X - \gamma(t_\gamma(\X)) \rVert^2 \bigr)
  \end{displaymath}
over all curves of length less then or equal to $L$.
\end{defi}

Whenever $\X$ has finite second moments, a principle curve as in Definition
\ref{def:Principal-Curves-Kegl} exists. They present the polygonal line
algorithm to estimate the principle curve.

In \citet{Delicado2001} and \citet{Delicado2003}, a principle curve (of
oriented points) is defined as a curve contained in the set of oriented
points. A point $\x$ is oriented, whenever it holds
$\x = \Ex\bigl( \X \, \vert \, \X \in \text{H}(\x, \b) \bigr)$, where $\b$
is the unit vector, such that the total variance
$\tr \bigl(\Var( \X \, \vert \, \X \in \text{H}(\x, \b)) \bigr)$ is minimal for
the hyperplane
$\text{H}(\x, \b) \defeq \{\y \in \R^d : (\y - \x)^\top \b = 0\}$. That
means, the principle curve consists of averages of $\X$, given $\X$ lies in a
hyperplane orthogonal to the direction with largest variance.

All the concepts so far only consider one-dimensional
structures. Furthermore, they are based on the expectation and not on the
shape of the underlying density function.

This changes with \citet{OzertemETAL2011}. Their definition of a principle
curve is based on the gradient and Hessian matrix of the underlying
probability density function. They also generalize the definition from
one-dimensional curves to arbitrary lower dimensional structures, called
principle sets. Because the
ridge definition in \citet{Genovese2014} is very similar to the one of
principle curves and sets in \citet{OzertemETAL2011}, we will not state
it here. The differences are discussed after Definition \ref{def:ridge_hessian}.

A ridge as defined in \citet{Genovese2014} is a $s$-dimensional structure
containing all points, where the $s$ smallest eigenvalues of the Hessian
matrix are negative, and the corresponding eigenvectors are orthogonal to
the gradient; see Definition \ref{def:ridge_hessian}. This means, each
point in the ridge is a mode in the affine subspace spanned by the
eigenvectors corresponding to the $s$ smallest eigenvalues of the Hessian.

In our ridge definition (Definition \ref{def:ridge_variance}) we replace the
Hessian matrix by a conditional covariance matrix and instead of relaying on the
gradient to check for a mode in the affine subspace, we just require that
there is such a mode. Therefore, we can relax the conditions on the
probability density function; see Definition
\ref{def:ridge_variance}. However, if the stronger conditions hold, both
definitions are equivalent.

\subsection{Ridge definitions.}
\label{sec:properties-ridge}

\paragraph{Density function.}
We always assume that the density function $f \in \cC(\R^d)$. To show the
equivalence between both ridge definitions we assume further

(A1) $f \in \cC^2(\R^d)$ with
positive eigengap
\begin{displaymath}
  \delta(\x) \defeq \lambda_s(D^2\ell(\x)) - \lambda_{s+1}(D^2\ell(\x)) > 0
\end{displaymath}
for all $\x \in \{f > 0\}$, where $\ell = \log f$.

\paragraph{The kernel function.} All algorithm involve a bounded kernel
function $K:R^d \rightarrow [0, \infty)$ such that
\begin{itemize}
\item[(K0)] $\displaystyle\int K(\z) \, d\z = 1$.
  \item[(K1)] $K$ is sign- and permutation-symmetric, i.e.
    \begin{displaymath} K(z_1, z_2, \ldots, z_d) = K(\xi_1 z_{\sigma(1)},
      \xi_2 z_{\sigma(2)}, \ldots, \xi_d z_{\sigma(d)})
    \end{displaymath} for all $\z \in \R^d$, $\bs{\xi} \in \{-1, 1\}^d$
    and $\sigma \in \mathcal{S}_d$, the set of permutations on $\{1, 2,
    \ldots, d\}$.
  \item[(K2)] $\displaystyle\int K(\z) \z \z^\top = \I_d$, the identity
    matrix in $\R^d$.
\end{itemize}
To get equivalence between both ridge definitions or to apply the
algorithms presented later, we will require additional conditions.
\begin{itemize}
\item[(K3)] It holds
    \begin{displaymath}
    \mu_4 \defeq \int K(\z) z_1^4 \, d\z = 3 \quad \text{and} \quad \mu_{22}
    \defeq \int K(\z) z_1^2 z_2^2 \, d\z = 1.
  \end{displaymath}
  \item[(K4)] $K$ is rotationally symmetric, i.e.~there exists a
  \emph{profile} of the kernel $k:\R \rightarrow [0,\infty)$ such that
  $K(\z) = c_{k,d} k\bigl( \lVert \z \rVert^2 \bigr)$ for some constant
  $c_{k,d}$.
\item[(K5)] The kernel $K$ is log-concave and $K \in \cC^2_{b}(\R^d)$, this
  means, all partial derivatives of order 2 exist and are
  bounded. Moreover, the largest eigenvalue of $D^2\log K$ is negative and
  bounded away from $0$.
\end{itemize}
  
For our definitions and algorithms we need rescaled versions of $K$. For
any bandwidth $h > 0$ we write
\begin{displaymath}
  K_h(\z) \defeq h^{-d}K(h^{-1}\z).
\end{displaymath}
A particular choice of $K$ fulfilling (K0--5) is the standard Gaussian density
\begin{displaymath}
  \z \mapsto (2\pi)^{-d/2} \exp(-\lvert \z \rvert^2 / 2).
\end{displaymath}

\paragraph{Two density ridge definitions.}
Both definitions and algorithms involve the spectral decomposition of
either the Hessian matrix of $f$ or $\ell$, or the conditional covariance
matrix.

For a symmetric matrix $\M \in \R^{d \times d}$ we denote the vector of
eigenvalues in decreasing order by
\begin{displaymath}
  \blambda(\M) = (\lambda_1(\M), \lambda_2(\M), \ldots, \lambda_d(\M)) \quad
  \text{with} \quad \lambda_1(\M) \geq \lambda_2(\M) \ge \cdots \ge
  \lambda_d(\M), 
\end{displaymath}
the matrix with the eigenvalues on the diagonal by
$\bLambda(\M) = \diag\bigl( \blambda(\M) \bigr)$ and the corresponding
eigenvectors by
\begin{displaymath}
\V(\M) = [\v_1(\M), \v_2(\M), \ldots, \v_d(\M)], \quad \text{where}
 \quad\lambda_j(\M) \v_j(\M) = \M \v_j(\M) \quad \text{for} \quad 1 \leq j
 \leq d .
\end{displaymath}
We are usually interested in the space spanned by the $s$ largest or the
$d-s$ smallest eigenvalues. Hence we write for fixed $1 \leq s < d$
\begin{displaymath}
\V_\parallel(\M) = [\v_1(\M), \ldots, \v_s(\M)] \quad \text{and} \quad
\V_{\!\!\perp}(\M) = [\v_{s + 1}(\M), \ldots, \v_d(\M)].
\end{displaymath}
The matrix $\Vo(\M) \Vo(\M)^\top$ projects a vector onto the space
spanned by $\v_{s+1}(\M), \ldots, \v_{d}(\M)$. The distance between two
such subspaces generated by matrices $\A, \B \in \R^{d \times
  d}_{\text{sym}}$ is defined by
\begin{displaymath}
  \text{dist}\bigl( \Vo(\A), \Vo(\B) \bigr) \defeq \lVert \Vo(\A)
  \Vo(\A)^\top - \Vo(\B) \Vo(\B)^\top \rVert_F,
\end{displaymath}
where $\lVert \cdot \rVert_F$ denotes the Frobenius norm.

Next, we present a new ridge definition rely on fewer assumptions on
the underlying density function, followed by the usual ridge definition
given e.g~in \citet{Eberly1996}, \citet{Genovese2014}.

\begin{defi} \label{def:ridge_variance} Suppose (K0--3) hold. Let $\X$ and
  $\Z_h$ be independent random vectors with density functions $f$ and
  $K_h$, respectively. We define the \emph{conditional covariance matrix}
  \begin{displaymath}
    \bs{\Sigma}_h(\x) \defeq \Var\bigl( \X \big\vert \X + \Z_h = \x \bigr).
  \end{displaymath}
  Assume there exists a matrix $\V_{\!\!\perp}(\x) \in \R^{d \times (d-s)}$ with
  orthonormal columns, such that
  \begin{displaymath}
    \text{dist}\bigl(\Vo(\bs{\Sigma}_h(\x)),
    \Vo(\x) \bigr) \rightarrow 0 \quad \text{as} \quad h
    \rightarrow \infty \quad \text{for each } \x \in \{f > 0\}.
  \end{displaymath}
  The \emph{$s$-dimensional ridge} of $f$ is then
  \begin{displaymath}
    R_s(f) \defeq \bigl\{ \x \in \R^d: \R^{d-s} \ni \z \mapsto f(\x + \V_{\!\!\perp}(\x) \z) \ \text{has a mode at} \ \bs{0}_{d-s} \bigr\}.
  \end{displaymath}
\end{defi}

\begin{defi} \label{def:ridge_hessian}
  Let $f \in \cC^2(\R^d)$ be a probability density function with gradient
  $\bs{g}(\x)$ and Hessian matrix $\bs{H}(\x)$ at point $\x$. 
  The \emph{ridge $\tilde{R}_s(f)$ of $f$ with dimension $s$} is given by
  \begin{displaymath}
    \tilde{R}_s(f) = \{\x : \V_{\!\!\perp}(\bs{H}(\x))^\top \bs{g}(\x) = \bs{0},
    \lambda_{s+1}(\bs{H}(\x)) < 0\}.
  \end{displaymath}
\end{defi}

Definition \ref{def:ridge_hessian} is almost identical to the definition of
principle sets in \citet{OzertemETAL2011}. The only
difference is in the inclusion or exclusion of points already contained in a
lower dimensional ridge. In Definition \ref{def:ridge_hessian} we have
\begin{displaymath}
  \tilde{R}_0(f) \subset \tilde{R}_1(f) \subset \ldots \subset
  \tilde{R}_{d-1}(f).
\end{displaymath}
However, a principle set of dimension $s$ is defined as $\tilde{R}_0(f)$
for $s = 0$ and $\tilde{R}_s(f) \backslash \tilde{R}_{s-1}(f)$ for
$1 \leq s \leq d-1$. Hence, in general, only the 0-dimensional principle
sets and ridges coincide, which are the local maxima of the probability
density function.

\citet{OzertemETAL2011} show that the principle sets, and hence the ridges,
are the same for $f$ and $p \circ f$, where $p$ is a monotonically
increasing function on $\R$. Especially, replacing the Hessian of $f$ with the Hessian of
$\log \circ f$ in Definition \ref{def:ridge_hessian} leads to the same ridge.

\paragraph{Weighted distribution.}
Let $\X$ and $\Z_h$ be independent with probability density $f$ and $K_h$,
respectively. Let $q$ be the probability density of $\X$, given
$\X + \Z_h = \x$, then
\begin{displaymath}
  q(\y) = \frac{f_{\X, \X + \Z_h = \x}(\y, \x)}{f_{\X + \Z_h}(\x)} =
  \frac{f(\y) K_h(\y-\x)}{\int f(\z) K_h(\z - \x)\, d\z} = s_h(\x)^{-1}
  K_h(\y - \x) f(\y),
\end{displaymath}
where $s_h(\x) \defeq \int K(\z) f(\x + h\z) \, d\z$. Let the conditional
expectation be
\begin{align*}
  \muhx
  & \defeq \Ex(\X \, \vert \, \X + \Z_h = \x)\\
  & = s_h(\x)^{-1} \int K(\z) (x + h\z) f(\x + h\z) \, d\z\\
  & = \x + h s_h(\x)^{-1} \s_h(\x),
\end{align*}
where $\s_h(\x) \defeq \int K(\z) \z f(\x + h\z) \, d\z$ and the conditional
variance
\begin{align*}
  \Sigmahx
  & \defeq \Var(\X \, \vert \X + \Z_h = \x)\\
  & = \int \y \y^\top q(\y) \, d\y - \muhx \muhx^\top\\
  & = h^2 \bigl( s_h(\x)^{-1} \S_h(\x) - s_h(\x)^{-2} \s_h(\x)
    \s_h(\x)^\top \bigr), 
\end{align*}
 because
\begin{align*}
  \int \y \y^\top f(\y) K_h(\y - \x) \, d\y
   &= \int (\x + h\z) (\x + h\z)^\top K(\z) f(\x + h\z) \, d\z\\
   &= \x \x^\top s_h(\x) + h \x \s_h(\x)^\top + h \s_h(\x) \x^\top + h^2 \S_h(\x),
\end{align*}
where $\S_h(\x) \defeq \int K(\z) \z \z^\top f(\x + h\z) \, d\z$.

The following Lemma is a direct consequence of Corollary 2.2 in
\citet{StraehlETAL2020} about local moments.
\begin{lem}
  \label{lem:conditional-distribution}
  Suppose $f \in \cC^2(\R^d)$ and conditions (K0--2) hold, then for $\x \in
  \{f > 0\}$, 
  \begin{align*}
    \muhx &= \x + h^2f(\x)^{-1} Df(\x) + \o(h^3),\\
    \Sigmahx & = h^2 \I_d + 2^{-1}(\mu_{22} - 1) h^4 f(\x)^{-1} \tr\bigl(
               D^2f(\x) \bigr) \I_d + h^4 f(\x)^{-1} D^2f(\x) \odot \M\\
    & \quad - h^4 f(\x)^{-2} Df(\x) Df(\x)^\top + \o(h^4)
\end{align*}
as $h \to 0$, locally uniformly in $\x$. Here $\odot$ denotes the
componentwise product of matrices, and $\M$ is the matrix with entries
$M_{jk} := \one\{j=k\}(\mu_4 - \mu_{22})/2 + \one\{j\ne k\} \mu_{22}$ for
$j,k = 1,\ldots,d$. If the kernel function fulfills (K3), then
\begin{displaymath}
  \Sigmahx = h^2 \I_d + h^4 D^2 \log f(\x) + \o(h^4) \quad \text{as} \quad
  h \rightarrow 0, \text{ locally uniformly in } \x.
\end{displaymath}
\end{lem}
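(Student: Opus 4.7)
The plan is to invoke Corollary~2.2 of \citet{StraehlETAL2020} on the three local moments $s_h(\x)$, $\s_h(\x)$, $\S_h(\x)$ introduced just before the lemma, and then feed the resulting Taylor expansions into the identities
\begin{displaymath}
  \muhx = \x + h\,s_h(\x)^{-1}\s_h(\x), \qquad \Sigmahx = h^2\bigl(s_h(\x)^{-1}\S_h(\x) - s_h(\x)^{-2}\s_h(\x)\s_h(\x)^\top\bigr),
\end{displaymath}
which already appear in the computation preceding the lemma.

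First, from the cited corollary I would read off the locally uniform expansions
\begin{align*}
  s_h(\x) &= f(\x) + \frac{h^2}{2}\tr D^2 f(\x) + o(h^2),\\
  \s_h(\x) &= h\, Df(\x) + o(h^2),\\
  \S_h(\x) &= f(\x)\,\I_d + \frac{h^2\mu_{22}}{2}\tr D^2 f(\x)\,\I_d + h^2\bigl(D^2 f(\x)\odot\M\bigr) + o(h^2)
\end{align*}
for $\x\in\{f>0\}$. The vanishing of odd-order terms reflects the sign-symmetry in (K1), which forces every odd moment $\int K(\z) z_{i_1}\cdots z_{i_{2m+1}}\,d\z$ to be zero. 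The quartic contribution to $\S_h$ comes from the general moment
\begin{displaymath}
  \int K(\z)\, z_i z_j z_k z_l\, d\z = \mu_{22}\bigl(\delta_{ij}\delta_{kl}+\delta_{ik}\delta_{jl}+\delta_{il}\delta_{jk}\bigr) + (\mu_4-3\mu_{22})\delta_{ijkl},
\end{displaymath}
whose diagonal correction $(\mu_4-3\mu_{22})\delta_{ijkl}$ is precisely what combines with the off-diagonal $\mu_{22}$ terms to form the Hadamard factor $\M$, together with the residual isotropic piece $\tfrac{\mu_{22}}{2}\tr D^2 f\,\I_d$.

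Next, I would substitute. A geometric expansion gives $s_h(\x)^{-1} = f(\x)^{-1}\bigl[1 - \frac{h^2}{2f(\x)}\tr D^2 f(\x)\bigr] + o(h^2)$, from which $\muhx = \x + h^2 f(\x)^{-1}Df(\x) + o(h^3)$ follows immediately, and
\begin{displaymath}
  s_h(\x)^{-1}\S_h(\x) = \I_d + \frac{h^2(\mu_{22}-1)}{2f(\x)}\tr D^2 f(\x)\,\I_d + \frac{h^2}{f(\x)}\bigl(D^2 f(\x)\odot\M\bigr) + o(h^2),
\end{displaymath}
while $s_h(\x)^{-2}\s_h(\x)\s_h(\x)^\top = h^2 f(\x)^{-2}\, Df(\x)\, Df(\x)^\top + o(h^2)$. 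Multiplying by $h^2$ and subtracting assembles the first covariance expansion stated in the lemma.

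Finally, under (K3) one has $\mu_{22}=1$ and $\mu_4=3$, so $(\mu_{22}-1)/2 = 0$ kills the isotropic trace term and $\M$ becomes the matrix of all ones, so $D^2 f(\x)\odot\M = D^2 f(\x)$. The identity $D^2\log f(\x) = f(\x)^{-1} D^2 f(\x) - f(\x)^{-2}\, Df(\x)\, Df(\x)^\top$ collapses the previous display to $\Sigmahx = h^2\I_d + h^4 D^2\log f(\x) + o(h^4)$. The only real care needed is bookkeeping: the remainders from Corollary~2.2 must be $o(h^2)$ \emph{locally uniformly} in $\x\in\{f>0\}$, so that after multiplication by $h^2$ they stay $o(h^4)$ in the covariance expansion; this is guaranteed by $f\in\cC^2$ and the finite fourth moments of $K$ supplied by (K0)--(K2).
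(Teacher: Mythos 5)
Your proposal is correct and follows essentially the same route as the paper, which simply cites Corollary~2.2 of \citet{StraehlETAL2020} for the locally uniform expansions of $s_h$, $\s_h$, $\S_h$ and substitutes them into the identities $\muhx = \x + h\,s_h(\x)^{-1}\s_h(\x)$ and $\Sigmahx = h^2\bigl(s_h(\x)^{-1}\S_h(\x) - s_h(\x)^{-2}\s_h(\x)\s_h(\x)^\top\bigr)$ derived just before the lemma; your explicit fourth-moment computation and the (K3) simplification $\M = \mathbf{1}\mathbf{1}^\top$, $D^2 f \odot \M = D^2 f$ match the stated result exactly. The only small imprecision is attributing the finiteness of $\mu_4$ and $\mu_{22}$ to (K0)--(K2), which do not actually guarantee fourth moments; this is an implicit assumption carried by the cited corollary and the lemma statement itself, not a gap in your argument.
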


\paragraph{Equivalence of the two ridge definitions.}
Theorem \ref{thm:hessian_variance} shows that under (A1) and (K0--3) the
subspace generated by the conditional covariance matrix converges to the
subspace generated by the Hessian of the log-density and Theorem
\ref{thm:hessian_variance2} shows that under the same assumptions at any ridge
point $\x \in R_s(f)$, the density function restricted to the subspace
spanned by $\Vo(D^2f(\x))$ is log-concave with mode at
$\bs{0}_{d-s}$. Together they imply $\tilde{R}_s(f) \subset
R_s(f)$. Theorem \ref{thm:hessian_variance3} shows $R_s(f) \subset
\tilde{R}_s(f)$ again under the same conditions, and so $\tilde{R}_(f) =
R_s(f)$. 

\begin{thm} \label{thm:hessian_variance} Suppose (A1) and (K0--2) hold, then
  \begin{displaymath}
    \dist\bigl( \Vo(\bSigma_h(\x)), \Vo(\bs{H}(\x)) \bigr) \rightarrow 0
    \quad \text{as} \quad h \rightarrow 0.
\end{displaymath}
\end{thm}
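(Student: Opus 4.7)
The plan is to reduce the claim to a classical eigenspace perturbation bound (the Davis--Kahan $\sin\Theta$-theorem), using the small-$h$ expansion of $\bs{\Sigma}_h(\x)$ provided by Lemma~\ref{lem:conditional-distribution} together with the eigengap guaranteed by (A1). First, I would invoke Lemma~\ref{lem:conditional-distribution} to write, locally uniformly in $\x \in \{f > 0\}$,
\begin{equation*}
  \bs{\Sigma}_h(\x) = h^2 \bs{I}_d + h^4 \bs{B}(\x) + o(h^4),
\end{equation*}
where $\bs{B}(\x)$ is a symmetric matrix built from $f(\x)$, $Df(\x)$ and $D^2 f(\x)$ through the kernel moments provided by (K0--2). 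Since $h^2 \bs{I}_d$ and any scalar multiple of $\bs{I}_d$ inside $\bs{B}(\x)$ act trivially on eigenvectors, the subspace $\Vo(\bs{\Sigma}_h(\x))$ is controlled by the ``shape'' part of $\bs{B}(\x)$. I would then argue that, under (A1), this shape part shares its bottom $(d-s)$-dimensional eigenspace with $\bs{H}(\x)$ and in particular inherits the positive eigengap $\delta(\x)$ at position $s/s{+}1$.

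Once $\bs{B}(\x)$ has been extracted, the rest is routine. Subtracting $h^2 \bs{I}_d$ and dividing by $h^4$ yields a matrix of the form $\bs{B}(\x) + \bs{R}_h(\x)$ with $\lVert \bs{R}_h(\x) \rVert \to 0$ locally uniformly in $\x$, and Weyl's inequality guarantees that for $h$ sufficiently small the eigengap of $\bs{B}(\x) + \bs{R}_h(\x)$ at position $s/s{+}1$ is bounded below by $\delta(\x)/2$. The Davis--Kahan $\sin\Theta$-theorem then gives
\begin{equation*}
  \dist\bigl(\Vo(\bs{B}(\x) + \bs{R}_h(\x)), \Vo(\bs{B}(\x))\bigr) \le \frac{\sqrt{2}\,\lVert \bs{R}_h(\x) \rVert_F}{\delta(\x)/2} = o(1),
\end{equation*}
and combining with the identification $\Vo(\bs{B}(\x)) = \Vo(\bs{H}(\x))$ from Step~1 delivers the theorem. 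The local uniformity in $\x$ provided by Lemma~\ref{lem:conditional-distribution} is what makes the $o(1)$ bound uniform on compact subsets of $\{f > 0\}$, as needed for a pointwise statement stable under perturbation.

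The main obstacle is the identification step inside the expansion. The raw $h^4$ coefficient coming out of Lemma~\ref{lem:conditional-distribution} involves a componentwise product $f^{-1} D^2 f \odot \bs{M}$, a rank-one correction $f^{-2}(Df)(Df)^\top$, and a scalar multiple $\tr(D^2 f) \bs{I}_d$; the non-trivial task is to show that the symmetry constraints in (K0--2) on the kernel moments force the bottom $(d-s)$-dimensional eigenspace of this aggregate to coincide with $\Vo(\bs{H}(\x))$. Everything downstream of that algebraic identification is standard matrix perturbation theory, so essentially all of the analytic content of the theorem is concentrated in matching the leading shape of $\bs{\Sigma}_h(\x)$ to the Hessian spectrum.
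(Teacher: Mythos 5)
Your overall strategy---expand $\bs{\Sigma}_h(\x)$ via Lemma~\ref{lem:conditional-distribution}, peel off the isotropic $h^2\I_d$ part, secure the $s/(s{+}1)$ eigengap with Weyl's inequality for $h$ small, and finish with Davis--Kahan---is exactly the paper's proof; your rescaled formulation (subtract $h^2\I_d$, divide by $h^4$, work with a gap $\delta/2$) is equivalent to the paper's unscaled one, where the gap is $h^4\delta/2$ and the perturbation is $\o(h^4)$.

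The step you defer is, however, the one place your plan would fail as written. You hope that the symmetry constraints in (K0--2) force the $h^4$ coefficient $2^{-1}(\mu_{22}-1)f^{-1}\tr(D^2f)\I_d + f^{-1}D^2f\odot\M - f^{-2}Df\,Df^\top$ to have bottom $(d-s)$-dimensional eigenspace equal to $\Vo(\bs{H}(\x))$. Under (K0--2) alone this is false in general: $D^2f\odot\M$ rescales the diagonal of $D^2f$ by $(\mu_4-\mu_{22})/2$ and the off-diagonal by $\mu_{22}$, so unless $\mu_4=3\mu_{22}$ (the normalization imposed by (K3)) the aggregate is not an affine function of $D^2\ell(\x)$ and its eigenvectors can be rotated relative to those of $\bs{H}(\x)$. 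No symmetry argument appears in the paper; it simply takes the comparison matrix to be $\tilde{\bs{\Sigma}}_h = h^2\I_d + h^4 D^2\ell(\x)$, i.e.\ it uses the second, (K3)-simplified conclusion of Lemma~\ref{lem:conditional-distribution}, for which $\Vo(\tilde{\bs{\Sigma}}_h)=\Vo(D^2\ell(\x))$ holds trivially and (A1) supplies the gap directly. Your instinct that all the content sits in this identification is right, but the resolution is an extra moment condition, not an algebraic consequence of (K0--2); note also that the subspace the argument actually delivers is $\Vo(D^2\ell(\x))$, which differs from $\Vo(D^2f(\x))$ in general by the rank-one term $f^{-2}Df\,Df^\top$. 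Keep your Weyl/Davis--Kahan machinery, but replace the hoped-for symmetry argument by the (K3) form of the expansion and target $D^2\ell(\x)$.
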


\begin{thm} \label{thm:hessian_variance2} Suppose (A1) and K(0--2) hold and
  $\x \in \tilde{R}_s(f)$; see Definition \ref{def:ridge_hessian}. Then there
  exists $\varepsilon > 0$, such that the function
  \begin{displaymath}
    \z'' \mapsto f(\x + \Vo(D^2f(\x))\z'')
  \end{displaymath}
is log-concave on $\{\z'' \in \R^{d-s}: \lVert \z'' \rVert < \varepsilon\}$
with a mode at $\bs{0}_{d-s}$. In particular, $t \mapsto f(\x + t\u)$ has a
mode at $0$ for any $\u$ in the column space of $\Vo(D^2\ell(\x))$.
\end{thm}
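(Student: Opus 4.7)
The plan is to reduce the log-concavity claim to a Hessian computation at the ridge point, and to derive the ``in particular'' one-dimensional statement via the invariance of the ridge under $f \mapsto \log f$ recalled just after Definition \ref{def:ridge_hessian}. Set $\V \defeq \Vo(D^2 f(\x))$ and $\phi(\z'') \defeq \log f(\x + \V \z'')$. The ridge condition $\lambda_{s+1}(D^2 f(\x)) < 0$ rules out $\x$ being a minimizer of the nonnegative density, so $f(\x) > 0$, and by continuity $f > 0$ on a neighborhood of $\x$; in particular $\phi$ is well defined there, and log-concavity of $\phi$ is exactly the log-concavity being claimed.

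By the chain rule, $\nabla \phi(\z'') = \V^\top \nabla \ell(\x + \V \z'')$ and $\nabla^2 \phi(\z'') = \V^\top D^2 \ell(\x + \V \z'') \V$. Using the identity $D^2 \ell(\x) = D^2 f(\x)/f(\x) - \nabla f(\x) \nabla f(\x)^\top / f(\x)^2$ together with the ridge condition $\V^\top \nabla f(\x) = \bs 0$, the rank-one piece vanishes after sandwiching by $\V$, giving
\begin{displaymath}
  \nabla \phi(\bs 0) = \bs 0, \qquad \nabla^2 \phi(\bs 0) \;=\; \frac{\V^\top D^2 f(\x) \V}{f(\x)} \;=\; \frac{1}{f(\x)}\, \diag\bigl(\lambda_{s+1}(D^2 f(\x)), \ldots, \lambda_d(D^2 f(\x))\bigr),
\end{displaymath}
where the last equality uses that the columns of $\V$ are eigenvectors of $D^2 f(\x)$. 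Every diagonal entry is at most $\lambda_{s+1}(D^2 f(\x))/f(\x) < 0$, so $\nabla^2 \phi(\bs 0)$ is negative definite. Continuity of $\nabla^2 \phi$ then produces $\varepsilon > 0$ with $\nabla^2 \phi \prec 0$ on $\{\lVert \z'' \rVert < \varepsilon\}$, yielding strict concavity of $\phi$ and, combined with $\nabla \phi(\bs 0) = \bs 0$, a strict mode at $\bs 0_{d-s}$.

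For the ``in particular'' statement, I would invoke the monotone-invariance recalled after Definition \ref{def:ridge_hessian}: membership $\x \in \tilde R_s(f)$ is equivalent to the ridge conditions phrased in terms of $\log f$, i.e.\ $\Vo(D^2 \ell(\x))^\top \nabla \ell(\x) = \bs 0$ and $\lambda_{s+1}(D^2 \ell(\x)) < 0$. Then for any unit $\u$ in the column space of $\Vo(D^2 \ell(\x))$ one has $\u^\top \nabla \ell(\x) = 0$ and $\u^\top D^2 \ell(\x) \u \le \lambda_{s+1}(D^2 \ell(\x)) < 0$, so $t \mapsto \log f(\x + t\u)$ has vanishing first derivative and strictly negative second derivative at $t=0$, hence a strict local maximum there, and the same holds for $t \mapsto f(\x + t\u)$. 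The main subtlety in the overall argument is that the first half naturally lives in $\Vo(D^2 f(\x))$ while the ``in particular'' lives in $\Vo(D^2 \ell(\x))$; the monotone-invariance is exactly what sidesteps a direct proof---via (A1) and the rank-one relationship between $D^2 f$ and $D^2 \ell$---that these two subspaces coincide at a ridge point.
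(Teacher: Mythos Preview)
Your argument is correct and follows essentially the same route as the paper: compute the gradient and Hessian of $\z'' \mapsto \ell(\x + \Vo \z'')$ via the chain rule, use the eigenstructure at $\x$ together with the ridge condition to get $\nabla\phi(\bs 0)=\bs 0$ and $\nabla^2\phi(\bs 0)\prec 0$, and extend to a ball by continuity. The only notable difference is that the paper's proof bounds $\u^\top D^2 f(\x+\Vo\z'')\u$ directly for all small $\z''$ via a Frobenius-norm perturbation (and is silent on the ``in particular'' clause), whereas you evaluate at $\bs 0$ and appeal to continuity of $\nabla^2\phi$, and you handle the ``in particular'' part explicitly through the monotone-invariance remark---a point the paper leaves implicit.
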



\begin{thm}
  \label{thm:hessian_variance3}
 Suppose that (A1) and (K0--2) hold and  $\x \in R_s(f)$; see Definition
  \ref{def:ridge_variance}. Then
  \begin{displaymath}
    \Vo(D\ell(\x))^\top D\ell(\x) = \bs{0} \quad \text{and} \quad
    \lambda_{s+1}(D^2\ell(\x)) < 0.
  \end{displaymath}
\end{thm}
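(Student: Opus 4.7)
My plan is to convert the abstract subspace limit in Definition \ref{def:ridge_variance} into the Hessian-based picture of Definition \ref{def:ridge_hessian} and then read off both conclusions from the first- and second-order optimality conditions at a local maximum. To identify the subspaces, I invoke Lemma \ref{lem:conditional-distribution}, which gives $\Sigmahx = h^2 \I_d + h^4 D^2\ell(\x) + o(h^4)$ as $h \to 0$; under (A1), the Davis--Kahan perturbation argument already used in Theorem \ref{thm:hessian_variance} yields $\Vo(\Sigmahx) \Vo(\Sigmahx)^\top \to \Vo(D^2\ell(\x)) \Vo(D^2\ell(\x))^\top$ in Frobenius norm. Combined with the defining convergence $\Vo(\Sigmahx) \Vo(\Sigmahx)^\top \to \Vo(\x) \Vo(\x)^\top$ granted by $\x \in R_s(f)$, these two projectors must coincide, so the column spaces of $\Vo(\x)$ and $\Vo(D^2\ell(\x))$ agree. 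Since the mode property in Definition \ref{def:ridge_variance} is invariant under an orthonormal change of basis in $\R^{d-s}$, I replace $\Vo(\x)$ by $V := \Vo(D^2\ell(\x))$ from here on.

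Next, set $g(\z) := f(\x + V\z)$; by hypothesis $g$ has a local maximum at $\bs{0}_{d-s}$, so, since $f \in \cC^2$, $Dg(\bs{0}) = V^\top Df(\x) = \bs{0}$ and $D^2 g(\bs{0}) = V^\top D^2 f(\x) V \preceq 0$. The first equality, together with $D\ell = Df/f$ and $f(\x) > 0$, gives $V^\top D\ell(\x) = \bs{0}$, which is the first conclusion of the theorem. For the second conclusion, I use the identity $D^2 f = f\,(D^2\ell + D\ell D\ell^\top)$ and the equation $V^\top D\ell(\x) = \bs{0}$ just established to obtain $V^\top D^2 f(\x) V = f(\x)\, V^\top D^2\ell(\x) V$. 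Since $V$ collects the eigenvectors of $D^2\ell(\x)$ for its smallest $d-s$ eigenvalues, $V^\top D^2\ell(\x) V = \diag\bigl(\lambda_{s+1}(D^2\ell(\x)), \ldots, \lambda_d(D^2\ell(\x))\bigr)$, and the semidefinite inequality therefore delivers $\lambda_{s+1}(D^2\ell(\x)) \le 0$.

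The step I expect to require the most care is the strict inequality $\lambda_{s+1}(D^2\ell(\x)) < 0$ demanded by the theorem: a bare second-order test at a local maximum only yields $\le 0$, and even a strict, isolated local maximum of $g$ is a priori compatible with $\lambda_{s+1}(D^2\ell(\x)) = 0$ via higher-order concavity along the degenerate eigendirection. To rule this out, my plan is to combine the eigengap (A1) with a finer look at the $o(h^4)$ correction in Lemma \ref{lem:conditional-distribution}: should $\lambda_{s+1}(D^2\ell(\x))$ vanish, the separation between $\Vo(\Sigmahx)$ and $\Vp(\Sigmahx)$ would be driven by subleading terms, and tracking these through the subspace convergence required by Definition \ref{def:ridge_variance} should either force strict negativity or else pin down the precise extra regularity on $f$ under which the theorem is stated. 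That reduction from ``mode plus clean subspace limit'' to strict negativity is the technical heart of the proof.
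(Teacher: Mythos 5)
Your subspace identification and the first conclusion track the paper's own argument: the paper likewise passes from $\x \in R_s(f)$ to the statement that $t \mapsto f(\x + t\u)$ has a mode at $t = 0$ for every $\u$ in the column space of $\Vo(D^2\ell(\x))$ (using the convergence of $\Vo(\Sigmahx)$ from Theorem \ref{thm:hessian_variance}), and then reads off $\Vo(D^2\ell(\x))^\top D\ell(\x) = \bs{0}$ from vanishing first directional derivatives; your detour through the identity $D^2 f = f\,(D^2\ell + D\ell\, D\ell^\top)$ rather than differentiating $\ell$ directly is an immaterial variation.

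The genuine gap is the strict inequality $\lambda_{s+1}(D^2\ell(\x)) < 0$, which is half of the theorem's conclusion and which you leave as a plan rather than a proof. Moreover, the route you sketch would not work as described: if $\lambda_{s+1}(D^2\ell(\x)) = 0$, assumption (A1) still provides the eigengap $h^4\delta(\x)$ between the $s$-th and $(s+1)$-th eigenvalues of $h^2 \I_d + h^4 D^2\ell(\x)$, so the Davis--Kahan argument gives exactly the same convergence $\dist\bigl(\Vo(\Sigmahx), \Vo(D^2\ell(\x))\bigr) \to 0$; the subspace-limit hypothesis in Definition \ref{def:ridge_variance} is therefore insensitive to whether $\lambda_{s+1}$ is negative or zero, and a mode at $\bs{0}_{d-s}$ is compatible with $\lambda_{s+1}(D^2\ell(\x)) = 0$ through higher-order (e.g.\ quartic) decay of $\ell$ along the degenerate eigendirection. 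Hence no amount of tracking subleading terms in the expansion of $\Sigmahx$ can, by itself, force strictness. For comparison, the paper closes this step without any remainder analysis, simply asserting that the second directional derivatives at the mode are strictly negative, so that $0 > \v_j(D^2\ell(\x))^\top D^2\ell(\x)\, \v_j(D^2\ell(\x)) = \lambda_j(D^2\ell(\x))$ for $s+1 \le j \le d$; your instinct that this is the delicate point is sound, but your attempt supplies neither that assertion's justification nor an alternative, and so does not establish the stated conclusion.
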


\paragraph{Projected weighted distribution.}
Suppose $f \in \cC(\R^d)$ and the conditions (K0--4) hold. Let $\V = [\Vp,
\Vo] \in \R^{d \times d}$ be a orthogonal matrix with $\Vp \in \R^{d \times
s}$ and $\Vo \in \R^{d \times (d-s)}$. The
probability density function of $\X - \x$, given $\X + \Z_h
= \x$, is 
\begin{displaymath}
  \y \mapsto \frac{K_h(\y)f(\y + \x) }{\int K_h(\z)f(\z + \x) \, d\z} =
  \s_h(\x)^{-1} K_h(\y)f(\y + \x) .
\end{displaymath}
Hence, the probability density function of the rotated distribution
$\V^\top(\X - \x)$, given $\X + \Z_h = \x$, is
  \begin{displaymath}
    \y \mapsto \s_h(\x)^{-1} \bigl\lvert \det(\V)^{-1} \bigr\rvert
    K_h\bigl( \V^{-\top} \y \bigr) f\bigl(
    \x + \V^{-\top}\y \bigr)  =
    \s_h(\x)^{-1} K_h(\V \y) f(\x + \V \y) .
  \end{displaymath}
  Finally, the probability density function of $\V_{\!\!\perp}^\top (\X -
  \x)$, given $\X + \Z_h = \x$, is
  \begin{displaymath}
    \y'' \mapsto \s_h(\x)^{-1} \int_{\R^s} K_h(\y) f(\x + \V
    \y)  \, d\y',
  \end{displaymath}
where $\y = (\y', \y'')$. Note that $K_h(\V\y) = K_h(\y)$ by (K4). In the
case of $\Vo = \Vo\bigl( \Sigmahx \bigr)$ as in Definition \ref{def:ridge_variance}, we define
\begin{displaymath}
  g_h(\z'') \defeq \s_h(\x)^{-1} \int_{\R^s} K(\z) f\bigl(\x + h\V(
  \Sigmahx) \z\bigr)  \, d\z',
\end{displaymath} 
the weighted density of $f$ projected onto the $(d-s)$-dimensional space
spanned by the directions of lowest conditional variance.

\begin{thm}
  \label{thm:projected-density}
  Suppose $f = e^\ell$ with $\ell \in \cC^2(\R^d)$ and bounded second
  order partial derivatives. Suppose conditions (K0--5) hold and
  $\x \in R_s(f)$. Then there exists $h_o > 0$ such that the projected
  weighted density $g_h$ is log-concave for all $0 < h \le h_o$.
\end{thm}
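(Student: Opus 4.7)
The plan is to write $g_h(\z'')$ as a marginal of a jointly log-concave function on $\R^d$ and then invoke Prékopa's marginalisation theorem. For the fixed $\x \in R_s(f)$ the matrix $\V \defeq \V(\Sigmahx)$ is a constant orthogonal matrix (depending only on $h$), and (K4) gives $K_h(\V\z) = K_h(\z)$, so
\[
  g_h(\z'') = s_h(\x)^{-1} \int_{\R^s} \exp\bigl(F_h(\z', \z'')\bigr) \, d\z', \qquad F_h(\z) \defeq \log K(\z) + \ell(\x + h\V\z).
\]
If $F_h$ is concave on $\R^d$ then by Prékopa's theorem the $\z''$-marginal of $\exp(F_h)$ is log-concave, and dividing by the positive constant $s_h(\x)$ preserves this. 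So the whole argument reduces to establishing concavity of $F_h$ for all sufficiently small $h$.

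\textbf{Uniform strong concavity of $F_h$.} Differentiating twice in $\z$ gives
\[
  D^2 F_h(\z) = D^2 \log K(\z) + h^2 \V^\top D^2 \ell(\x + h\V\z) \V.
\]
By (K5) there exists $c > 0$ such that $D^2 \log K(\z) \preceq -c\, \I_d$ for every $\z \in \R^d$. By hypothesis $\ell$ has bounded second-order partial derivatives, so there is $B > 0$ with $-B\, \I_d \preceq D^2 \ell(\y) \preceq B\, \I_d$ for all $\y$, and orthogonality of $\V$ yields
\[
  D^2 F_h(\z) \preceq (h^2 B - c)\, \I_d \quad \text{for every } \z \in \R^d.
\]
Taking $h_o \defeq \sqrt{c/(2B)}$ (or anything smaller) makes the right-hand side strictly negative definite uniformly in $\z$ for every $h \in (0, h_o]$, so $F_h$ is strongly concave on all of $\R^d$.

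\textbf{Conclusion and main difficulty.} With $F_h$ concave, the integrand $\z \mapsto K(\z) f(\x + h\V\z) = \exp(F_h(\z))$ is log-concave on $\R^d$, Prékopa's theorem then yields log-concavity of its $\z''$-marginal, and division by $s_h(\x) > 0$ finishes the proof. The only real obstacle is producing a single $h_o$ independent of $\z$; this is what forces both the quantitative form of (K5) (the top eigenvalue of $D^2 \log K$ bounded away from $0$, not merely negative pointwise) and the \emph{global} boundedness of $D^2\ell$. If either condition were merely pointwise, the perturbation $h^2 \V^\top D^2 \ell \V$ could overwhelm $D^2 \log K$ somewhere on $\R^d$ for every $h > 0$, and Prékopa could not be applied globally.
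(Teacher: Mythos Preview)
Your proof is correct and follows essentially the same line as the paper: compute the Hessian of the log of the integrand, $D^2\log K(\z) + h^2 \V^\top D^2\ell(\x + h\V\z)\V$, and use (K5) together with the global bound on $D^2\ell$ to force it negative definite for all small $h$. Your version is in fact more complete, since you explicitly invoke Pr\'ekopa's theorem to pass from log-concavity of the joint integrand to log-concavity of the $\z''$-marginal, a step the paper leaves entirely implicit.
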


Theorem \ref{thm:projected-density} justifies using a log-concave density
estimator on the sample projected weighted distribution to estimate a point
on the density ridge.

\section{Algorithms}
\label{sec:algorithm}

We consider independent random vectors $\X_1, \X_2, \ldots, \X_n$ with
distribution given by the density function $f : \R^d \rightarrow [0,
\infty)$. Our goal is to estimate the density ridge of $f$.

\subsection{Mean Shift Algorithm}
\label{sec:mean-shift-algorithm-1}

\begin{algorithm}[t]
  \DontPrintSemicolon
  \KwData{$\X_1, \X_2, \ldots, \X_n$}
  \KwIn{Starting point $\x$, bandwidth $h > 0$, $\text{tol} > 0$}
  \KwResult{Mode close to $\x$}
  \Begin{
    $\bs{m} \leftarrow \bs{m}_{h,G}(\x)$\;
    \While{$\lVert \bs{m} \rVert > \text{tol}$}{
      $\x \leftarrow \x + \bs{m}$\;
      $\bs{m} \leftarrow \bs{m}_{h,G}(\x)$\;
    }
    \Return{$\x$}
  }  
  \caption{Mean Shift \label{alg:MeanShift}}  
\end{algorithm}

The mean shift algorithm is an iterative procedure to find the modes of a
distribution; see \citet{Cheng1995}, \citet{ComaniciuMeer2002}. First, the
density is estimated by kernel density estimation (KDE) with a
rotationally symmetric kernel $K$ as in (K4). An estimator
for $f$ and $Df$ is then given by
\begin{displaymath}
  \hat{f}_{h,K}(\x) = \frac{c_{k,d}}{nh^d} \sum_{i = 1}^n k \Bigl( \Bigl\lVert
  \frac{\X_i - \x}{h} \Bigr\rVert^2 \Bigr) \quad \text{and} \quad D
  \hat{f}_{h,K}(\x) = - \frac{2c_{k,d}}{nh^{d+2}} \sum_{i=1}^n (\X_i - \x)
  k'\Bigl( \Bigl\lVert \frac{\X_i - \x}{h}  \Bigr\rVert^2 \Bigr),
\end{displaymath}
respectively. By defining $g(\x) = - k'(\x)$ and $G(\x) = c_{g,d} g\bigl(
\lVert \x \rVert^2 \bigr)$, where $c_{g,d}$ is the corresponding
normalization constant, we can write
\begin{align*}
  D \hat{f}_{h,K}(\x)
  & = \frac{2c_{k,d}}{nh^{d+2}} \left( \sum_{i=1}^n g
    \Bigl( \Bigl\lVert \frac{\X_i - \x}{h}  \Bigr\rVert^2 \Bigr) \right)
  \left( \frac{\sum_{i=1}^n \X_i g\Bigl( \bigl\lVert \frac{\X_i - \x}{h}
      \bigr\rVert^2 \Bigr)}{\sum_{i=1}^n g\Bigl( \bigl\lVert \frac{\X_i - \x}{h}
      \bigr\rVert^2 \Bigr)} - \x \right)\\
  & = \hat{f}_{h,G}(\x) \frac{2c_{k,d}}{h^2c_{g,d}} \bs{m}_{h,G}(\x),
\end{align*}
with
\begin{displaymath}
  \hat{f}_{h,G}(\x) \defeq \frac{c_{g,d}}{nh^d} \sum_{i=1}^n g \Bigl(
  \Bigl\lVert \frac{\X_i - \x}{h} \Bigr \rVert^2 \Bigr) \quad \text{and}
  \quad \bs{m}_{h,G}(\x) \defeq \frac{\sum_{i=1}^n \X_i g\Bigl( \bigl\lVert \frac{\X_i - \x}{h}
      \bigr\rVert^2 \Bigr)}{\sum_{i=1}^n g\Bigl( \bigl\lVert \frac{\X_i - \x}{h}
      \bigr\rVert^2 \Bigr)} - \x,
\end{displaymath}
the KDE with kernel $G$ and the \emph{mean shift} $\bs{m}_{h,G}$,
respectively. A new candidate $\x_{\rm new}$ for the mode ideally satisfy
$D\hat{f}_{h,K}(\x_{\rm new}) = 0$, or equivalently $\bs{m}_{h,G}(\x_{\rm
  new}) = 0$, whenever $\hat{f}_{h,G}(\x_{\rm new}) > 0$. We mimic this by
setting $\x_{\rm new} = \x + \bs{m}_{h,G}(\x)$. This leads to Algorithm \ref{alg:MeanShift}.

If $K$ has a monotonically decreasing profile $k$, the sequence of $\y$ and
$\hat{f}_{h,K}(\y)$ converge and the latter is monotonically
increasing.
For the Gaussian-kernel we have
\begin{displaymath}
  k(y) = \exp\bigl( -y / 2 \bigr) \quad \text{and} \quad K(\y) =
  (2\pi)^{-d/2} \exp\bigl( - \lVert \y \rVert^2 / 2 \bigr),
\end{displaymath}
with profile
\begin{displaymath}
  g(y) = \frac{1}{2} \exp\bigl( - y / 2 \bigr) \quad \text{and} \quad
  G(\y) = K(\y).
\end{displaymath}
For given data the number of steps for convergence depends on the
chosen kernel. If $G$ is the uniform kernel the number of steps are finite,
otherwise the algorithm should be stopped if the length of the mean shift
vector is below a certain threshold; see \citet{ComaniciuMeer2002}.

\subsection{Subspace Constraint Mean Shift Algorithm}
\label{sec:subsp-constr-mean}

\begin{algorithm}[t]
  \DontPrintSemicolon
  \KwData{$\X_1, \X_2, \ldots, \X_n$}
  \KwIn{Starting point $\x$, bandwidth $h > 0$, $\text{tol} > 0$}
  \KwResult{Ridge point close to $\x$}
  \Begin{ $\bs{m} \leftarrow \bs{m}_{h,G}(\x)$\;
    $\bs{g} \leftarrow D\hat{f}_{h,K}(\x)$\;
    $\bs{H} \leftarrow D^2\log \hat{f}_{h,K}(\x)$\;
    \While{$\bs{\lvert \bs{g}^\top \bs{H} \bs{g} \rvert} > (1 - \mbox{tol})
      \lVert \bs{g} \rVert \cdot \lVert \bs{H} \bs{g} \rVert$}{
      $\x \leftarrow \x + \Vo(\bs{H}) \Vo(\bs{H})^\top
      \bs{m}$\; 
      $\bs{m} \leftarrow \bs{m}_{h,G}(\x)$\;
      $\bs{g} \leftarrow D\hat{f}_{h,K}(\x)$\;
      $\bs{H} \leftarrow D^2\log \hat{f}_{h,K}(\x)$\;
    }
    \Return{$\y$}
  }
  \caption{Subspace Constraint Mean Shift \label{alg:SCMS}}
\end{algorithm}

The subspace constraint mean shift algorithm (SCMS) is a modification of
the mean shift algorithm. It was first proposed by \citet{OzertemETAL2011}.
We move in direction of a projected mean shift vector to find a ridge
point, where we project onto the space spanned by the $d-s$ eigenvectors,
corresponding to the $d-s$ largest eigenvalues of the estimated negative
Hessian of the log-density, called the \emph{local
  covariance-inverse}. Replacing the Hessian of the density with the
Hessian of the log-density does not change the ridge; see
\citet{OzertemETAL2011}.

Using KDE leads to the matrix
\begin{displaymath}
  - \frac{D^2\hat{f}_{h,K}(\x)}{\hat{f}_{h,K}(\x)} + \frac{D
    \hat{f}_{h,K}(\x) D \hat{f}_{h,K}(\x)^\top}{\hat{f}_{h,K}(\x)^2} =
  - D^2\log \hat{f}_{h,K}(\x).
\end{displaymath}

We will use the positive Hessian of the log-density, hence we project on
the $d-s$ eigenvectors corresponding to the $d-s$ largest eigenvalues.
The procedure is explained in Algorithm \ref{alg:SCMS}.

\paragraph{Bias.}
The SCMS algorithm finds the ridge points of the underlying kernel density
estimator, this leads to a ridge estimation of $K_h * f$ instead of $f$. If
the kernel fulfills (K0--2), then this is asymptotically
\begin{displaymath}
  \hat{f}_n(\x) = f(\x) + \O(h^2) + \O_p(n^{-1/2}h^{-d/2}) \quad
  \text{uniformly in } \x \in \R^d,
\end{displaymath}
see also Section \ref{sec:DKE-asym}. In \citet{Genovese2014} it is shown, that
under some regularity conditions on $f$, we have
\begin{displaymath}
  \text{Haus}\bigl( R(f), R(K_h * f) \bigr) = \O(h^2),
\end{displaymath}
where
\begin{displaymath}
  \text{Haus}(A, B) \defeq \inf\{\delta : A \subset B \oplus \delta \text{
    and } B \subset A \oplus \delta\}
\end{displaymath}
is the Hausdorff distance with
\begin{displaymath}
S \oplus \delta \defeq \{\bs{s} + \z:
\bs{s} \in S, \z \in \R^d \ \text{with} \  \lVert \z \rVert \leq \delta\} \quad
\text{for} \quad S \subset \R^d.
\end{displaymath}
Hence, the bias of the density estimation also
effects the ridge estimation.

\paragraph{Uncertainty measure.}
The local uncertainty measure defined in \citet{WassermanETAL2015} is
given by
\begin{displaymath}
  \rho(\x)^2 \defeq
  \begin{cases}
    \Ex(d^2(\x,\hat{R})), & \text{if } \x \in R(K_h * f),\\
    0, & \text{otherwise,}
  \end{cases}
\end{displaymath}
where $\hat{R}$ is the estimated ridge and $d(\x, A) \defeq \min\{\lVert \x
- \y\rVert : \y \in A\}$ for any compact $A \subset \R^d$. It can be used
for showing the uncertainty of an estimated ridge point, unfortunately, it
only takes into account the variance part but not the bias part.

\citet{WassermanETAL2015} also developed an algorithm for estimating
$\rho(\x)$ based on bootstrap samples and show consistency thereof; see
\citet[Theorem 5]{WassermanETAL2015}. Moreover, they also show consistency
for a bootstrap confidence set for the smoothed ridge $R(K_h * f)$.

\subsection{Log-Concave Ridge Search Algorithm}
\label{sec:vari-constr-mode}

\begin{algorithm}[t]
  \DontPrintSemicolon
  \KwData{$\X_1, \X_2, \ldots, \X_n$}
  \KwIn{Starting point $\x$, bandwidth $h > 0$, $\text{tol} > 0$}
  \KwResult{Ridge point of $f$ close to $\x$}
  \Begin{$m \leftarrow 2 \ \text{tol}$\;
    \While{$\lvert m \rvert > \text{tol}$} {$\bs{H} \leftarrow \S_n(\x) /
      s_n(\x) - \s_n(\x) \s_n(\x)^\top / s_n(\x)^2$\;
      $\v \leftarrow \V_{\!\!\perp}(\H)$\;
      $\w_i \leftarrow s_n(\x)^{-1} K_h(\X_i - \x)$ for $1 \leq i \leq n$\;
      $z_i \leftarrow \v^\top(\X_i - \x)$ for $1 \leq i \leq n$\;
      $m \leftarrow \mode\bigl( \hat{\theta}(\z, \bs{w})\bigr)$\;
      $\x \leftarrow \x + m \v$\;
    }
    \Return{$\x$}
  }
  \caption{Log-Concave Ridge Search}
\end{algorithm}

The log-concave ridge search (LCRS) is based on Definition
\ref{def:ridge_variance}. For a starting point $\x$ we look for the
direction with smallest weighted variance, project the weighted data onto
the (affine) subspace spanned by those directions and iterate to the mode
of the log-concave density estimated from the weighted projected data. We
repeat this step until the step-size is below a chosen threshold. The
algorithm can be used to find $(d-1)$-dimensional ridges.

\paragraph{Finding Direction.} For a point $\x$ we chose the direction of
smallest conditional variance. Therefore, consider the empirical measure
\begin{displaymath}
  \hat{Q}_{\x,n,h} \defeq \sum_{i=1}^n w_i(\x) \delta_{\X_i - \x}, \quad
  \text{where} \quad  w_i(\x) = s_{n,h}(\x)^{-1} n^{-1} K_h(\X_i - \x)
\end{displaymath}
with $s_{n,h}(\x) \defeq n^{-1} \sum_{i=1}^d K_h(\X_i - \x)$. The empirical conditional variance is then
\begin{align*}
  \hat{\bs{\Sigma}}_{n,h}(\x)
  &  \defeq \Var(\hat{Q}_{\x, n, h})\\
  &= \sum_{i=1}^n w_i(\x) (\X_i - \x)(\X_i - \x)^\top
    - \bigl( \sum_{i=1}^n w_i(\x) (\X_i - \x)\bigr) \bigl( \sum_{i=1}^n
    w_i(\x) (\X_i - \x)\bigr)^\top\\
  & = h^2 \left(\frac{\bs{S}_{n,h}(\x)}{s_{n,h}(\x)} - \frac{\s_{n,h}(\x)
    \s_{n,h}(\x)^\top}{s_{n,h}(\x)^2}\right), 
\end{align*}
with
\begin{align*}
  \s_{n,h}(\x) & \defeq \frac{1}{n} \sum_{i=1}^n K_h(\X_i - \x)h^{-1} (\X_i -
                 \x),\\
  \S_{n,h}(\x) & \defeq \frac{1}{n} \sum_{i=1}^n K_h(\X_i - \x)h^{-2} (\X_i -
                 \x)(\X_i - \x)^\top.
\end{align*}
We have
\begin{displaymath}
  \hat{\bSigma}_{n,h}(\x) = h^2 \I_d + h^4 D^2 \ell(\x) + \o(h^4) + \O_p(n^{-1/2}h^{-d/2-2})
\end{displaymath}
and by Lemma \ref{lem:conditional-distribution} we get
\begin{displaymath}
  \hat{\bs{\Sigma}}_{n,h}(\x) - \Sigmahx = \o(h^4) + \O_p(n^{-1/2}h^{-d/2 -
  2}) \quad \text{as} \quad h \rightarrow 0.
\end{displaymath}
Hence we have a consistent estimator of the direction of smallest
conditional variance, whenever $nh^{d+4} \rightarrow \infty$. This
direction is then the eigenvector of $\hat{\bs{\Sigma}}_{n,h}(\x)$
corresponding to the smallest eigenvalue, this is
$\v \defeq \Vo\bigl( \hat{\bs{\Sigma}}_{n,h}(\x) \bigr)$. The following
theorem shows, that a small perturbation of $\x$ only leads to a small
perturbation of $\hat{\bs{\Sigma}}_{n,h}(\x)$, whence only to a small
perturbation of $\v$.

\begin{thm}[Lipschitz Continuity]
  \label{thm:Lipschitz}
  Let $f \in \cC^2(\R^d)$ and suppose (K0--2) and (K5) hold.  For a sample
  $\mathcal{X} = \{\X_1, \ldots, \X_n\}$ and fixed $h$, the local sample
  variance is Lipschitz continuous on the convex hull of the sample, i.e.
  \begin{displaymath}
    \lVert \hat{\bs{\Sigma}}_{n,h}(\x) - \hat{\bs{\Sigma}}_{n,h}(\y) \rVert_F \leq
    L \lVert  \x - \y \rVert \quad \text{for} \ \x, \y \in \conv(\X_1, \ldots,
    \X_n),
  \end{displaymath}
  whenever there exists $0 < \tau \leq s_n(\x)$ for all $\x \in \conv(\X_1,
  \ldots, \X_2)$, with some $L > 0$ depending on the sample, the kernel $K$ and
  bandwidth $h$.
\end{thm}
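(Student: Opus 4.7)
The strategy is to show that the three building blocks $s_{n,h}(\x)$, $\s_{n,h}(\x)$ and $\S_{n,h}(\x)$ are each bounded and Lipschitz on the compact set $C \defeq \conv(\X_1, \ldots, \X_n)$, and then to combine them using the algebraic fact that sums, products, and quotients of bounded Lipschitz functions whose denominators are bounded away from zero are again Lipschitz. The hypothesis $s_{n,h}(\x) \geq \tau > 0$ on $C$ is precisely what makes the denominators safe.

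First I would establish Lipschitz continuity of the individual summands. By (K5), $K \in \cC^2_b(\R^d)$, so $\nabla K$ is uniformly bounded. Consequently
\begin{displaymath}
  \nabla_{\!\x} K_h(\X_i - \x) = -h^{-d-1} \nabla K\bigl( (\X_i - \x)/h \bigr)
\end{displaymath}
has sup-norm at most $h^{-d-1} \lVert \nabla K \rVert_\infty$, and averaging over $i$ yields that $s_{n,h}$ is Lipschitz on all of $\R^d$, with constant depending only on $h$ and $\lVert \nabla K \rVert_\infty$. For $\s_{n,h}(\x)$ and $\S_{n,h}(\x)$, the summands are products of $K_h(\X_i - \x)$ (bounded by $h^{-d} \lVert K \rVert_\infty$) with the polynomial factors $h^{-1}(\X_i - \x)$ and $h^{-2}(\X_i - \x)(\X_i - \x)^\top$. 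On the compact set $C$ both polynomial factors are themselves bounded and Lipschitz (their magnitudes controlled by $\text{diam}(C)/h$ and $\text{diam}(C)^2/h^2$), so the products are bounded Lipschitz maps on $C$, and so are their averages.

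Next I would assemble the pieces. Since $s_{n,h}(\x) \geq \tau$ on $C$, the reciprocal $\x \mapsto 1/s_{n,h}(\x)$ is Lipschitz on $C$ with constant at most $\tau^{-2}$ times the Lipschitz constant of $s_{n,h}$, and $1/s_{n,h}^2$ is Lipschitz with constant at most $\tau^{-3}$ times twice that. Multiplying the bounded Lipschitz matrix field $\S_{n,h}$ by $1/s_{n,h}$ gives a Lipschitz matrix field, and multiplying the bounded Lipschitz vector field $\s_{n,h}$ by itself and then by $1/s_{n,h}^2$ gives another. Taking the difference and multiplying by the constant $h^2$ shows that
\begin{displaymath}
  \x \mapsto h^2\Bigl( \frac{\S_{n,h}(\x)}{s_{n,h}(\x)} - \frac{\s_{n,h}(\x) \s_{n,h}(\x)^\top}{s_{n,h}(\x)^2} \Bigr)
\end{displaymath}
is Lipschitz in Frobenius norm on $C$, with constant $L$ depending only on $h$, on $\text{diam}(C)$, on $\tau$, and on the sup-norms $\lVert K \rVert_\infty$ and $\lVert \nabla K \rVert_\infty$, as claimed.

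The work is essentially bookkeeping rather than conceptual: no use is made of the log-concavity portion of (K5) or of the eigenvalue bound on $D^2 \log K$, only the $\cC^1_b$ part. The only structural point one must not overlook is that the $\tau$ lower bound on $s_{n,h}$ is indispensable, because without it both $1/s_{n,h}$ and $1/s_{n,h}^2$ fail to be Lipschitz; the compactness of $C$ alone is not enough. I expect the main obstacle to be simply presenting the final constant $L$ in a clean form that exposes its dependence on $\tau$, $h$, the diameter of the sample, and the sup-norms of $K$ and $\nabla K$.
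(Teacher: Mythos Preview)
Your proposal is correct and follows essentially the same route as the paper: the paper also first proves that the local sample moments up to order~2 are bounded and Lipschitz on $\conv(\mathcal{X})$ (its Lemma for $s_n^{\balpha}$, using only $\sup_{|\bgamma|=1}\lVert K^{(\bgamma)}\rVert_\infty<\infty$), then shows the quotients $s_n^{\balpha}/s_n$ are Lipschitz via the lower bound $\tau$, and finally assembles $\hat{\bs{\Sigma}}_{n,h}$ entrywise. The only cosmetic difference is that the paper works component\-wise with multi\-index moments while you handle $\s_{n,h}$ and $\S_{n,h}$ as vector and matrix fields directly; your observation that only the $\cC^1_b$ part of (K5) is actually used is also accurate.
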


\paragraph{Finding mode of projection.} For a point $\x$ and the direction
$\v$ we define
\begin{displaymath}
  z_i = \v^\top(\X_i - \x) \quad \text{with weights} \quad w_i =
  s_n(\x)^{-1} K_h(\X_i -
  x) \quad  \text{for} \ 1 \leq i \leq n.
\end{displaymath}
These leads the empirical measure
\begin{equation}
  \label{eq:emprical-projected-measure}
  \hat{P}_{\x,n,h} \defeq \sum_{i=1}^n w_i \delta_{z_i}.
\end{equation}
We use the maximum likelihood estimation for log-concave distributions
proposed in \citet{DuembgenRufibach2009}. The algorithm is explained in
\citet{ DuembgenRufibach2011} and refined in \citet{DuembgenETAL2018}. The
estimated log-density $\hat{\theta}(\z, \w)$ with $\z = (z_1, \ldots, z_n)$
and $\w = (w_1, \ldots, w_n)$ is piecewise linear with change of slope at
data points, convex and unique. Hence, the algorithm will return a unique
mode $m$ almost surely. We update the considered point to $\x + m \v$.

\paragraph{Interpretation of direction.}
\label{sec:direction}

A $\log$-transformation of the density does not change the ridge
set. However, in case of a Gaussian distribution with covariance matrix
$\bs{\Sigma}$, the Hessian of the $\log$-density is independent of
location and equal to
\begin{displaymath}
  D^2\log f(\x) = - 2^{-1} \bs{\Sigma}^{-1}.
\end{displaymath}
The $d-s$ eigenvectors associated to the $d-s$ largest eigenvalues coinside
with the $d-s$ last linear principle components of the
distribution. Hence, using the $\log$-transformation gives a beneficial
interpretation of the considered subspace. 

The conditional covariance matrix $\Var(\X \, | \, \X + \Z_h = \x)$ leads
the same interpretation. For a Gaussian distribution $\X$, the conditional
distribution of $\X$, given $\X + \Z_h = \x$ with $\Z_h \sim \NN(\bs{0}, h^2
\I_d)$ for $h > 0$ is Gaussian as well and the considered subspace
coincides with the $d-s$ last linear principle components. Another
interpretation of the conditional covariance matrix is as the
$(d-s)$-dimensional subspace with conditional least variance. Hence, the ridge is
in direction of largest variance.

\paragraph{Bandwidth selection.}
One crucial part of the algorithms is selecting the bandwidth. Whereas in
the calculation of the mode via log-concave density estimation, the result
does not change drastically for different bandwidths, it can have an effect
on the conditional variance and the resulting direction of smallest
conditional variance. Therefore, we will focus on a suitable bandwidth for
the latter.

In \citet{Chen2015} they recommend choosing $h$ via
\begin{equation}
  \label{eq:Silverman}
  h = A_0 (d+2)^{-1/(d+4)} n^{-1/(d+4)} \sigma_{\text{min}},
\end{equation}
where $A_0$ is some constant, $d$ is the dimension and $\sigma_{\text{min}}$ is
the minimal value for the standard deviation along each coordinate. For
$A_0 = 1$, one obtains Silverman's rule; see \citet{Silverman1986}. 

Another choice is using a functional of the length of the euclidean minimal spanning tree
(EMST) as bandwidth. For a sample $\X_1, \ldots, \X_n$ consider the
fully connected, undirected graph $G = (V, E)$ with vertices
$V = \{\X_1, \ldots, \X_n\}$ and edges
$E = \{(\X_i, \X_j): 1 \leq i < j \leq n\}$. The EMST is the graph
$E_n \defeq E_n(\X_1, \ldots, \X_n) \subset G$ that connects all vertices
in $V$ such that the total edge length is minimized. Let $L_n$ be the
length of $E_n(\X_1, \ldots, \X_n)$, then we chose the bandwidth as 
\begin{equation}
  \label{eq:EMST}
  h_n = \Bigl( \frac{L_n}{n} \Bigr)^{1 / (d + 4)}.
\end{equation}
In \citet{SreevaniMurthy2016} they use
\begin{displaymath}
  T_n = \Bigl( \frac{L_n}{n} \Bigr)^{1/d}
\end{displaymath}
as a bandwidth for the kernel density estimator and show that $L_n
\rightarrow \infty$, $h_n \rightarrow 0$ and $n T_n^d \rp \infty$ as $n
\rightarrow \infty$ under some mild conditions on the kernel and the
density function, the most restrictive being compact support of the density
function. From those two results we get immediately, that $nh_n^{d+4} \rp \infty$
as $n \rightarrow \infty$, the desired rate for estimating $D^2\log f(\x)$
consistently in case of $f \in \cC^2(\R^d)$. Because it is
\begin{align*}
  h^{-4}\bs{\hat{\Sigma}}_{n,h}(\x) - D^2 \log f(\x)
  & = h^{-4}\bigl(\bs{\hat{\Sigma}}_{n,h}(\x) - \Sigmahx  \bigr) + h^{-4}
    \Sigmahx - D^2 \log f(\x)\\
  & = \O_p(n^{-1/2}h^{-d/2 - 2}) + \o(1)\\
  & = \o_p(1) \quad \text{if } nh^{d+4} \rightarrow \infty.
\end{align*}

\paragraph{Confidence region of the ridge.}
\label{sec:conf-regi-mode}
For each point $\x$ on the estimated ridge one can calculate a confidence
interval along the direction $\v \defeq \Vo\bigl(
\hat{\bs{\Sigma}}_{n,h}(\x) \bigr)$ by using the likelihood ratio test
suggested in \citet{DossWellner2019}.

Let $\mathcal{P}$
be the family of all log-concave densities on $\R$. Suppose $g = e^\varphi \in
\mathcal{P}$, where $\varphi$ has second derivative $\varphi''$ at the mode
$m(g)$  and satisfies $\varphi''(m) < 0$. Consider the following testing
problem: $H_o: m(g) = m$ versus $H_1: m(g) \neq m$, where $m \in \R$ is
fixed. One can then calculate the unconstrained maximum likelihood
estimator (MLE) $\hat{g}_n$ and the
mode-constrained MLE $\hat{g}^o_n$ with $m(\hat{g}^o_n) = m$. Those two
functions lead to the log-likelihood statistic, given by
\begin{displaymath}
  2 \log \lambda_n = 2 \log \lambda_n(m) = 2n \mathbb{P}_n(\log \hat{g}_n - \log \hat{g}^o_n) =
  2n \mathbb{P}_n(\hat{\varphi}_n - \hat{\varphi}^o_n),
\end{displaymath}
where $\hat{\varphi} = \log \hat{g}_n$, $\hat{\varphi}^o = \log \hat{g}^o_n$,
$\mathbb{P}_n = \sum_{i = 1}^n w_i \delta_{X_i}$, and $\mathbb{P}_n(q) =
\int q \, d\mathbb{P}_n$.

\begin{thm}[Theorem 1.1 in \citet{DossWellner2019}]
  If $X_1, X_2, \ldots, X_n$ are i.i.d. $g = e^\varphi$ with mode $m$, where
  $\varphi$ is concave, twice continuously differentiable at $m$, and
  $\varphi''(m) < 0$, then
  \begin{displaymath}
    2 \log \lambda_n \rd \mathbb{D},
  \end{displaymath}
where $\mathbb{D}$ is a universal limiting distribution.
\end{thm}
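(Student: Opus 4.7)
The plan is to reduce the global likelihood-ratio statistic to a functional of a universal local limit process around the mode, exploiting the fact that only a shrinking neighborhood of $m$ contributes asymptotically. First I would renormalize around $m$ at the natural log-concave rate $n^{-1/5}$, setting $\varphi_n(t) \defeq n^{2/5}\bigl(\hat{\varphi}_n(m + n^{-1/5}t) - \varphi(m + n^{-1/5}t)\bigr)$ and analogously $\varphi_n^o$ for the mode-constrained estimator $\hat{\varphi}_n^o$. Under $H_0$, standard local asymptotics for the log-concave MLE (at an interior point of strict concavity) imply that $\varphi_n$ converges in distribution on compact sets to a universal limit process $\mathbb{H}$ built from a two-sided Brownian motion plus a quadratic drift whose curvature depends only on $\varphi''(m)$ and $g(m)$; the analogous limit $\mathbb{H}^o$ for the constrained estimator is driven by the same Brownian input but augmented with the side condition that encodes the mode constraint $m(\hat{g}_n^o)=m$.

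Given these local limit theorems, the main computation is to expand $2\log\lambda_n = 2n\mathbb{P}_n(\hat{\varphi}_n - \hat{\varphi}_n^o)$ as an integral over the localized window of radius $Cn^{-1/5}$; outside this window the two estimators are shown to be asymptotically equivalent up to a cancelling error of smaller order, so the contribution is negligible. After changing variables $u = n^{1/5}(x - m)$, Taylor-expanding $\varphi$ at $m$, and absorbing $\lvert\varphi''(m)\rvert$ and $g(m)$ into the rescaling, the statistic becomes a fixed continuous functional $\Psi(\mathbb{H}, \mathbb{H}^o)$ of the two limit processes. Universality of $\mathbb{D} \defeq \Psi(\mathbb{H}, \mathbb{H}^o)$ then follows because any admissible $\varphi$ produces the same limiting pair up to an affine reparameterization under which $\Psi$ is invariant.

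The hard part is the rigorous treatment of the constrained MLE $\hat{g}_n^o$: one must establish existence, uniqueness and consistency of a log-concave density estimator with a fixed mode, characterize it by subdifferential (KKT) conditions analogous to those of \citet{DuembgenRufibach2009} but with a Lagrange term enforcing $m(\hat{g}_n^o) = m$, and then prove a \emph{joint} local limit theorem for $(\hat{\varphi}_n, \hat{\varphi}_n^o)$ driven by the same underlying empirical process, so that $\mathbb{H}$ and $\mathbb{H}^o$ are coupled through the same Brownian input. An equally delicate piece is the uniform negligibility argument outside the localization window, which relies on bracketing-entropy bounds for log-concave densities to control the tails of $\mathbb{P}_n(\hat{\varphi}_n - \hat{\varphi}_n^o)$ and to justify truncating the integral. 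Once both ingredients are in place, $2\log\lambda_n \rd \mathbb{D}$ follows by the continuous mapping theorem applied to $\Psi$.
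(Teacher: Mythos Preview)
The paper does not prove this statement at all: it is quoted verbatim as Theorem~1.1 of \citet{DossWellner2019} and used as an off-the-shelf tool to build confidence intervals for the mode, with no argument supplied. There is therefore nothing in the paper to compare your proposal against.

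That said, your sketch is a faithful outline of the actual Doss--Wellner argument: localization at rate $n^{-1/5}$, a joint local limit theorem for the unconstrained and mode-constrained log-concave MLEs driven by the same two-sided Brownian motion plus parabolic drift, negligibility of the contribution outside the shrinking window, and finally reading off the likelihood ratio as a scale-invariant functional of the limiting pair. You have also correctly identified the genuinely hard parts---the KKT characterization and local asymptotics of the constrained estimator, and the coupling of the two limits through the same empirical input---which occupy the bulk of the original paper. So the proposal is on the right track as a summary of the external reference, but it is not something the present paper attempts or needs to reproduce.
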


Let $c_\alpha$ be such that $\Pr(\mathbb{D} > \alpha) = \alpha$, where 
the distribution of $\mathbb{D}$ can be approximated by Monte Carlo
methods. We can reject $H_0$ at level $\alpha$, whenever $2 \log \lambda_n
> c_\alpha$. A asymptotic $\alpha$-confidence interval for the mode is then
given by 
\begin{displaymath}
  I_{\alpha}(\mathbb{P}_n) = \{m \in \R: 2 \log \lambda_n(m) \le c_\alpha \}.
\end{displaymath}
We apply this procedure for the measure $\hat{P}_{\x,n,h}$ for $\x$ on the
estimated ridge and get the one-dimensional confidence region
\begin{displaymath}
  \{\x + t\v : t \in I_{\alpha}(\hat{P}_{\x,n,h})\}.
\end{displaymath}
Of course, this doesn't lead by any means to a confidence region. However,
it can be seen as a measure of uncertainty.

\paragraph{Threshold intervals for the ridge.}
\label{sec:interval-mode}
An issue of the LCRS is whenever the projected weighted density function is
rather flat close to the mode, a small change of the initial point may lead
to a different estimated mode in the next step of the algorithm, and whence
to ridge points far away from each other. In Figure \ref{fig:LCRS-jump} we
see two different, but very close, starting points leading to
different modes in the first step. After 1 and 8 steps, respectively, the
algorithm stops for both starting points at different modes.  However,
looking at the threshold interval, they are very close to each other; see
Figure \ref{fig:LCRS-jump-end} and \ref{fig:jump-interval}.

Instead of only reporting the estimated
ridge point, we use the estimated log-concave
density $\hat{\theta}$ to find the interval
\begin{displaymath}
  I(\hat{\theta}) \defeq \{x \in \R: \hat{\theta}(x) \geq m +
  \log(\tau)\} = \{x
  \in \R : e^{\hat{\theta}(x)} \geq \tau e^{\hat{\theta}(m)}\}.
\end{displaymath}
Thus, we get a uncertainty measure for the ridge. The threshold interval is
computationally much less expensive then the confidence intervals.

\begin{figure}[h]
  \begin{subfigure}[b]{0.5\textwidth}
    \includegraphics[width=\textwidth]{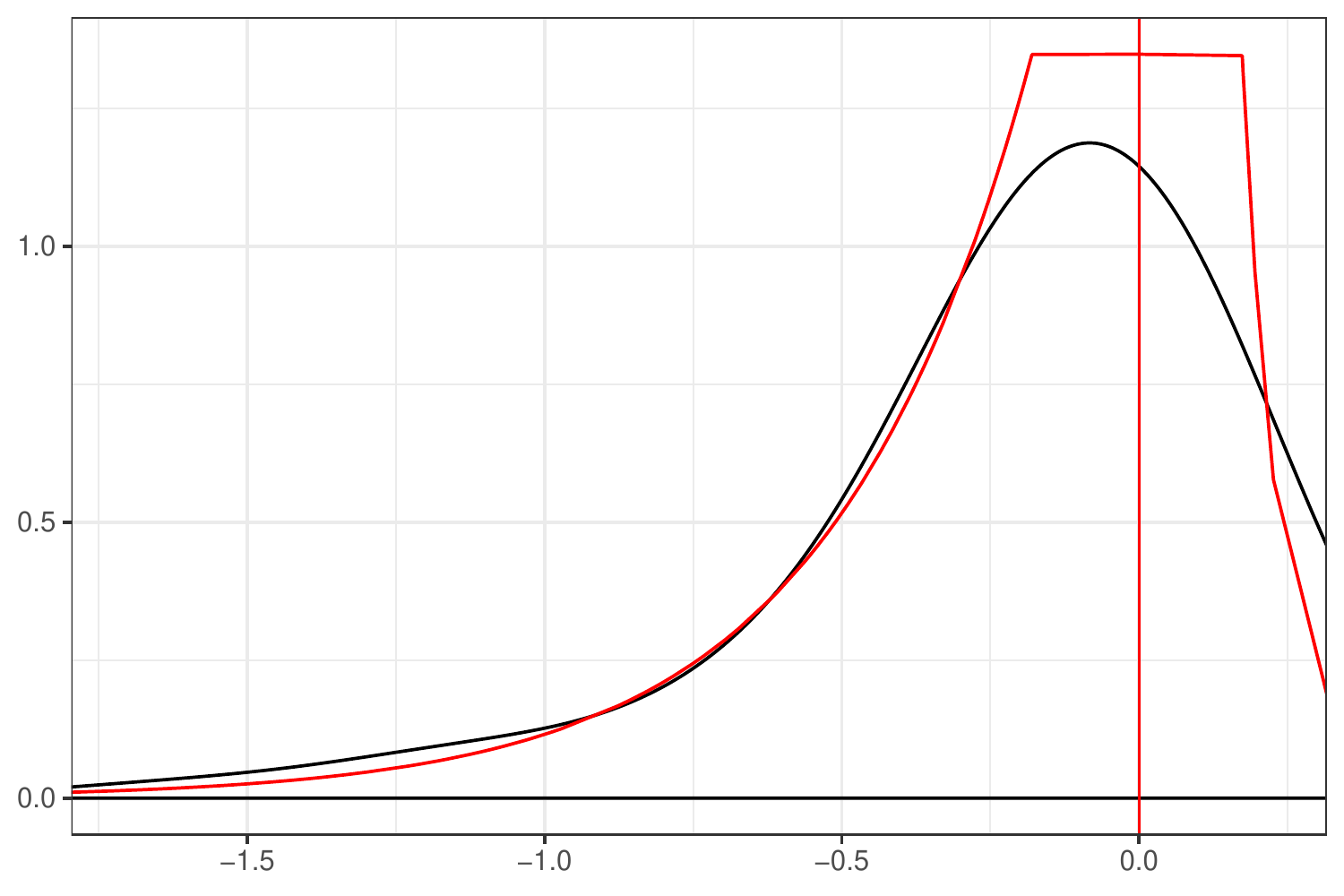}
    \caption{$\x = (-0.886, 0.2586)$, $\v = (-0.9475, 0.3199)$}
    \label{fig:rough-1}
  \end{subfigure}
  \hfill
  \begin{subfigure}[b]{0.5\textwidth}
    \includegraphics[width=\textwidth]{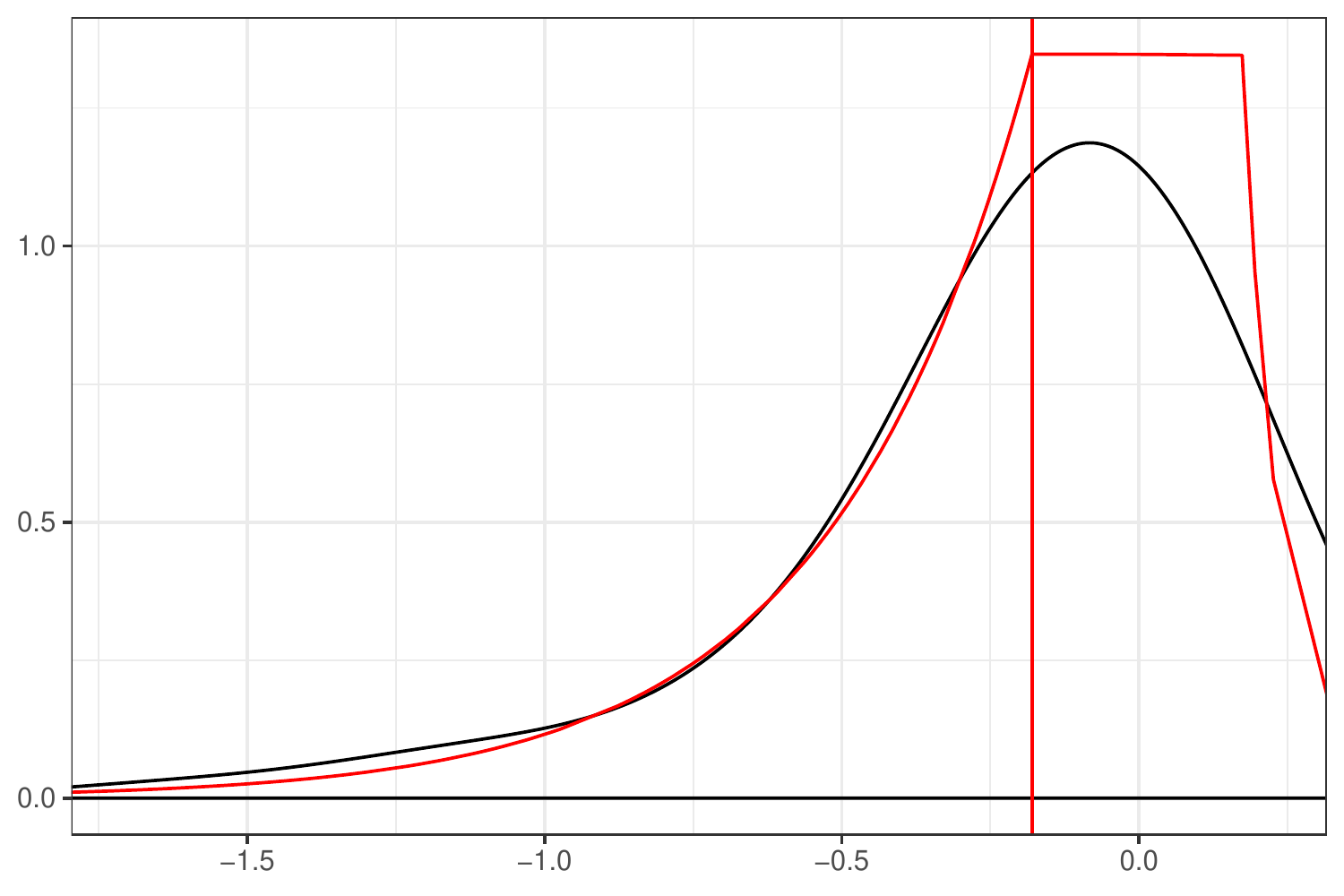}
    \caption{$\x = (-0.8859, 0.2587)$, $\v = (-0.9474, 0.32)$}
    \label{fig:rough-2}
  \end{subfigure}
  \caption{Shows the estimated projected weighted log-concave density estimator
    (red) and the kernel density estimator (black), for two slightly
    different starting points. The mode is indicated by the vertical line.}
  \label{fig:LCRS-jump}
\end{figure}

\begin{figure}[h]
  \begin{subfigure}[b]{0.5\textwidth}
    \includegraphics[width=\textwidth]{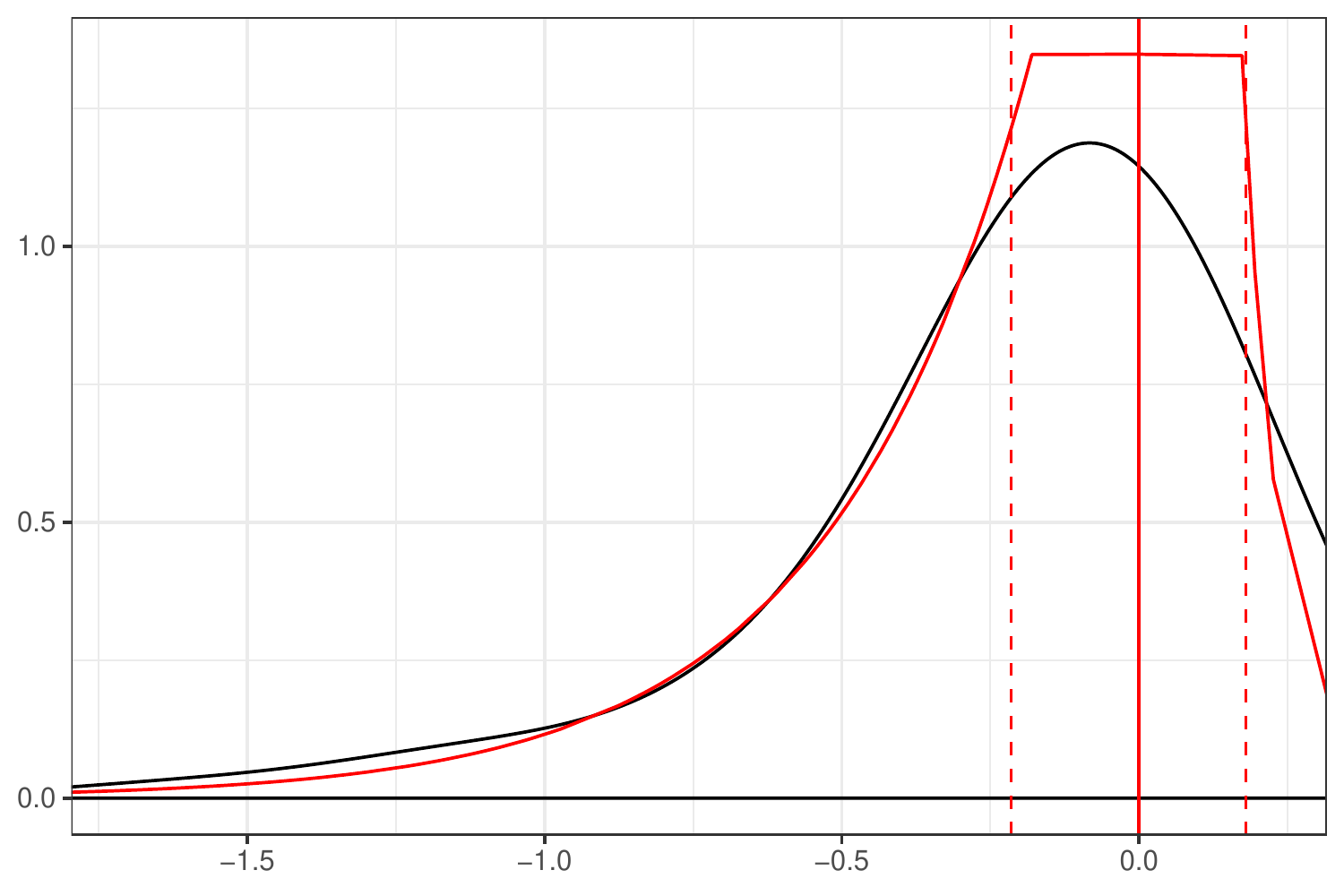}
    \caption{$\x = (-0.886, 0.2586)$, $\v = (-0.9475, 0.3199)$}
    \label{fig:rough-1-end}
  \end{subfigure}
  \hfill
  \begin{subfigure}[b]{0.5\textwidth}
    \includegraphics[width=\textwidth]{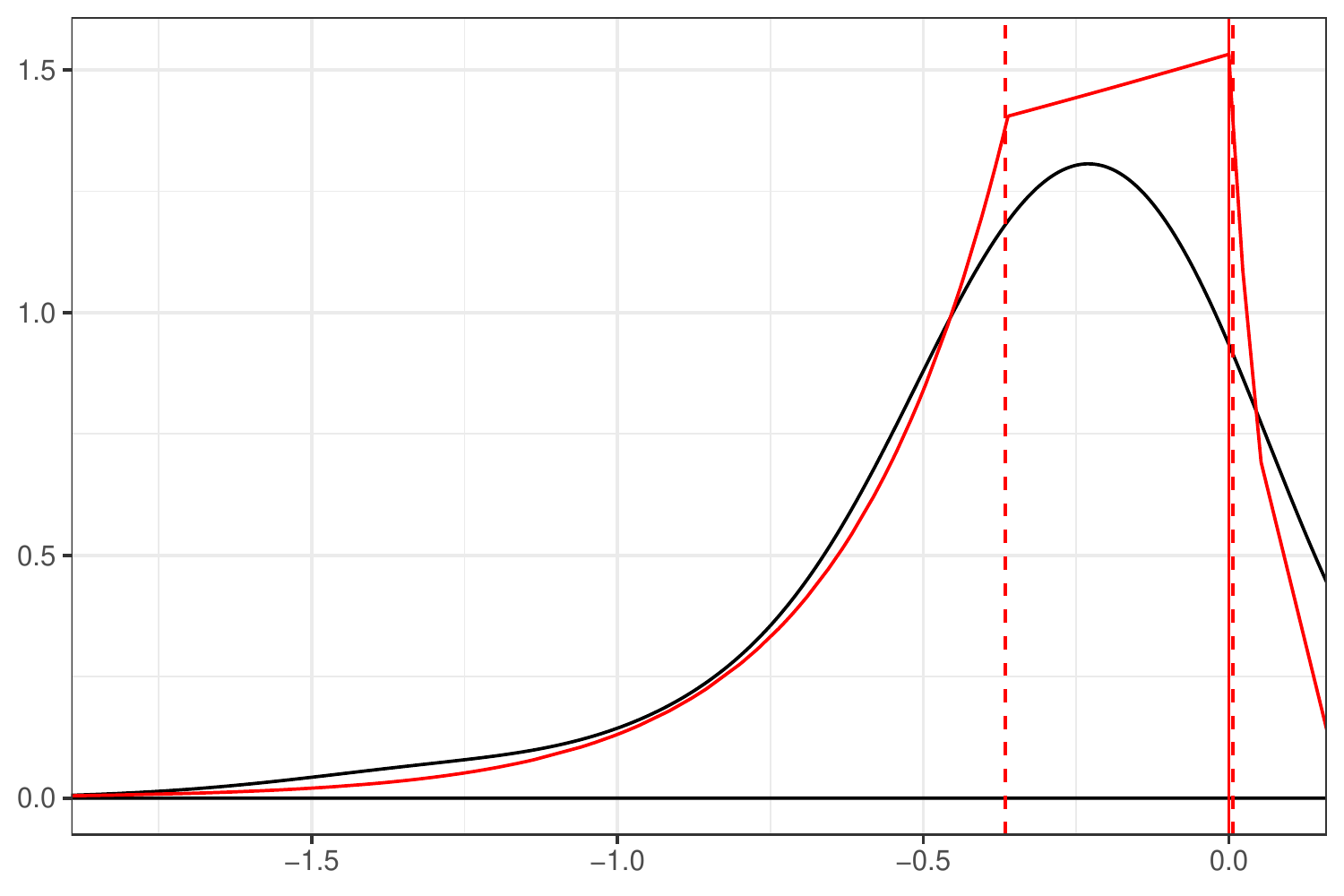}
    \caption{$\x = (-1.0467, 0.3241)$, $\v = (-0.9505, 0.3106)$}
    \label{fig:rough-2-end}
  \end{subfigure}
  \caption{Shows the estimated weighted projected log-concave density estimator
    (red) and the kernel density estimator (black) after the last step of
    the algorithm. The red dashed line corresponds to the threshold for
    $\alpha = 0.9$.}
  \label{fig:LCRS-jump-end}
\end{figure}

\begin{figure}
  \centering
  \includegraphics[width=.6\textwidth]{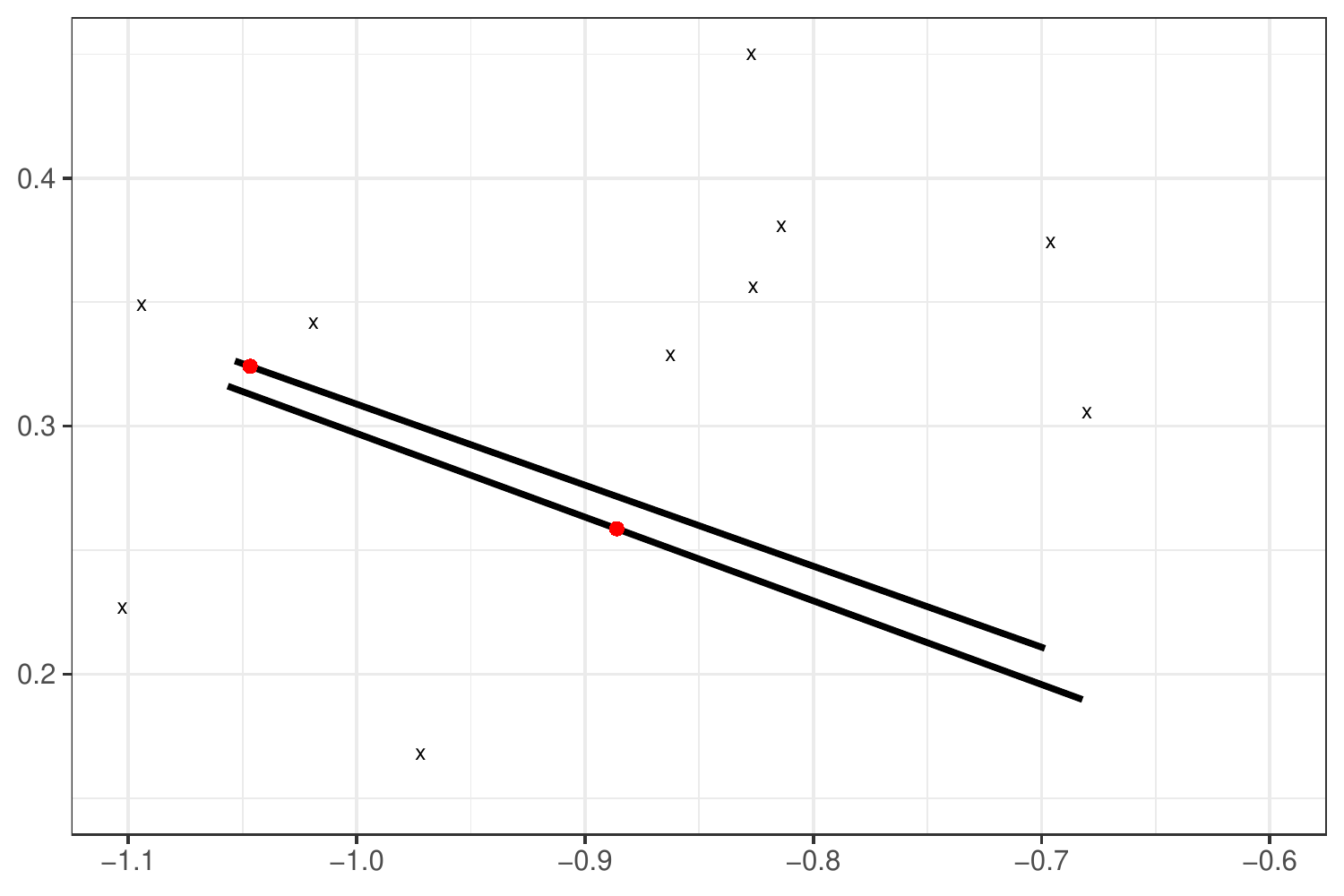}
  \caption{Shows the estimated ridge points (red) and the threshold
    interval (black) for two slightly different starting points.}
  \label{fig:jump-interval}
\end{figure}

\paragraph{Smoothed ridge.}
Another possibility, to avoid the discontinuous behavior of the estimated
ridge line, is by replacing the estimated log-density $\hat{\theta}$ by a
smoothed version. \citet{DuembgenRufibach2011} propose
\begin{displaymath}
  \hat{g}^*(y) = \int_{-\infty}^\infty \hat{g}(t) \phi_\gamma(y - t) \, dt,
\end{displaymath}
where $\hat{g}(t) = e^{\hat{\theta}(t)}$ and
$\phi_\gamma(t) \defeq (2\pi \gamma^2)^{-1/2} \exp(y^2 /
(2\gamma^2))$. This is the convolution of the estimated log-concave density
with a Gaussian distribution, where $\gamma$  is chosen, such that the
variance of the new estimator coincides with the variance of the empirical
distribution. The mode of $\hat{g}^*$ can be found by Newton's
method. Replacing the mode of $\hat{\theta}$ by the mode of $\hat{g}^*$ in
Algorithm 3 leads to typically smoother ridge lines. We will refer to it as
smoothed LCRS (sLCRS).

\section{Data Examples}
\label{sec:simulations}

The algorithms have been implemented in the statistical language R
\citet{R2019} and calculations were performed on UBELIX (http://www.id.unibe.ch/hpc), the HPC cluster at the University of Bern.

\subsection{Circle Data}
\label{sec:circle-data}

Suppose we have a sample $\X_1, \X_2, \ldots, \X_n \in \R^2$ with
distribution $\mathcal{P} \defeq \mathcal{L}(\X)$, where
\begin{displaymath}
  \X \defeq r
  \begin{pmatrix}
    \cos(2\pi U)\\
    \sin(2\pi U)\\
  \end{pmatrix}
  + \sigma \Z,
\end{displaymath}
with $r, \sigma \in \R_{>0}$ and independent random variables $U \sim
\mathcal{U}([0,1])$ and $\Z \sim \NN(0, \I_2)$. The density function $f$ of
$\mathcal{P}$ is
\begin{displaymath}
  f(\x) = (2\pi \sigma^2)^{-1} I_o\bigl( r / \sigma^2 \lVert \x \rVert
    \bigr) \exp\Bigl( - \frac{r^2 + \lVert \x \rVert^2}{2 \sigma^2}
    \Bigr),
\end{displaymath}
where
\begin{displaymath}
  I_0(t) \defeq \sum_{m=0}^\infty \frac{(t^2 / 4)^m}{(m!)^2}
\end{displaymath}
is the modified Bessel function of the first kind with parameter $0$. One
can show that the ridge of $f$ is the origin if $r / \sigma \le \sqrt{2}$
and a circle with radius in $(0,r)$ and centre at the origin if
$r / \sigma > \sqrt{2}$. The exact value can be numerically calculated
with bisection; see Section \ref{sec:ridge-circle-data} for details. In our
simulation study we use $n = 200$ data points with $r = 1$ and $\sigma =
0.1$, the ridge is then the circle with center $\bs{0}$ and radius
$0.995$. We estimate the ridge with the SCMS, LCRS and sLCRS
algorithm based on the Gaussian kernel $K_h$ for different bandwidths; see
Figure \ref{fig:circle-compare}.

\paragraph{Results.} 

The ridge estimated by SCMS is biased towards the center and the larger the
bandwidth $h$, the larger the biases. Indeed, let $\Z_h$ have density
function $K_h$, then we estimate the Ridge of the random variable
$\X + \Z_h$ instead of $\X$, which has the same distribution as circle data
with standard deviation $\sqrt{\sigma^2 + h^2}$ instead of $\sigma$. 

The ridge estimated by LCRS is not sensitive to the choice of the
bandwidth. However, the estimated ridge is discontinuous and we should use
confidence or threshold intervals to show this uncertainty; see Figure
\ref{fig:LCRS-interval}. There we see, that e.g.~on the top left, the
estimator is very uncertain. A close look at the data reveals indeed, that
the data point in this area are spread away from the true ridge, whence the
ridge could lay in a wide region. On the top right, the data points are
nicely spread around the true ridge and the LCRS catches that well.

For the sLCRS we still get a discontinuous ridge for $h = 0.2$. However,
for larger bandwidths it is smooth and the bias caused by the smoothing is
less serious then for the SCMS.

\begin{figure}
  \begin{subfigure}[b]{\textwidth}
    \caption{SCMS}
    \includegraphics[width=\textwidth]{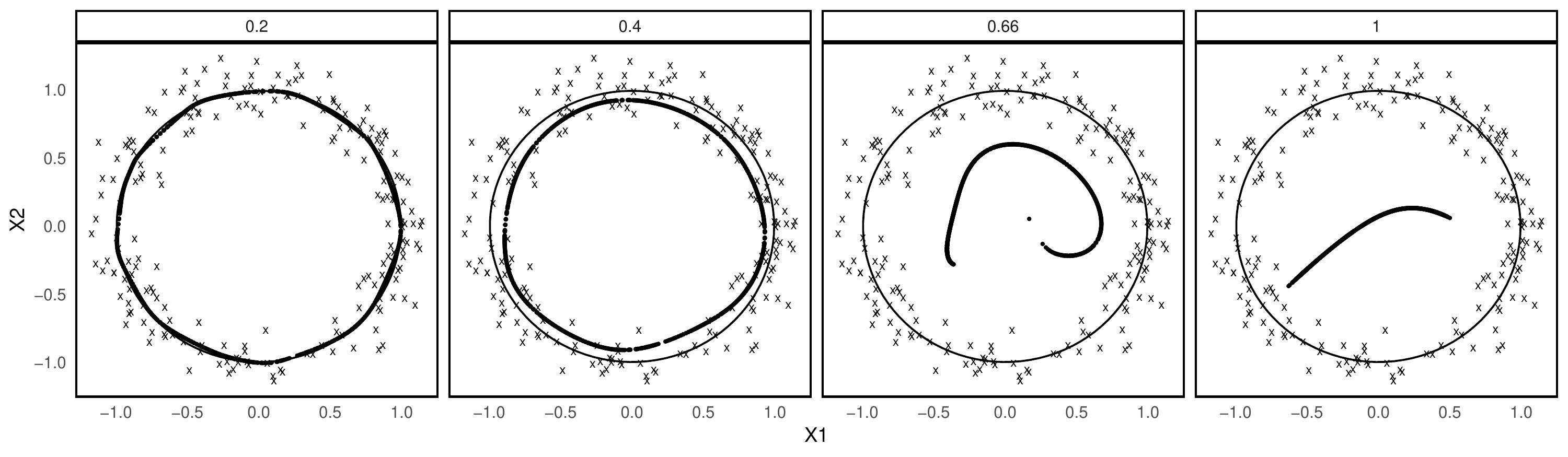}
    \label{fig:SCMS}
  \end{subfigure}
  \hfill
  \begin{subfigure}[b]{1\textwidth}
    \caption{LCRS}
    \includegraphics[width=\textwidth]{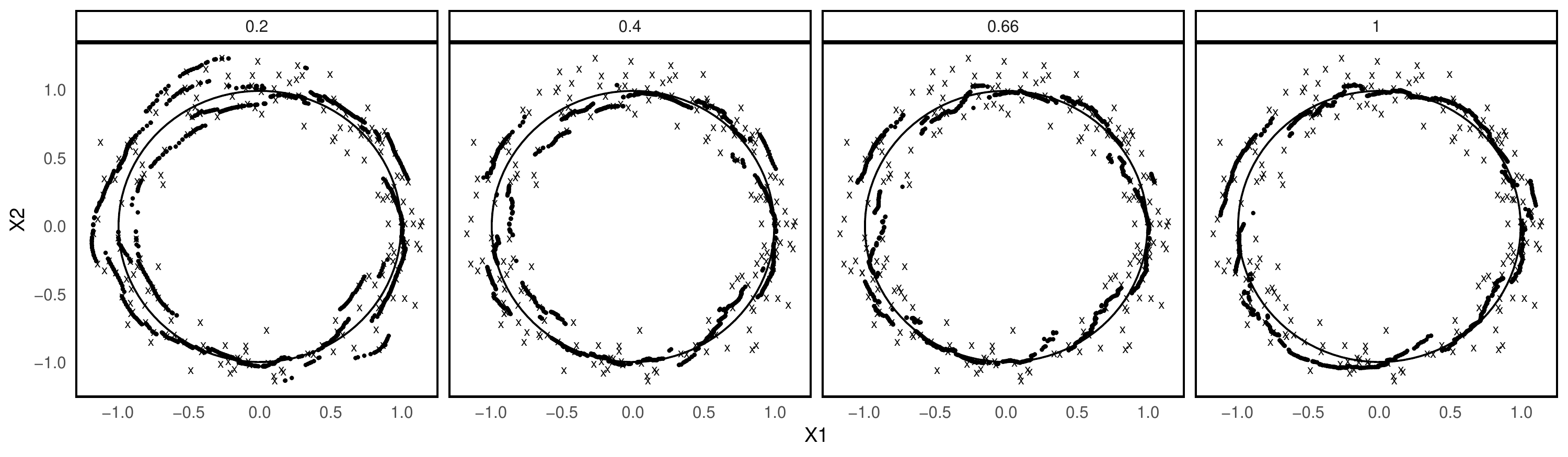}
    \label{fig:LCRS}
  \end{subfigure}
  \hfill
    \begin{subfigure}[b]{1\textwidth}
    \caption{sLCRS}
    \includegraphics[width=\textwidth]{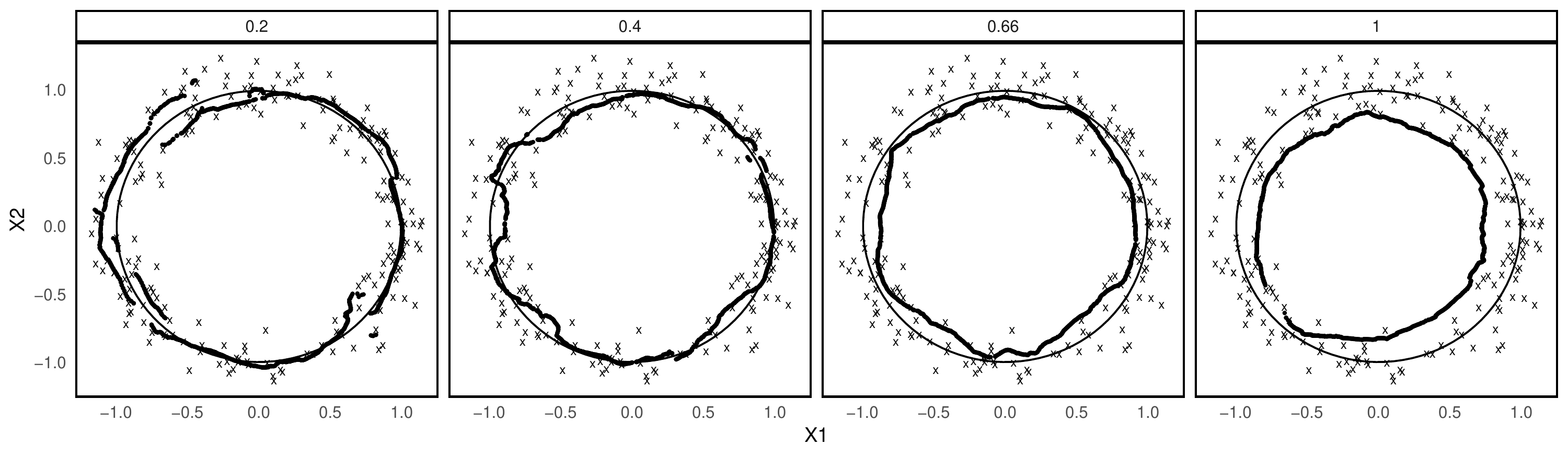}
    \label{fig:sLCRS}
  \end{subfigure}
  \caption{Shows the estimated ridge points for different bandwidths for
    both algorithms and the true ridge as a solid circle.}
  \label{fig:circle-compare}
\end{figure}

\begin{figure}
  \begin{subfigure}[b]{0.5\textwidth}
    \centering
    \caption{Threshold interval}
    \includegraphics[width=\textwidth]{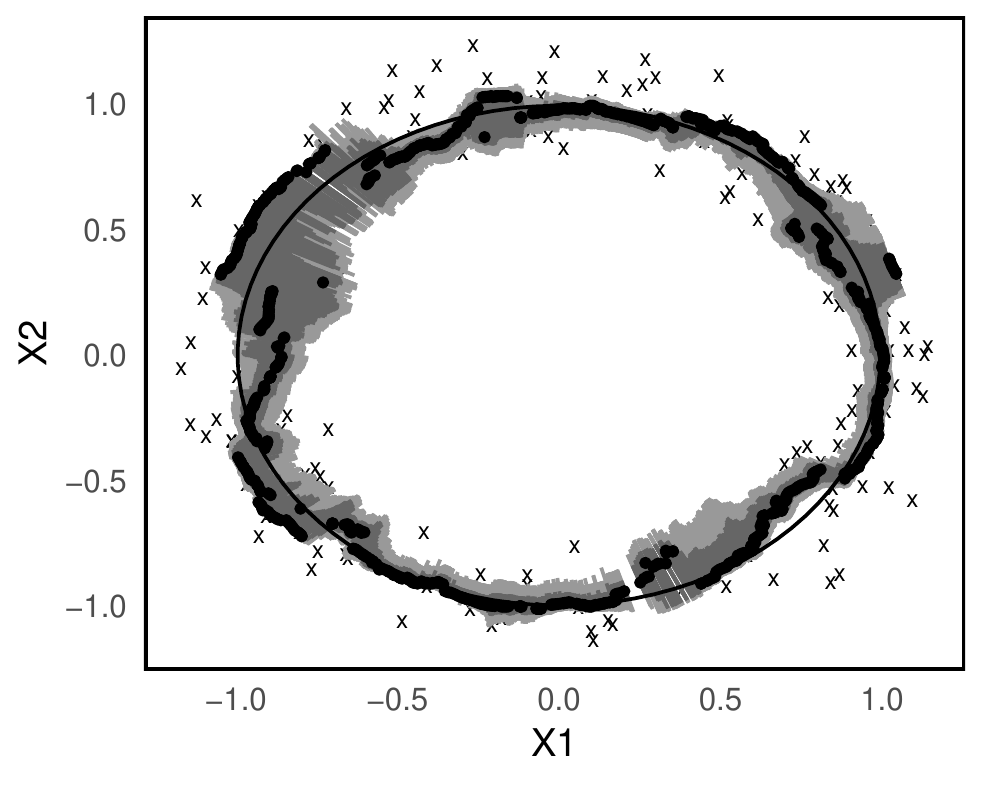}
    \label{fig:circle-interval}
  \end{subfigure}
  \begin{subfigure}[b]{0.5\textwidth}
    \caption{Confidence interval}
    \includegraphics[width=\textwidth]{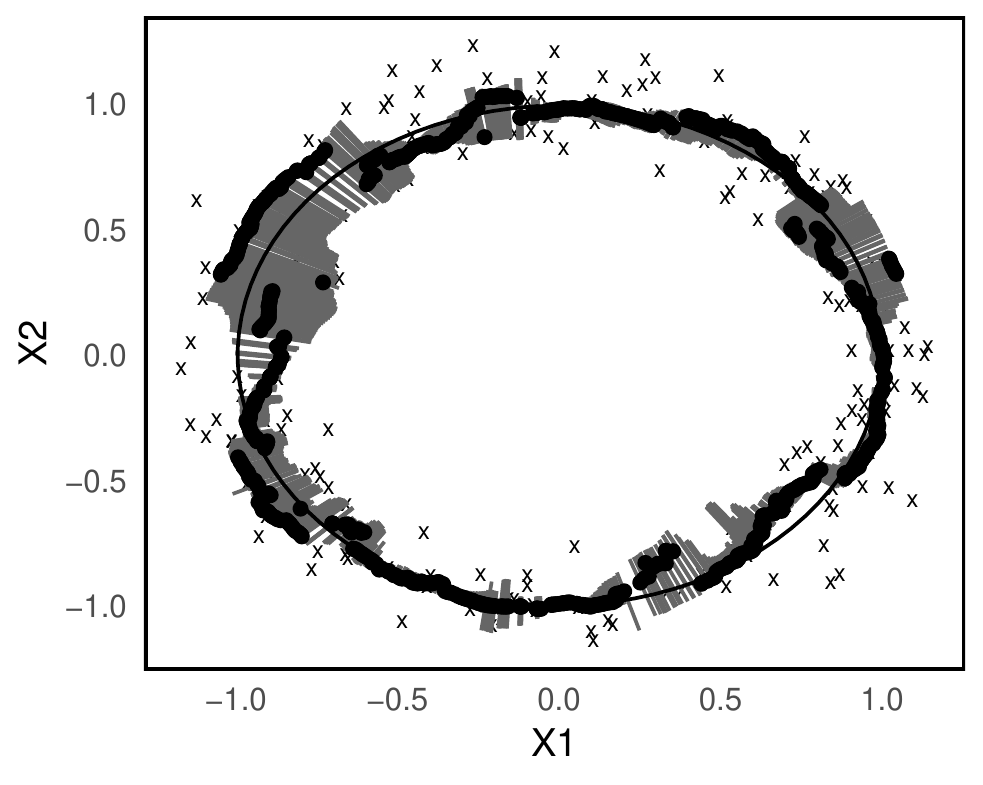}
    \label{fig:circel-conf-int}
  \end{subfigure}
  \caption{Shows the estimated ridge points and true ridge (black). Left:
    Threshold interval for $\alpha \in \{0.8, 0.9\}$. Right: 90\%-Confidence
    interval for the ridge point in direction of smallest variance.}
  \label{fig:LCRS-interval}
\end{figure}

\subsection{Galaxy filaments}
\label{sec:galaxy}

\begin{figure}
  \begin{subfigure}[b]{0.5\textwidth}
    \centering
    \caption{Low redshift}
    \includegraphics[width=\textwidth]{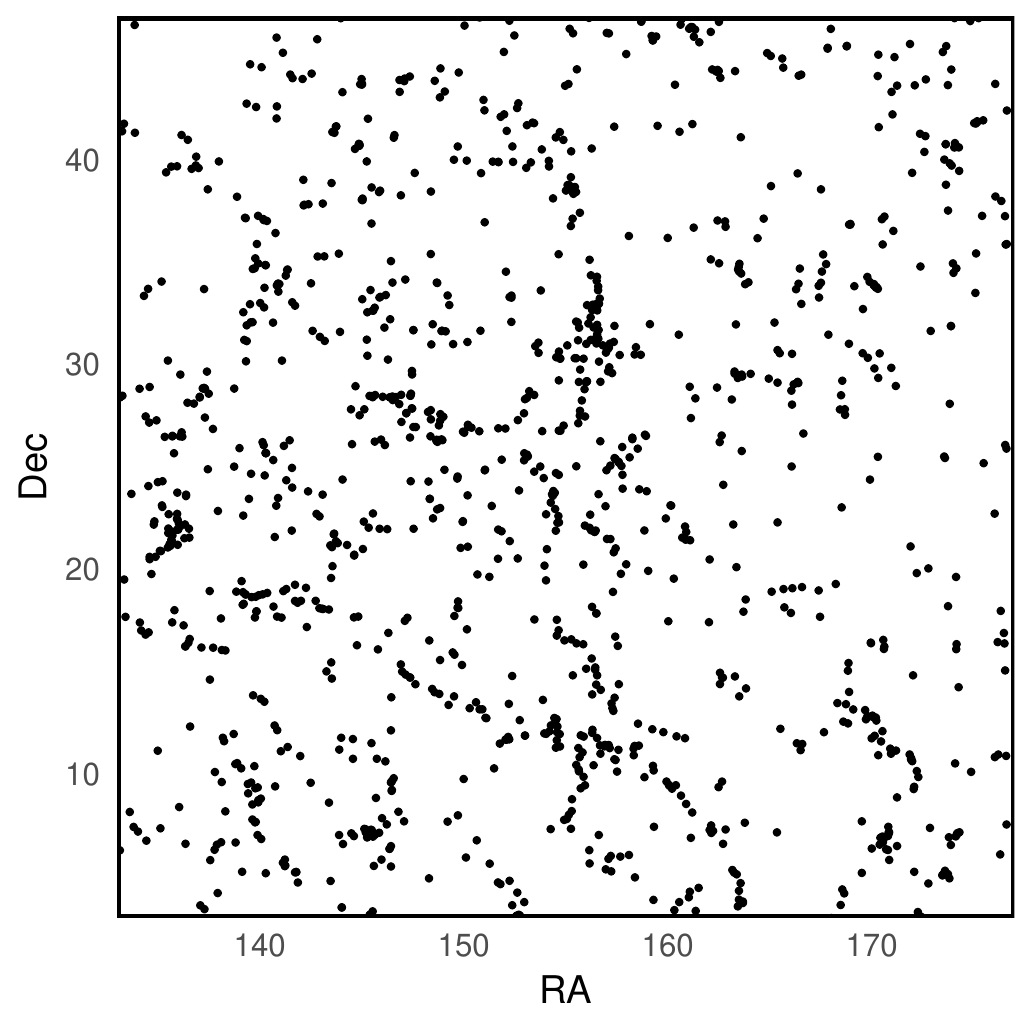}
    \label{fig:data-low}
  \end{subfigure}
  \begin{subfigure}[b]{0.5\textwidth}
    \caption{High redshift}
    \includegraphics[width=\textwidth]{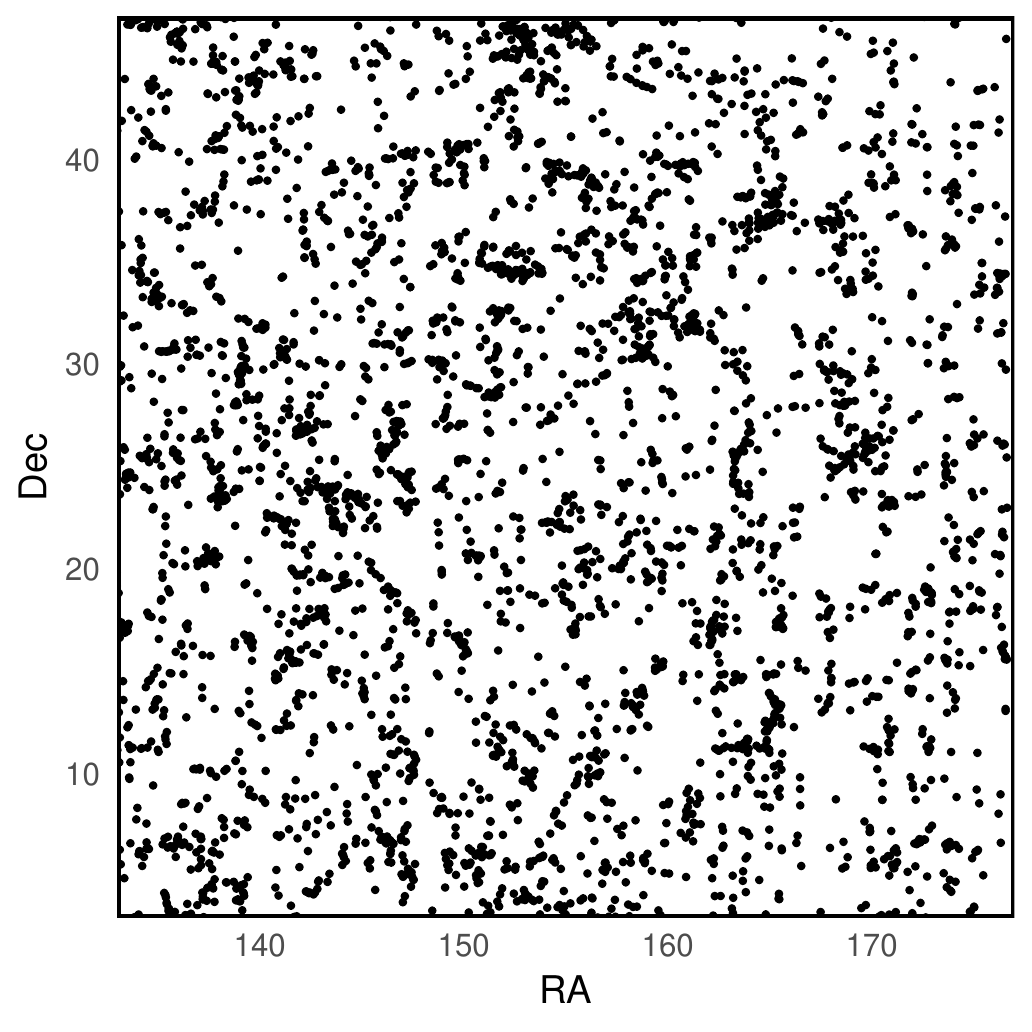}
    \label{fig:data-high}
  \end{subfigure}
  \caption{Galaxy structure from the Baryon Oscilation Spectorspic Survey
    for a slice of sky at two different redshifts.}
  \label{fig:data}
\end{figure}

\begin{figure}[t]
  \begin{subfigure}[b]{0.33\textwidth}
    \centering
    \caption{SCMS for low redshift}
    \includegraphics[width=\textwidth]{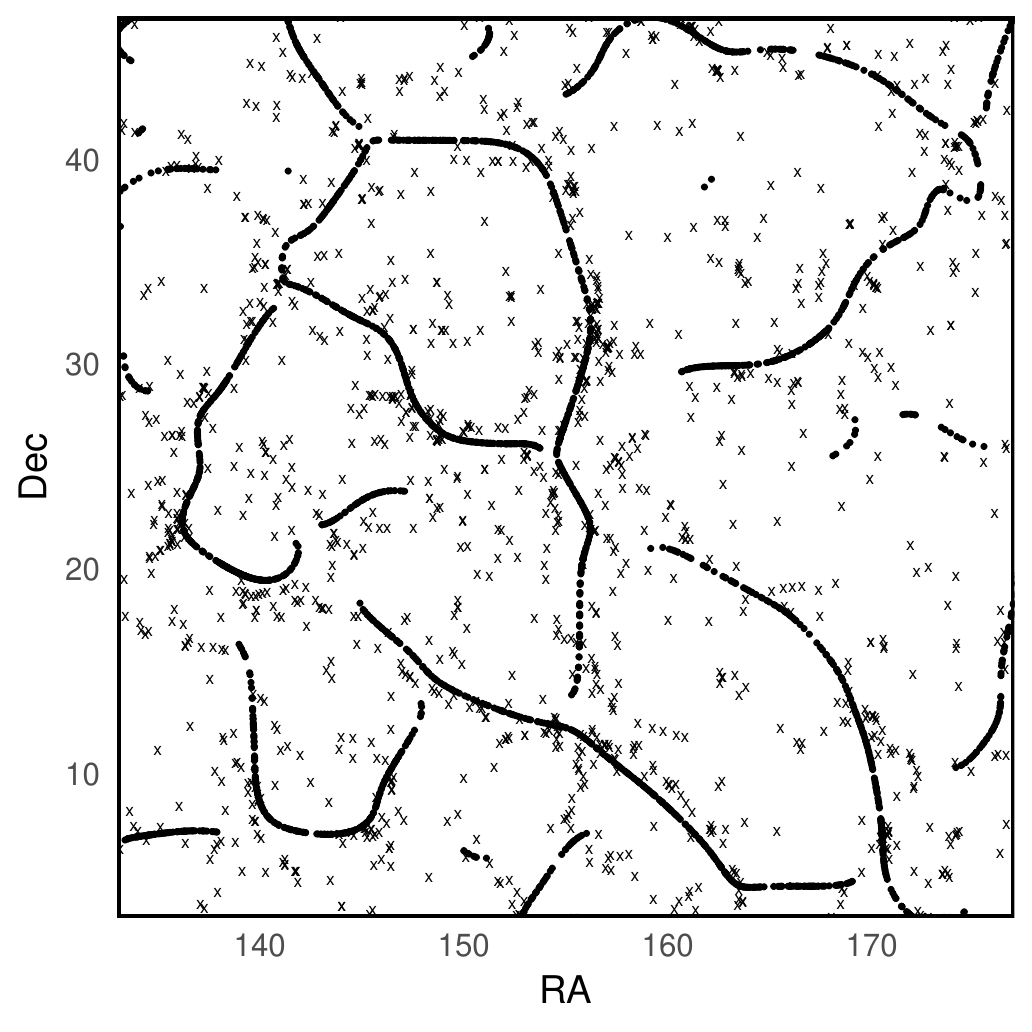}
    \label{fig:cosmic-web-SCMS-Fig12}
  \end{subfigure}
  \begin{subfigure}[b]{0.33\textwidth}
    \caption{LCRS for low redshift}
    \includegraphics[width=\textwidth]{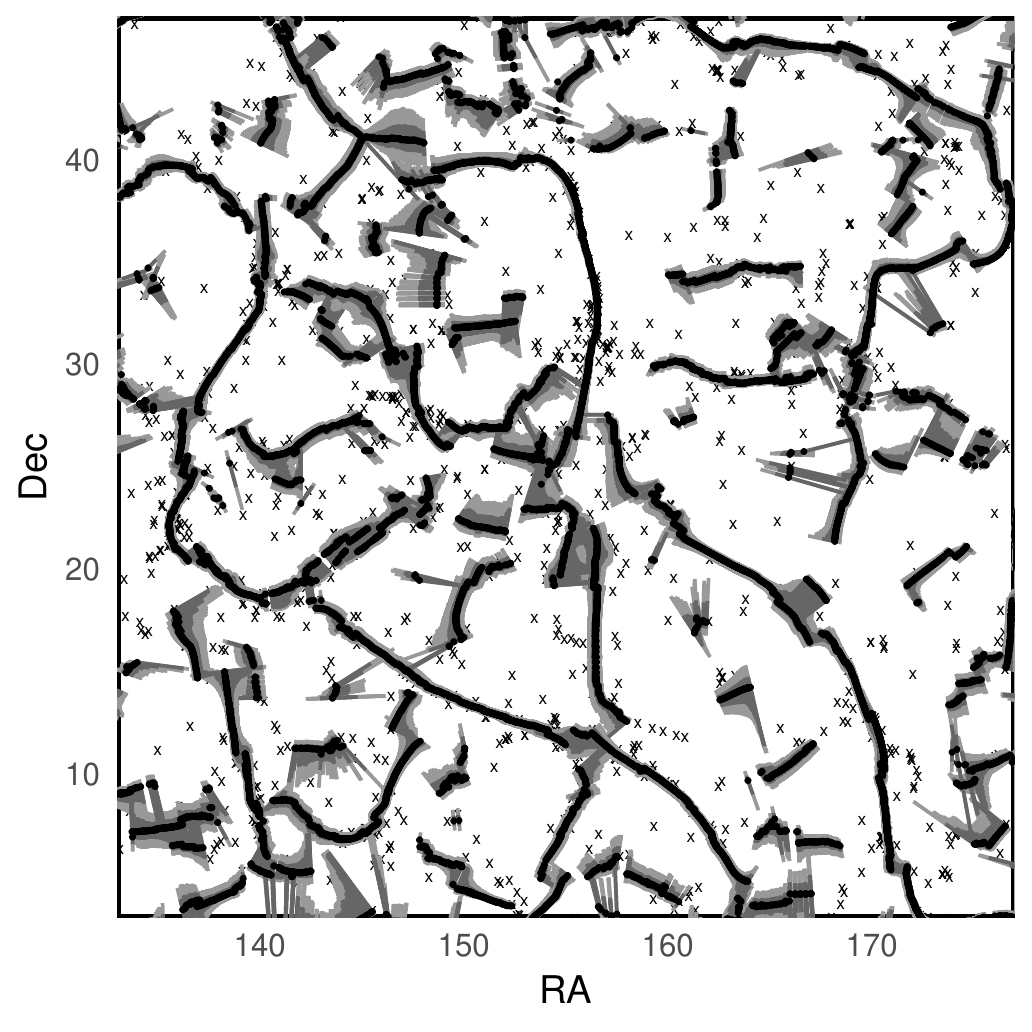}
    \label{fig:cosmic-web-LCRS-Fig12}
  \end{subfigure}
    \begin{subfigure}[b]{0.33\textwidth}
    \caption{sLCRS for low redshift}
    \includegraphics[width=\textwidth]{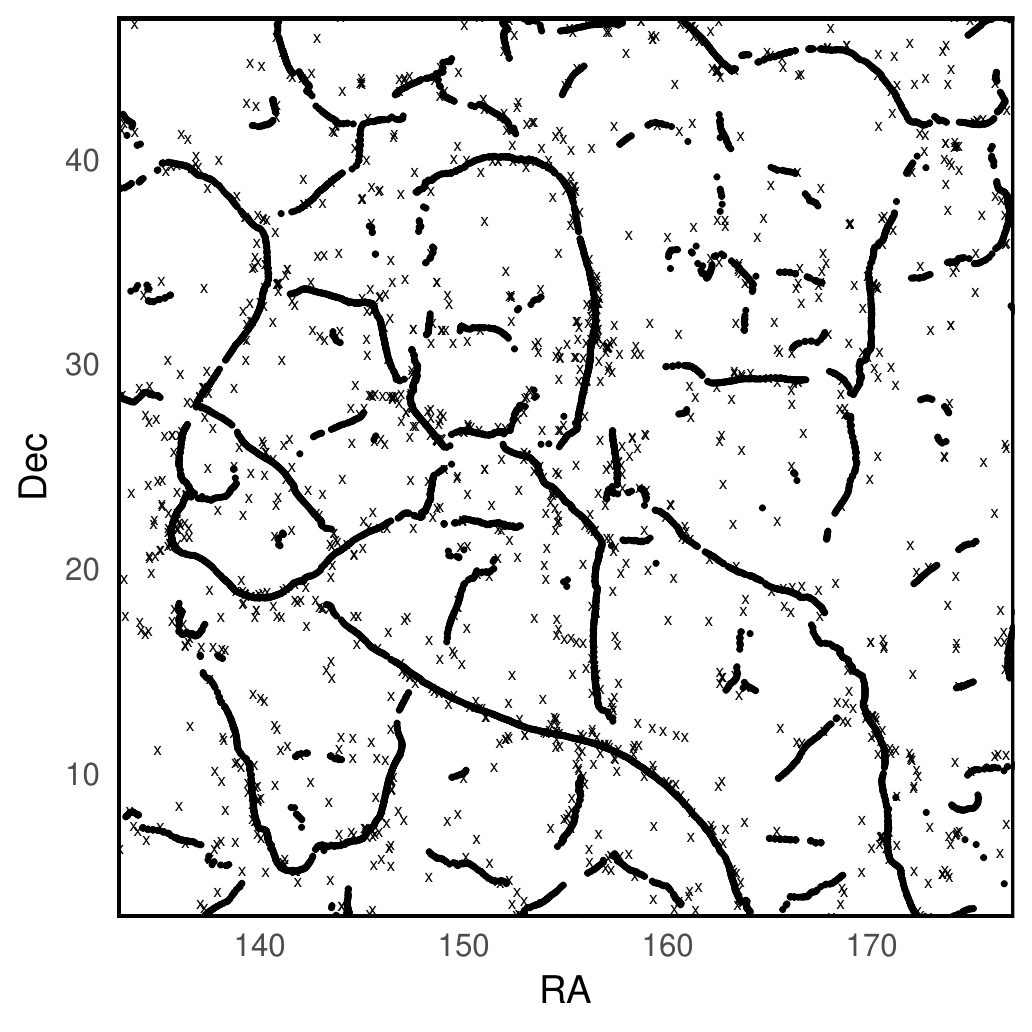}
    \label{fig:cosmic-web-sLCRS-Fig12}
  \end{subfigure}
  \begin{subfigure}[b]{0.33\textwidth}
    \caption{SCMS for high redshift}
    \includegraphics[width=\textwidth]{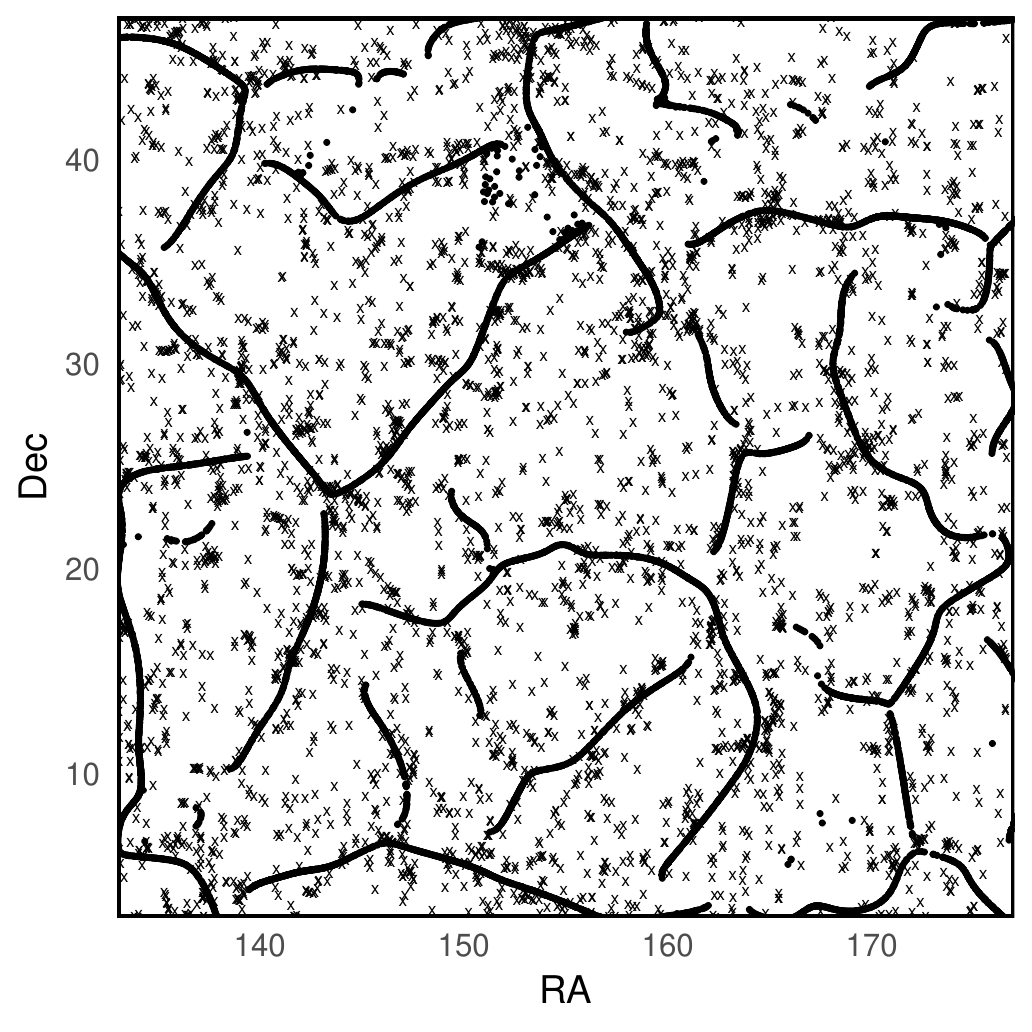}
    \label{fig:cosmic-web-SCMS-Fig13}
  \end{subfigure}
  \begin{subfigure}[b]{0.33\textwidth}
    \caption{LCRS for high redshift}
    \includegraphics[width=\textwidth]{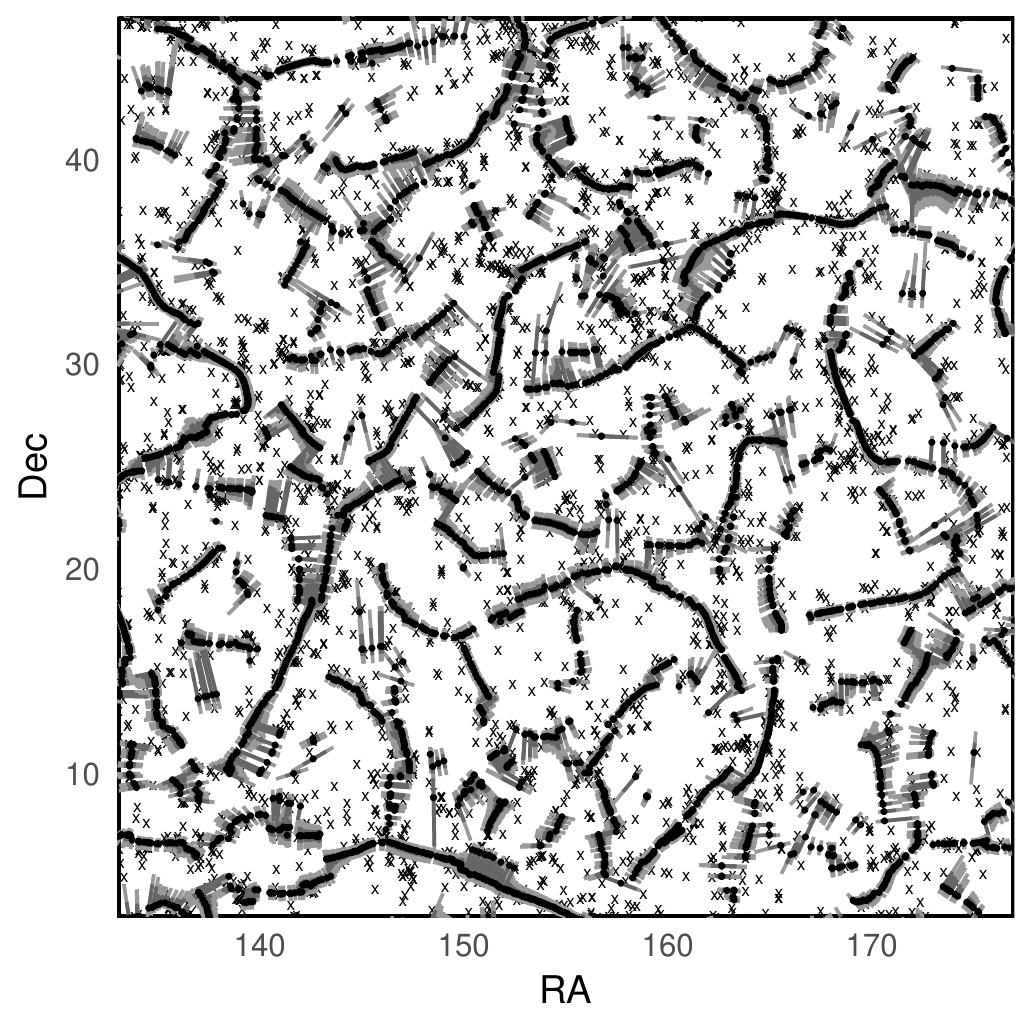}
    \label{fig:cosmic-web-LCRS-Fig13}
  \end{subfigure}
    \begin{subfigure}[b]{0.33\textwidth}
    \caption{sLCRS for high redshift}
    \includegraphics[width=\textwidth]{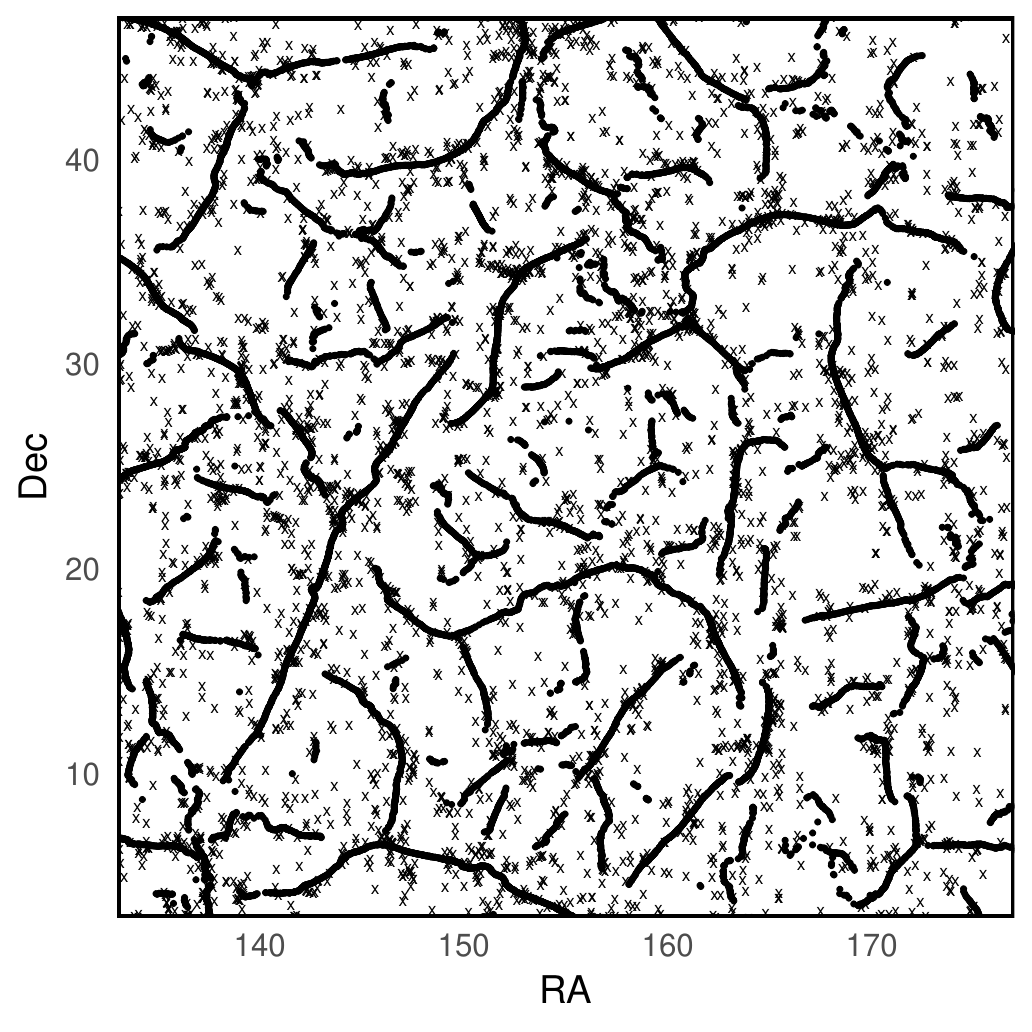}
    \label{fig:cosmic-web-sLCRS-Fig13}
  \end{subfigure}
  \caption{Shows the estimated ridge points (black) for low and high
    redshift. Threshold interval (grey) with
    $\alpha \in \{0.8,0.9\}$ are drawn for the LCRS. The bandwidth is $h =
    2.03$.}
  \label{fig:cosmic-web}
\end{figure}

We apply both
algorithms to data from Data Release 16 (\citet{AhumadaETAL2019}) of the
Sloan Digital Sky Survey (SDSS); see \citet{YorkETAL2000} and \citet{EisensteinETAL2011}.

The Baryon Oscilation Spectroscopic Survey (BOSS) is part of the SDSS and
obtains the redshift $z$ from 1.5 million luminous galaxies on 10 000
square degrees of sky in celestial coordinates. The longitude is called
right ascension (RA) and the latitude is called declination (Dec). Both are
measured in degree.

Th galaxies in our universe are not distributed uniformly, they follow a
web structure with clusters, sheets and empty voids. The filaments are
one-dimensional structures connecting clusters and build the boundaries of
the voids. The knowledge about filaments at the range of different
redshifts is interesting for cosmologists to study the evolution of the
universe. For more detailed information we refer to \citet{Chen2015} and
the reference therein.

\citet{Chen2015} used Data Release 12 to estimate galaxy filaments at
different redshifts with the
SCMS algorithm. We use
the same slices of data, namely at
\begin{displaymath}
  135^\circ \le \text{ RA } \le 175^\circ \quad 5^\circ \le \text{Dec} \le
  45^\circ \quad 0.245 \le z \le 0.240 
\end{displaymath}
with low redshift and at
\begin{displaymath}
  135^\circ \le \text{ RA } \le 175^\circ \quad 5^\circ \le \text{Dec} \le
  45^\circ \quad 0.530 \le z \le 0.535
\end{displaymath}
with high redshift. We apply both algorithms to the data for low and high
redshift and compare them to each other. The used slice for the algorithms
are from $130^\circ \leq \text{ RA } \leq 180^\circ$ and $0^\circ \leq
\text{ Dec } \leq 50^\circ$ to avoid boundary effects in the estimation. We
refrain from analysing the data in 3 dimensions, where redshift could be
used as the third one, for different reasons. The obvious one being, that
with the LCRS algorithm one can only estimate 1-dimensional ridges in 2
dimensions, but not in 3. Another reason being, that the density of
galaxies changes for different redshift, whence, estimating everything with
the same bandwidth may be problematic.

We focus on the bandwidth given in \eqref{eq:Silverman} with $A_0 =
0.4$. This particular choice of $A_0$ was made in \citet{Chen2015} by
trying different ones on taking the most suitable one, leading to $h = 2.5$ in
the low redshift and $h = 2.03$ in the high redshift data.

For the high redshift data, we show the results for different bandwidths
for the SCMS and LCRS algorithm; see Figure \ref{fig:SCMS-bandwidths} and
\ref{fig:LCRS-bandwidths}. The optimal bandwidth calculated based on the
euclidean minimal spanning tree is $h = 0.84$. Additionally, we considered
$h = 1.5$ and $h = 2.5$.

As starting points we choose a grid of points with vertical and horizontal
distance $0.5$ and remove all point further then $0.5$ away from any data
point. For the low redshift we get $2842$ points and for the high redshift
data $5777$ points. For sLCRS we used a finer grid, where the points lay
$0.25$ apart, leading to $23080$ grid points. \citet{Chen2015} chose the
data points as staring points and removed those, where the estimated
density was below a certain threshold (low redshift: $1.02 \cdot 10^{-3}$,
high redshift: $7.52 \cdot 10^{-4}$).

\begin{figure}
  \begin{subfigure}[b]{0.24\textwidth}
    \centering
    \caption{$h = 0.84$}
    \includegraphics[width=\textwidth]{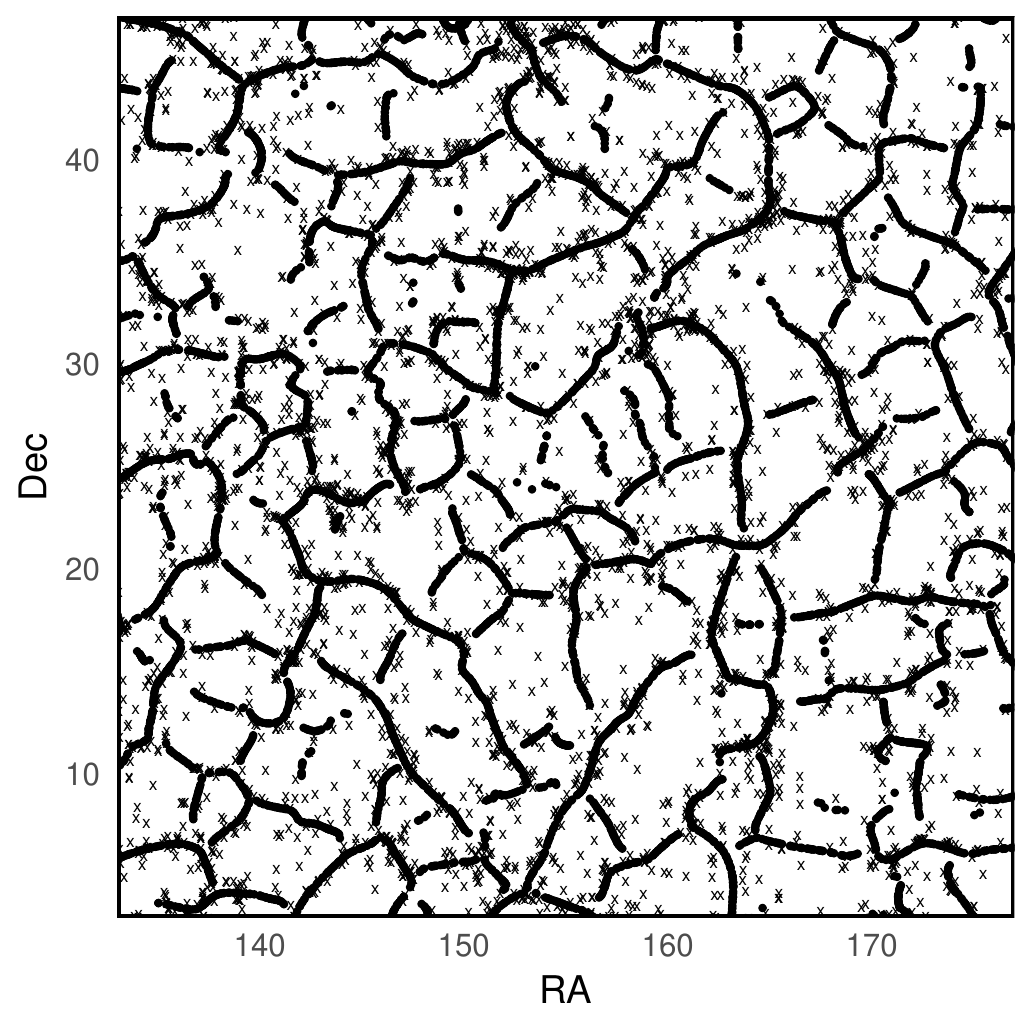}
    \label{fig:cosmic-web-LCRS-Fig12-084}
  \end{subfigure}
  \begin{subfigure}[b]{0.24\textwidth}
    \caption{$h = 1.5$}
    \includegraphics[width=\textwidth]{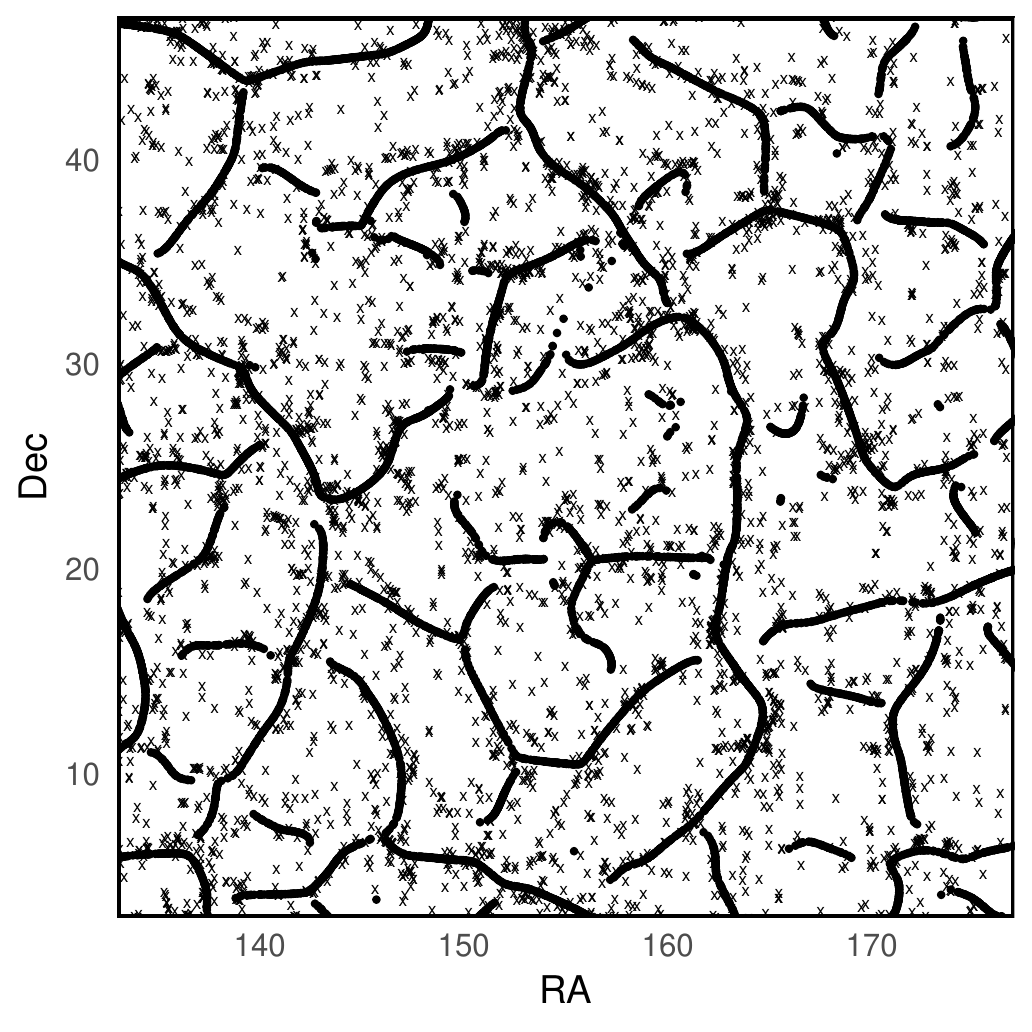}
    \label{fig:cosmic-web-LCRS-Fig12-150}
  \end{subfigure}
  \begin{subfigure}[b]{0.24\textwidth}
    \caption{$h = 2.03$}
    \includegraphics[width=\textwidth]{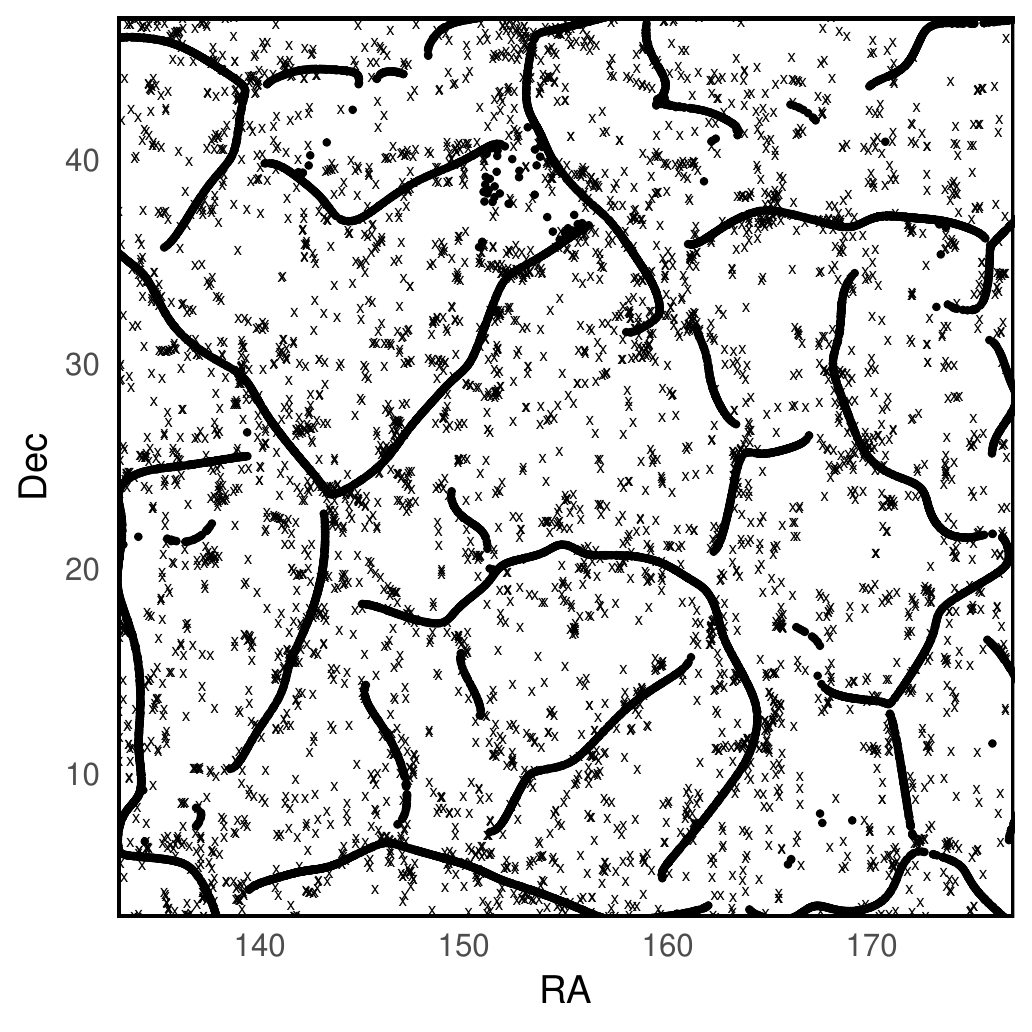}
    \label{fig:cosmic-web-SCMS-Fig13-203}
  \end{subfigure}
  \begin{subfigure}[b]{0.24\textwidth}
    \caption{$h = 2.5$}
    \includegraphics[width=\textwidth]{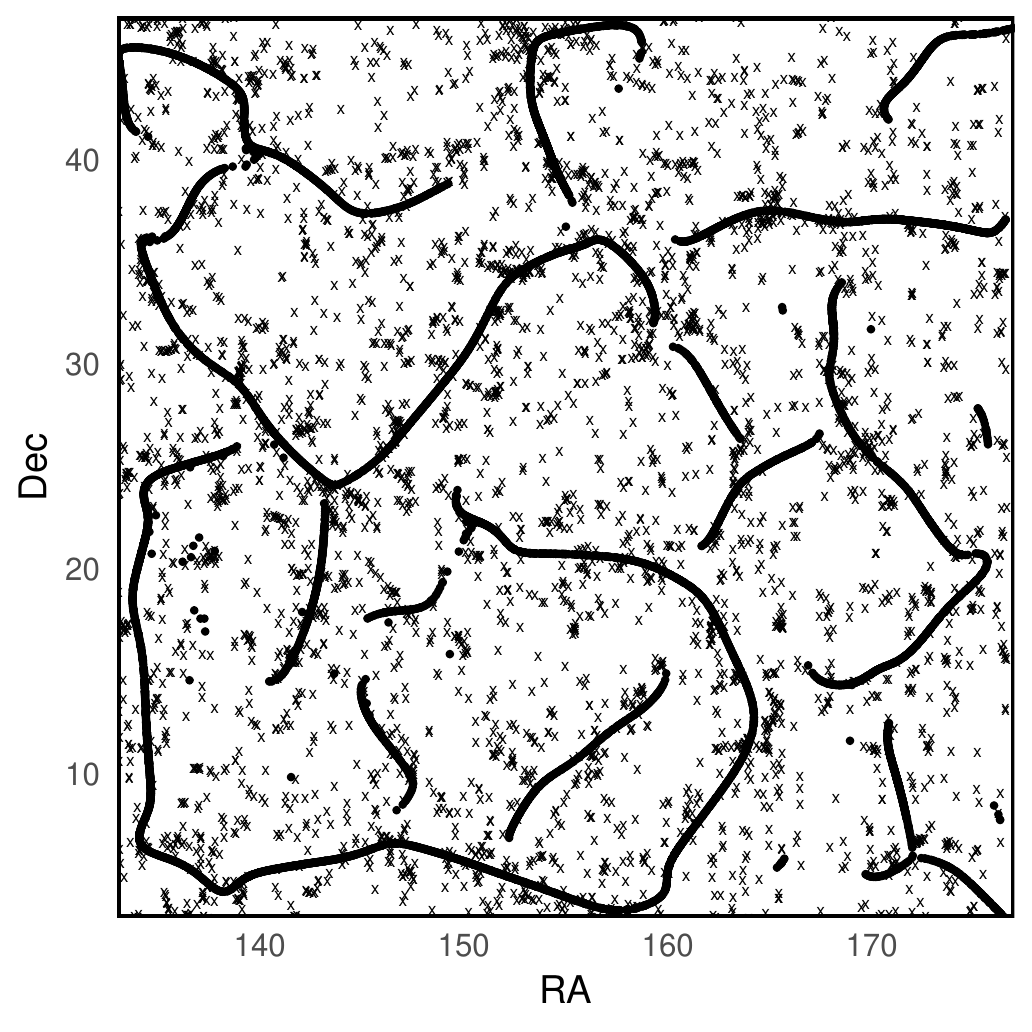}
    \label{fig:cosmic-web-LCRS-Fig13-250}
  \end{subfigure}
  \caption{Shows the estimated ridge points for the SCMS with different
    bandwidths $h$.}
  \label{fig:SCMS-bandwidths}
\end{figure}

\begin{figure}
  \begin{subfigure}[b]{0.24\textwidth}
    \centering
    \caption{$h = 0.84$}
    \includegraphics[width=\textwidth]{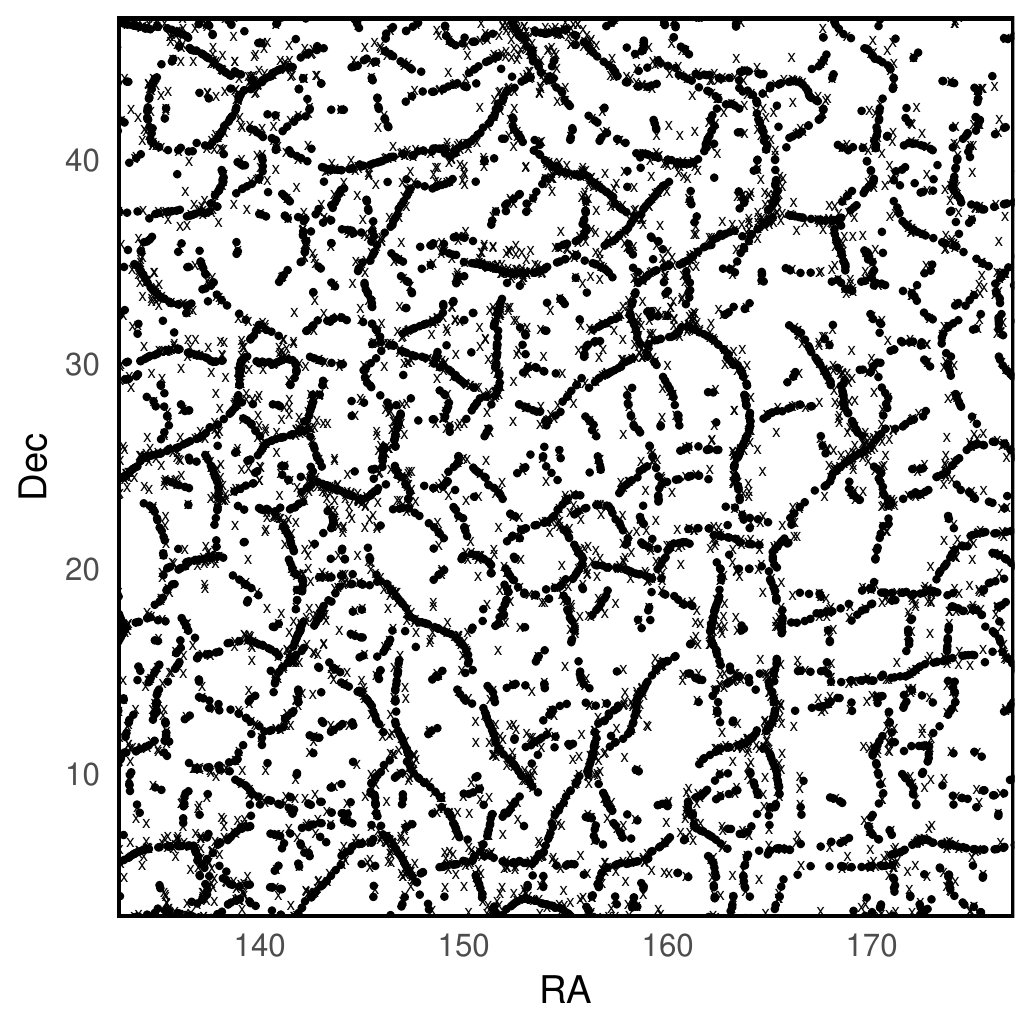}
    \label{fig:cosmic-web-LCRS-Fig13-084}
  \end{subfigure}
  \begin{subfigure}[b]{0.24\textwidth}
    \caption{$h = 1.5$}
    \includegraphics[width=\textwidth]{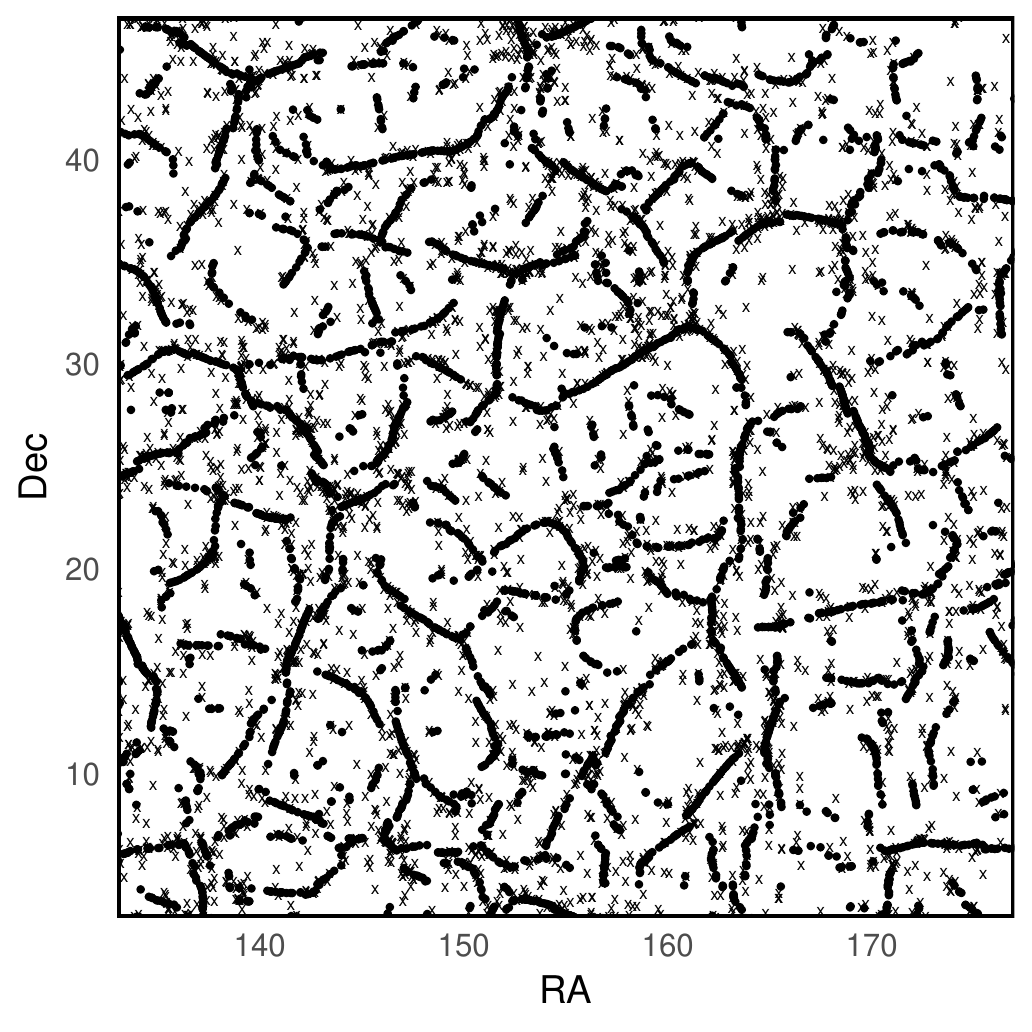}
    \label{fig:cosmic-web-LCRS-Fig13-150}
  \end{subfigure}
  \begin{subfigure}[b]{0.24\textwidth}
    \caption{$h = 2.03$}
    \includegraphics[width=\textwidth]{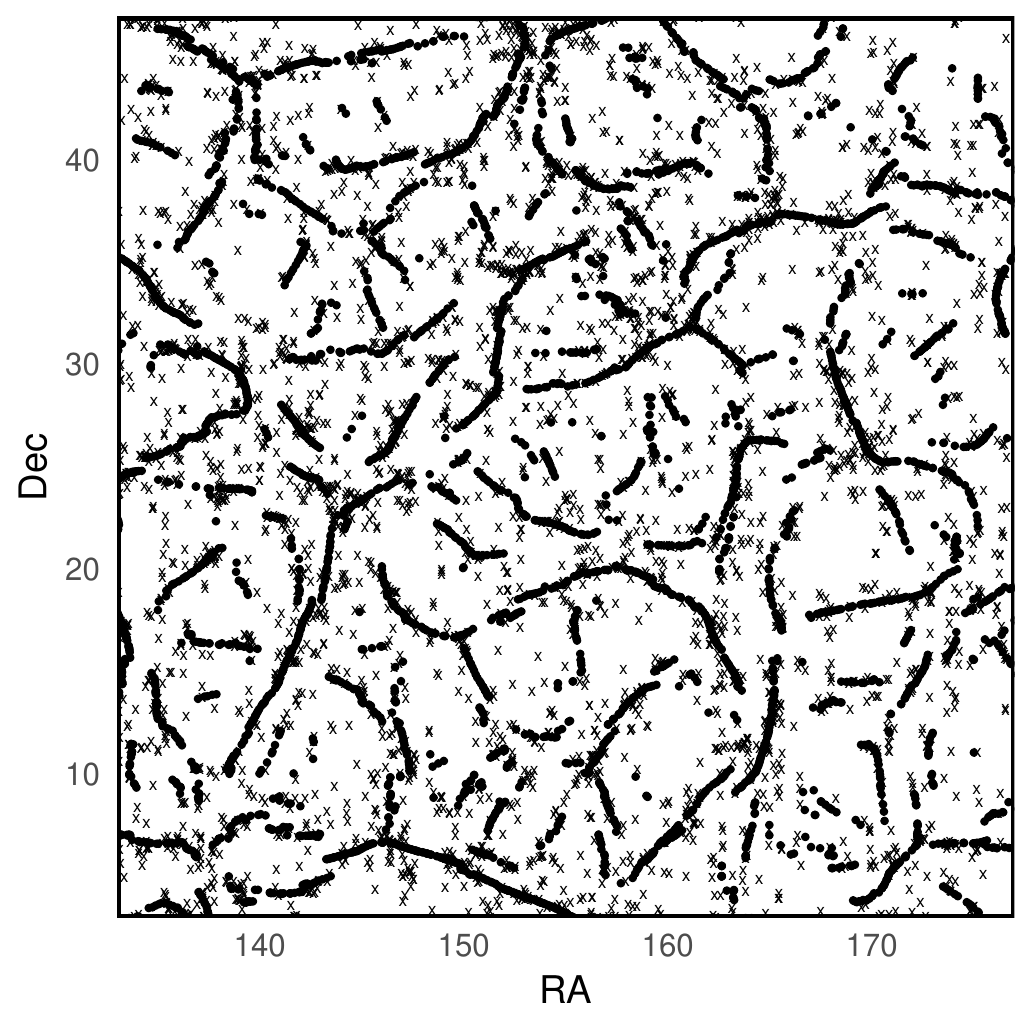}
    \label{fig:cosmic-web-SCMS-Fig13-203}
  \end{subfigure}
  \begin{subfigure}[b]{0.24\textwidth}
    \caption{$h = 2.5$}
    \includegraphics[width=\textwidth]{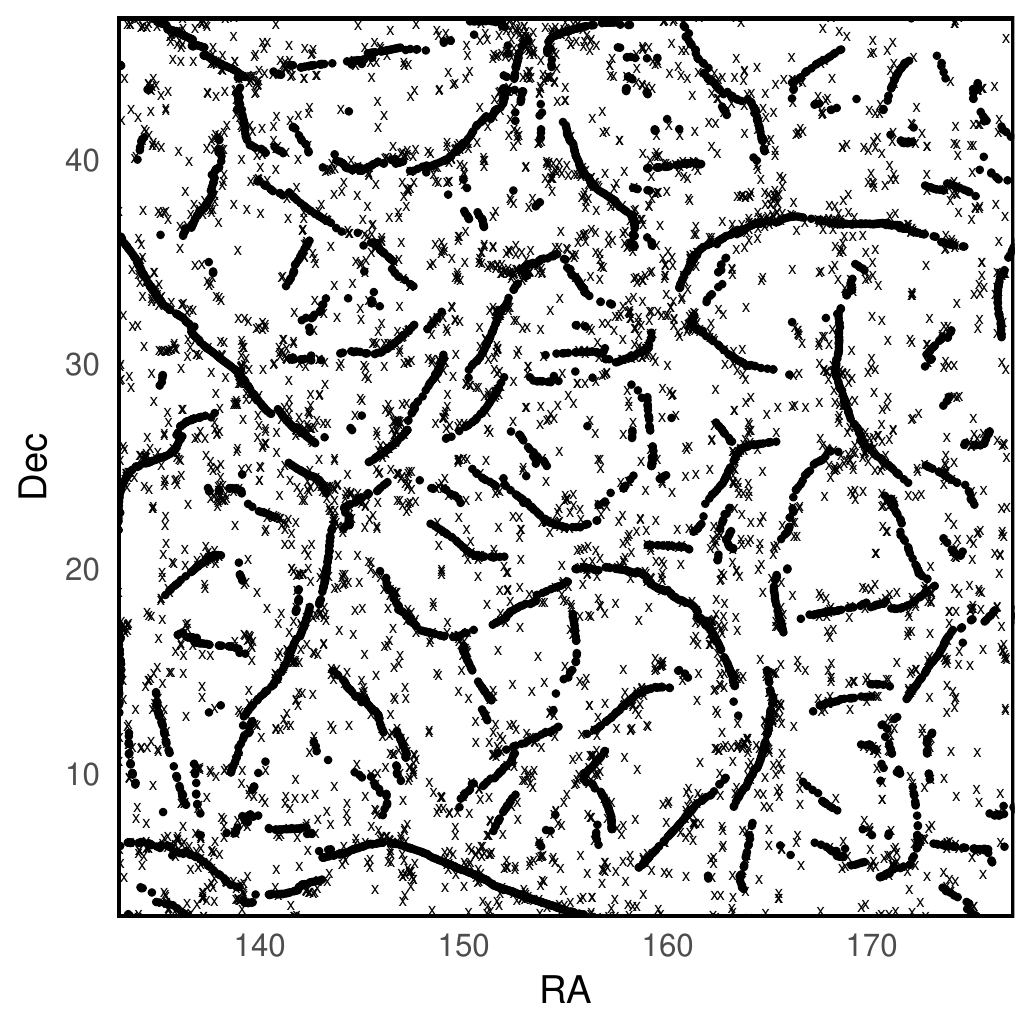}
    \label{fig:cosmic-web-LCRS-Fig13-250}
  \end{subfigure}
  \caption{Shows the estimated ridge points for the LCRS with different
    bandwidths $h$.}
  \label{fig:LCRS-bandwidths}
\end{figure}

\begin{figure}
  \begin{subfigure}[b]{0.24\textwidth}
    \centering
    \caption{$h = 0.84$}
    \includegraphics[width=\textwidth]{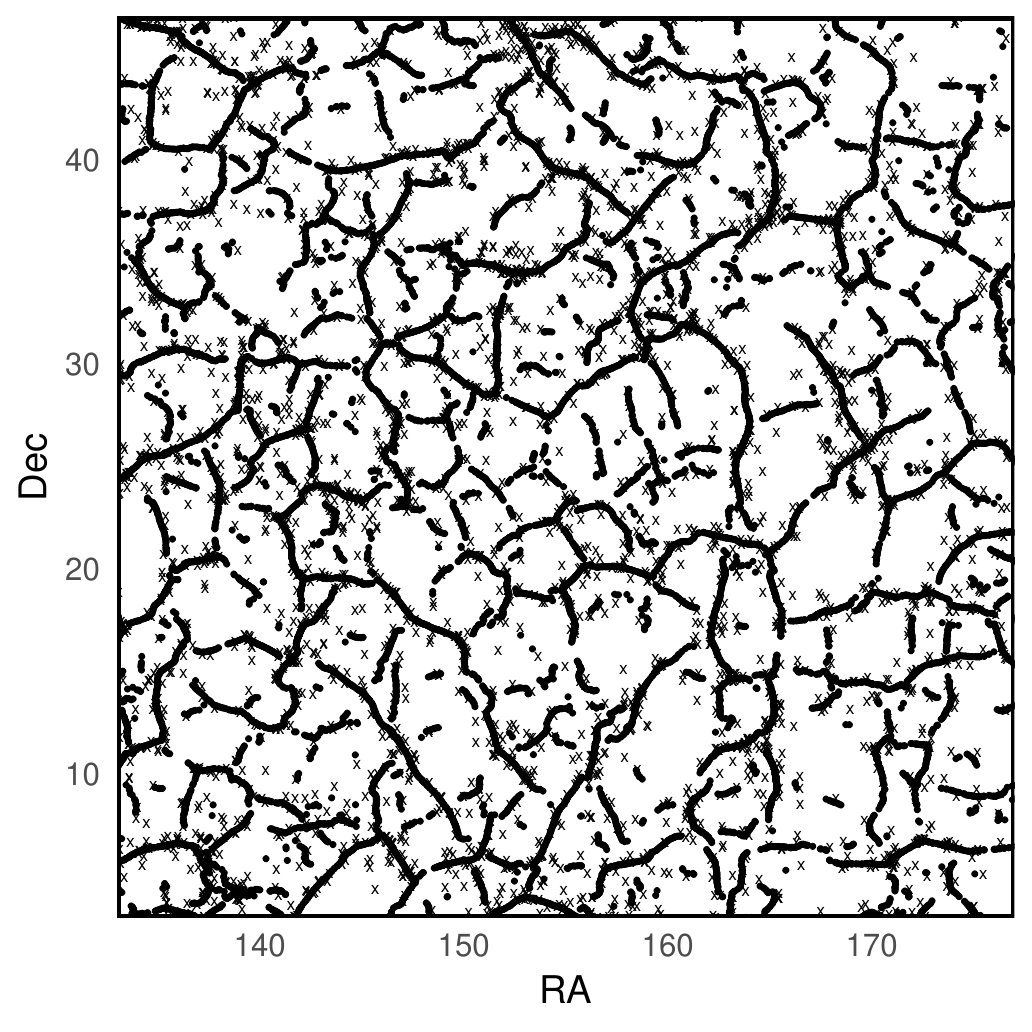}
    \label{fig:cosmic-web-LCRS-Fig13-084}
  \end{subfigure}
  \begin{subfigure}[b]{0.24\textwidth}
    \caption{$h = 1.5$}
    \includegraphics[width=\textwidth]{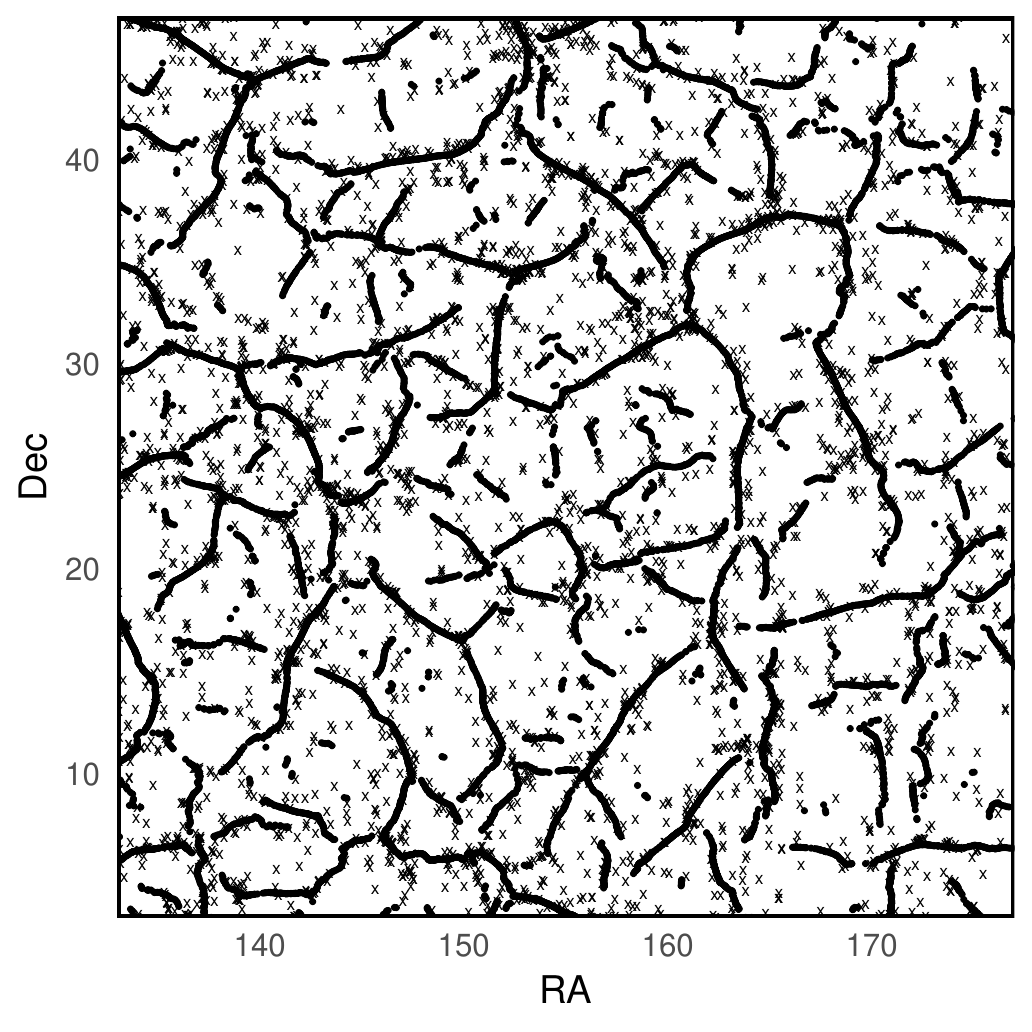}
    \label{fig:cosmic-web-LCRS-Fig13-150}
  \end{subfigure}
  \begin{subfigure}[b]{0.24\textwidth}
    \caption{$h = 2.03$}
    \includegraphics[width=\textwidth]{Figures/CosmicWeb_sLCRS-Fig13-203.pdf}
    \label{fig:cosmic-web-SCMS-Fig13-203}
  \end{subfigure}
  \begin{subfigure}[b]{0.24\textwidth}
    \caption{$h = 2.5$}
    \includegraphics[width=\textwidth]{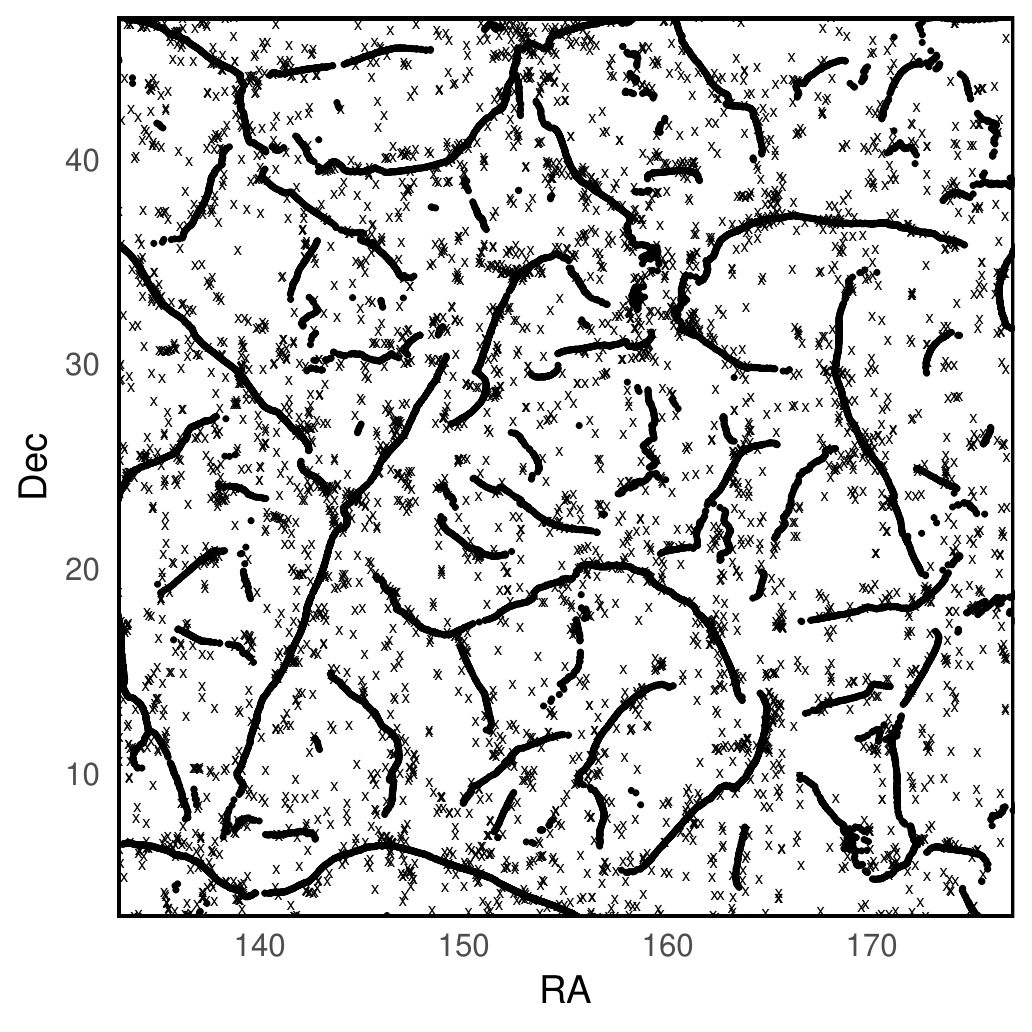}
    \label{fig:cosmic-web-LCRS-Fig13-250}
  \end{subfigure}
  \caption{Shows the estimated ridge points for the sLCRS with different
    bandwidths $h$.}
  \label{fig:sLCRS-bandwidths}
\end{figure}

\paragraph{Results.}
%

Looking at the results for the lower redshift data in Figure
\ref{fig:cosmic-web}
(\subref{fig:cosmic-web-SCMS-Fig12}--\subref{fig:cosmic-web-sLCRS-Fig12}), we
see that the ridge lines found by the SCMS are also covered by the LCRS and
sLCRS. In areas where there is a clear ridge line observable by just
looking at the data (i.e.~from $(145,20)$ to $(163,5)$), the LCRS and sLCRS
algorithms follow it smoothly and go directly through the data. The SCMS is
also smooth, but the ridge is closer towards the center of the curvature,
and lays above the most data points for smaller Dec and below for larger
Dec. Hence one observes the same phenomena as for the simulated circle data
in Figure \ref{fig:SCMS}. In areas where there isn't a clear ridge line
visible from eye (i.e~around $(145,22)$), the LCRS may be fragmented,
looking at the threshold intervals reveal, that we have a flat projected
weighted density along the smallest variance (compare Figure
\ref{fig:LCRS}), whence the ridge may lay somewhere on this flat part,
represented by the threshold interval. The sLCRS finds a smooth ridge in
this area through the data points. The ridge estimated by the SCMS is
interrupted and again biased towards the center of the curvature.

The results for the high redshift data in Figure \ref{fig:cosmic-web}
(\subref{fig:cosmic-web-SCMS-Fig13}--\subref{fig:cosmic-web-sLCRS-Fig13})
show the same effects as for the low redshift data.

In Figure \ref{fig:SCMS-bandwidths}--\ref{fig:sLCRS-bandwidths} we look at
the estimated ridges for different bandwidths for the high redshift
data. For all algorithms the number of ridge lines get lower and the ridge
lines getting longer as the bandwidth increases. This effect is most
distinctive for the SCMS algorithm. For the LCRS algorithm, even for larger
bandwidth there are still some short ridge line visible. However, in case
of the SCMS algorithm we notice, that the ridges move around for different
bandwidths, e.g.~the ridge from $(150,20)$ to $(160,5)$ goes from a
zigzag-shape to a round c-shape. In case of the LCRS and sLCRS algorithm the
estimated ridge does not move around for different bandwidths, it just gets
less connected for lower bandwidths.

\subsection{Discussion}
\label{sec:discussion-1}

In the simulated and the real data example we see that the performance of
the SCMS algorithm strongly depends on the bandwidth choice. This is
evident for the circle data with high bandwidths (Figure \ref{fig:SCMS}),
but also for the galaxy data. For the LCRS algorithm this effect
does not occur and for the sLCRS algorithm it is small. Hence, LCRS and
sLCRS is much more robust regarding bandwidth selection then the SCMS.

Therefore, whenever one is not only interested in estimating a smooth (but
possibly biased) ridge, we recommend using LCRS algorithm with threshold
intervals in case one wants a uncertainty measure or sLCRS if one wants
smooth ridges that are not biased.

\paragraph{Acknowledgements.}
This work was supported by Swiss National Science Foundation. I'm grateful
to Johanna F. Ziegel and Lutz D\"umbgen for their support and valuable inputs.



\bibliographystyle{abbrvnat}
\bibliography{../../../Literature/Literature.bib}

\begin{thebibliography}{25}
\providecommand{\natexlab}[1]{#1}
\providecommand{\url}[1]{\texttt{#1}}
\expandafter\ifx\csname urlstyle\endcsname\relax
  \providecommand{\doi}[1]{doi: #1}\else
  \providecommand{\doi}{doi: \begingroup \urlstyle{rm}\Url}\fi

\bibitem[Ahumada et~al.(2019)]{AhumadaETAL2019}
R.~Ahumada et~al.
\newblock The sixteenth data release of the sloan digital sky surveys: First
  release from the {APOGEE-2} southern survey and full release of {eBOSS}
  spectra.
\newblock \emph{The Astrophysical Journal Supplement Series}, 249\penalty0 (1),
  2019.

\bibitem[Bhatia(1997)]{Bhatia1997}
R.~Bhatia.
\newblock \emph{Matrix analysis}, volume 169 of \emph{Graduate Texts in
  Mathematics}.
\newblock Springer-Verlag, New York, 1997.

\bibitem[Chen et~al.(2015{\natexlab{a}})Chen, Genovese, and
  Wasserman]{WassermanETAL2015}
Y.-C. Chen, C.~R. Genovese, and L.~Wasserman.
\newblock Asymptotic theory for density ridges.
\newblock \emph{The Annals of Statistics}, 43\penalty0 (5):\penalty0
  1896--1928, 10 2015{\natexlab{a}}.

\bibitem[Chen et~al.(2015{\natexlab{b}})Chen, Ho, Freeman, Genovese, and
  Wasserman]{Chen2015}
Y.-C. Chen, S.~Ho, P.~E. Freeman, C.~R. Genovese, and L.~Wasserman.
\newblock {Cosmic web reconstruction through density ridges: method and
  algorithm}.
\newblock \emph{Monthly Notices of the Royal Astronomical Society},
  454\penalty0 (1):\penalty0 1140--1156, 09 2015{\natexlab{b}}.

\bibitem[Cheng(1995)]{Cheng1995}
Y.~Cheng.
\newblock Mean shift, mode seeking, and clustering.
\newblock \emph{IEEE Transactions on Pattern Analysis and Machine
  Intelligence}, 17\penalty0 (8):\penalty0 790--799, 1995.

\bibitem[Comaniciu et~al.(2002)Comaniciu, Meer, and Member]{ComaniciuMeer2002}
D.~Comaniciu, P.~Meer, and S.~Member.
\newblock Mean shift: A robust approach toward feature space analysis.
\newblock \emph{IEEE Transactions on Pattern Analysis and Machine
  Intelligence}, 24:\penalty0 603--619, 2002.

\bibitem[Delicado(2001)]{Delicado2001}
P.~Delicado.
\newblock Another look at principal curves and surfaces.
\newblock \emph{Journal of Multivariate Analysis}, 77\penalty0 (1):\penalty0
  84--116, 2001.

\bibitem[Delicado and Huerta(2003)]{Delicado2003}
P.~Delicado and M.~Huerta.
\newblock Principal curves of oriented points: theoretical and computational
  improvements.
\newblock \emph{Computational Statistics}, 18\penalty0 (2):\penalty0 293--315,
  2003.
\newblock Euroworkshop on Statistical Modelling (Bernried, 2001).

\bibitem[Doss and Wellner(2019)]{DossWellner2019}
C.~R. Doss and J.~A. Wellner.
\newblock Inference for the mode of a log-concave density.
\newblock \emph{The Annals of Statistics}, 47\penalty0 (5):\penalty0
  2950--2976, 2019.

\bibitem[D\"{u}mbgen and Rufibach(2009)]{DuembgenRufibach2009}
L.~D\"{u}mbgen and K.~Rufibach.
\newblock Maximum likelihood estimation of a log-concave density and its
  distribution function: basic properties and uniform consistency.
\newblock \emph{Bernoulli}, 15\penalty0 (1):\penalty0 40--68, 2009.

\bibitem[D{\"u}mbgen and Rufibach(2011)]{DuembgenRufibach2011}
L.~D{\"u}mbgen and K.~Rufibach.
\newblock logcondens: Computations related to univariate log-concave density
  estimation.
\newblock \emph{Journal of Statistical Software}, 39\penalty0 (i06), 2011.

\bibitem[D\"umbgen et~al.(2018)D\"umbgen, Moesching, and
  Str\"ahl]{DuembgenETAL2018}
L.~D\"umbgen, A.~Moesching, and C.~Str\"ahl.
\newblock Active set algorithms for estimating shape-constrained density
  ratios, 2018.

\bibitem[Eberly(1996)]{Eberly1996}
D.~Eberly.
\newblock \emph{Ridges in Image and Data Analysis}.
\newblock Computational Imaging and Vision. Springer Netherlands, 1996.

\bibitem[Eisenstein et~al.(2011)]{EisensteinETAL2011}
D.~J. Eisenstein et~al.
\newblock {SDSS}-{III}: Massive spectroscopic surveys of the distant universe,
  the milky way, and extra-solar planetary systems.
\newblock \emph{The Astronomical Journal}, 142\penalty0 (3):\penalty0 72, 2011.

\bibitem[Genovese et~al.(2014)Genovese, Perone-Pacifico, Verdinelli, and
  Wasserman]{Genovese2014}
C.~R. Genovese, M.~Perone-Pacifico, I.~Verdinelli, and L.~Wasserman.
\newblock Nonparametric ridge estimation.
\newblock \emph{The Annals of Statistics}, 42\penalty0 (4):\penalty0
  1511--1545, 2014.

\bibitem[Hastie and Stuetzle(1989)]{HastieStuetzle1989}
T.~Hastie and W.~Stuetzle.
\newblock Principal curves.
\newblock \emph{Journal of the American Statistical Association}, 84\penalty0
  (406):\penalty0 502--516, 1989.

\bibitem[K{\'e}gl et~al.(2000)K{\'e}gl, Krzyzak, Linder, and Zeger]{Kegl2000}
B.~K{\'e}gl, A.~Krzyzak, T.~Linder, and K.~Zeger.
\newblock Learning and design of principal curves.
\newblock \emph{IEEE Transactions on Pattern Analysis and Machine
  Intelligence}, 22\penalty0 (3):\penalty0 281--297, 2000.

\bibitem[Ozertem and Erdogmus(2011)]{OzertemETAL2011}
U.~Ozertem and D.~Erdogmus.
\newblock Locally defined principal curves and surfaces.
\newblock \emph{Journal of Machine Learning Research}, 12:\penalty0 1249--1286,
  2011.

\bibitem[{R Core Team}(2019)]{R2019}
{R Core Team}.
\newblock \emph{R: A Language and Environment for Statistical Computing}.
\newblock R Foundation for Statistical Computing, Vienna, Austria, 2019.

\bibitem[Silverman(1986)]{Silverman1986}
B.~W. Silverman.
\newblock \emph{Density estimation for statistics and data analysis}.
\newblock Monographs on Statistics and Applied Probability. Chapman \& Hall,
  London, 1986.

\bibitem[Simpson and Spector(1984)]{SimpsonSpector1984}
H.~C. Simpson and S.~J. Spector.
\newblock Some monotonicity results for ratios of modified {B}essel functions.
\newblock \emph{Quarterly of Applied Mathematics}, 42\penalty0 (1):\penalty0
  95--98, 1984.

\bibitem[Sreevani and Murthy(2016)]{SreevaniMurthy2016}
Sreevani and C.~Murthy.
\newblock On bandwidth selection using minimal spanning tree for kernel density
  estimation.
\newblock \emph{Computational Statistics and Data Analysis}, 102:\penalty0
  67--84, 2016.

\bibitem[Str\"ahl et~al.(2020)Str\"ahl, Ziegel, and D\"umbgen]{StraehlETAL2020}
C.~Str\"ahl, J.~F. Ziegel, and L.~D\"umbgen.
\newblock Local estimation of a multivariate density and its derivatives.
\newblock Preprint. Available at https://arxiv.org/abs/1812.09322, 2020.

\bibitem[York et~al.(2000)]{YorkETAL2000}
D.~G. York et~al.
\newblock The sloan digital sky survey: Technical summary.
\newblock \emph{The Astronomical Journal}, 120\penalty0 (3):\penalty0
  1579--1587, 2000.

\bibitem[Yu et~al.(2015)Yu, Wang, and Samworth]{YuETAL2015}
Y.~Yu, T.~Wang, and R.~J. Samworth.
\newblock A useful variant of the {D}avis-{K}ahan theorem for statisticians.
\newblock \emph{Biometrika}, 102\penalty0 (2):\penalty0 315--323, 2015.
\newblock ISSN 0006-3444.

\end{thebibliography}


\appendix

  \section{Auxiliary Results}

  \subsection{Ridge of circle data}
  \label{sec:ridge-circle-data}

  Suppose we have a sample $\X_1, \X_2, \ldots, \X_n \in \R^2$ with
  distribution $\mathcal{P} := \mathcal{L}(\X)$, where
  \begin{displaymath}
    \X \defeq r
    \begin{pmatrix}
      \cos(2\pi U)\\
      \sin(2\pi U)\\
    \end{pmatrix}
    + \sigma \Z,
  \end{displaymath}
  with $r, \sigma \in \R_{>0}$ and independent random variables
  $U \sim \mathcal{U}([0,1])$ and $\Z \sim \NN(0, \I_2)$.

  Let $f$ be the probability density function of the distribution
  $\mathcal{P}$, by the law of total probability we have
  \begin{align*}
    f(\x) & = \int_0^1 f_{\Z}(\x \, \vert \, U = u) \, du\\
          & = (2 \pi \sigma^2)^{-1} \int_0^1 \exp\Bigl( -\frac{r^2 - 2r
            \bigl( x_1 \cos(2\pi u) + x_2 \sin(2\pi u) \bigr) + \Vert \x
            \Vert^2}{2 \sigma^2} \Bigr) \, du\\
          & = (4 \pi^2 \sigma^2)^{-1} \exp \Bigl( -\frac{r^2 +
            \lVert \x \rVert^2}{2\sigma^2} \Bigr) \int_0^{2\pi} \exp\Bigl(
            \frac{r}{\sigma^2} \bigl( x_1\cos(u) + x_2\sin(u) \bigr) \Bigr) \,
            du\\
          & = (2\pi \sigma^2)^{-1} I_o\bigl( r / \sigma^2 \lVert \x \rVert
            \bigr) \exp\Bigl( - \frac{r^2 + \lVert \x \rVert^2}{2 \sigma^2}
            \Bigr),
  \end{align*}
  where we used
  \begin{align*}
    \left\Vert
    \begin{pmatrix}
      r \cos(u) - x_1\\
      r \sin(u) - x_2\\
    \end{pmatrix}
    \right\Vert^2 & = r^2\bigl( \cos^2(u) + \sin^2(u) \bigr) - 2r \bigl( x_1
                    \cos(u) + x_2 \sin(u)\bigr) + x_1^2 + x_2^2\\
                  & = r^2 - 2r \bigl( x_1 \cos(u) + x_2 \sin(u) \bigr) + \lVert \x \rVert^2
  \end{align*}
  and
  \begin{displaymath}
    \int_0^{2\pi} \exp\Bigl( t \bigl( x_1 \cos(u) + x_2 \sin(u) \bigr)
    \Bigr)\, du = 2\pi I_0(t \lVert \x \rVert),
  \end{displaymath}
  with $I_0$ being the modified Bessel function of the first kind with
  parameter $0$. In general, the modified Bessel function of the first kind
  with parameter $\nu \in \mathbb{Z}$ can be written as
  \begin{displaymath}
    I_{\nu}(t) = (t / 2)^\nu \sum_{m = 0}^\infty \frac{(t^2 / 4)^m}{m!
      (m + \nu)!}.
  \end{displaymath}
  The derivative of $I_0$ has the following representation:
  \begin{displaymath}
    I_o'(t) = t / 2 \sum_{m=1}^\infty \frac{m (t^2 / 4)^{m-1}}{(m!)^2} =
    t  / 2 \sum_{m=0}^\infty \frac{(t^2 / 4)^m}{m! (m + 1)!} = I_1(t).
  \end{displaymath}

  \begin{lem}[Theorem 1 in \citet{SimpsonSpector1984}]
    \label{lem:Bessel}
    Let be $\nu(t) \defeq t I_0(t) / I_1(t)$ for $t > 0$, then $\nu(t)$ is
    strictly increasing and convex with
    \begin{displaymath}
      \nu(0+) = 2, \quad \nu'(0+) = 0 \quad \text{and} \quad \nu''(0+) = \frac{1}{3}.
    \end{displaymath}
  \end{lem}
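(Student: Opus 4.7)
The plan is to combine the power series of $I_0$ and $I_1$ near the origin with a Riccati-type ODE derived from standard Bessel identities, and to handle the global properties separately via the probabilistic / concavity interpretation of $I_1/I_0$.

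\textbf{Step 1 (Riccati ODE).} Using $I_0'(t) = I_1(t)$ and the recurrence $tI_1'(t) = tI_0(t) - I_1(t)$, I would differentiate $\nu(t) = tI_0(t)/I_1(t)$ directly to obtain
\[
t\,\nu'(t) \;=\; 2\nu(t) + t^2 - \nu(t)^2, \qquad\text{i.e.}\qquad (\nu(t)-1)^2 + t\,\nu'(t) \;=\; 1 + t^2 .
\]

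\textbf{Step 2 (Behavior at $0^+$).} The series $I_0(t) = 1 + t^2/4 + O(t^4)$ and $I_1(t) = t/2 + t^3/16 + O(t^5)$ yield $\nu(0+) = 2$ immediately. Plugging the formal ansatz $\nu(t) = 2 + c_2 t^2 + c_4 t^4 + \cdots$ (there is no odd-power term since both $I_0$ and $I_1(t)/t$ are even in $t$) into the identity from Step 1 and matching coefficients of $t^2, t^4, \ldots$ recursively produces $\nu'(0+) = 0$ together with the stated value of $\nu''(0+)$.

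\textbf{Step 3 (Monotonicity).} Rewrite $\nu(t) = t/A(t)$ with $A(t) := I_1(t)/I_0(t)$. The function $A$ has the integral representation $A(t) = \mathbb{E}_t[\cos\Theta]$, where $\Theta$ is a von Mises variable with concentration $t$. From this one extracts (i) $A$ is smooth and strictly increasing from $A(0)=0$ to $A(\infty)=1$ and (ii) $A$ is strictly concave on $(0,\infty)$. Concavity combined with $A(0)=0$ gives the strict tangent-line inequality $A(t) - tA'(t) > 0$, hence
\[
\nu'(t) \;=\; \frac{A(t) - tA'(t)}{A(t)^2} \;>\; 0 \qquad\text{for } t > 0 .
\]

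\textbf{Step 4 (Convexity).} Differentiating the Riccati identity of Step 1 once more gives $t\,\nu''(t) = 2t - (2\nu(t)-1)\,\nu'(t)$. Substituting the explicit form $\nu'(t) = [1 + t^2 - (\nu(t)-1)^2]/t$ reduces $\nu''(t) > 0$ to an inequality of the form $(2\nu(t)-1)\bigl[1 + t^2 - (\nu(t)-1)^2\bigr] < 2t^2$. This can be attacked by combining Tur\'an's inequality $I_1(t)^2 > I_0(t)I_2(t)$ with the three-term recurrence $tI_0(t) = tI_2(t) + 2I_1(t)$ to obtain a sufficiently sharp upper bound on $\nu(t)$.

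The main obstacle is Step~4. The initial values and monotonicity reduce to essentially mechanical or convex-analytic arguments once the Riccati form and the representation $\nu = t/A$ are in hand. Convexity, by contrast, requires genuine uniform control of $\nu(t)$ on all of $(0,\infty)$, and the bound one needs is delicate because the relevant inequality is tight in the limit $t \to 0^+$ (where $\nu = 2 = 1 + \sqrt{1+t^2}|_{t=0}$). This is presumably where Simpson--Spector's original proof does its real work, via Tur\'an-type estimates rather than through the ODE directly.
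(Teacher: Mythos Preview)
The paper does not supply a proof of this lemma: it is quoted as Theorem~1 of Simpson and Spector (1984) and used as a black box in the circle-data computation. There is therefore no paper proof to compare your proposal against.

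Two remarks on the proposal itself. First, Step~2 does not actually reproduce the stated value: substituting $\nu(t) = 2 + c_2 t^2 + O(t^4)$ into your Riccati identity $t\nu' = 2\nu + t^2 - \nu^2$ gives $2c_2 = 1 - 2c_2$, i.e.\ $c_2 = 1/4$ and hence $\nu''(0+) = 1/2$; the direct series expansion of $tI_0/I_1$ confirms $1/2$, so the discrepancy lies in the lemma as printed, not in your method. Second, Step~3 rests on the strict concavity of $A(t) = I_1(t)/I_0(t)$ on $(0,\infty)$, which you assert but do not justify; this is not an immediate consequence of the von~Mises representation and is of comparable difficulty to the monotonicity of $\nu$ that you are trying to establish, so the argument is close to circular as written. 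Step~4 you already flag as incomplete, and rightly so: the displayed inequality is essentially a restatement of $\nu''>0$ rather than a reduction of it, and Tur\'an's inequality alone is not sharp enough near $t=0$ to close it.
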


  \begin{lem}
    \label{lem:symmetric}
    For a rotationally symmetric distribution with probability density
    function $f: \R^d \rr \R$; $f(\x) = g(\lVert \x \rVert)$ for some
    function $g: \R_{\geq 0} \rr \R_{\geq 0}$, $g'(0+) = 0$ we have
    \begin{displaymath}
      \mathrm{R}_{d-1}(f) = \bigl\{ \x \in \R^d: g'(\lVert \x \rVert) = 0,
      g''(\lVert \x \rVert) < 0 \bigr\}.
    \end{displaymath}
  \end{lem}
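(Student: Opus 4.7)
The plan is to exploit the full rotational symmetry of $f$ to reduce the $d$-dimensional ridge condition in Definition \ref{def:ridge_variance} to a one-dimensional statement about the profile $g$. Throughout, I write $r = \lVert \x \rVert$ and $\hat{\x} = \x/r$, and use $\ell = \log f$.

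First, a direct calculation with $\ell(\x) = \log g(r)$ yields the spectral decomposition
\begin{displaymath}
  D^2 \ell(\x) \;=\; \lambda_{\mathrm{rad}}(r)\, \hat{\x}\hat{\x}^\top
  \;+\; \lambda_{\mathrm{tan}}(r)\, \bigl( I_d - \hat{\x}\hat{\x}^\top \bigr),
\end{displaymath}
with $\lambda_{\mathrm{rad}}(r) = (\log g)''(r)$ and $\lambda_{\mathrm{tan}}(r) = (\log g)'(r)/r$. Thus the Hessian of $\ell$ has exactly two eigenspaces: the one-dimensional radial span of $\hat{\x}$, and its $(d-1)$-dimensional orthogonal complement. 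By Lemma \ref{lem:conditional-distribution}, $\bs{\Sigma}_h(\x) = h^2 I_d + h^4 D^2\ell(\x) + \o(h^4)$ as $h \to 0$, so for small $h$ the same eigenspace decomposition is inherited by $\bs{\Sigma}_h(\x)$. In the case $s = d-1$ the matrix $\V_{\!\!\perp}(\bs{\Sigma}_h(\x))$ of Definition \ref{def:ridge_variance} is a single unit vector, and a well-defined limit $\V_{\!\!\perp}(\x)$ can only exist when the smallest eigenvalue of $D^2\ell(\x)$ has multiplicity one. Because the tangential eigenvalue has multiplicity $d-1 \ge 1$, this forces $\lambda_{\mathrm{rad}}(r) < \lambda_{\mathrm{tan}}(r)$ and $\V_{\!\!\perp}(\x) = \pm \hat{\x}$.

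With the radial direction identified, the mode condition becomes purely one-dimensional: one checks that
\begin{displaymath}
  z \;\mapsto\; f(\x + z\hat{\x}) \;=\; g(r + z), \qquad |z| < r,
\end{displaymath}
has a mode at $z = 0$ precisely when $g$ has a strict local maximum at $r$, i.e. when $g'(r) = 0$ and $g''(r) < 0$. This immediately gives the inclusion $R_{d-1}(f) \subseteq \{g'(r) = 0,\; g''(r) < 0\}$ at any point where $\V_{\!\!\perp}(\x)$ exists. For the reverse inclusion, I would verify that at any $\x$ with $g'(r) = 0$ and $g''(r) < 0$ the above existence is indeed guaranteed: in that case $\lambda_{\mathrm{rad}}(r) = g''(r)/g(r) < 0 = \lambda_{\mathrm{tan}}(r)$, so the smallest eigenvalue of $D^2\ell(\x)$ has multiplicity one, $\V_{\!\!\perp}(\x) = \hat{\x}$ is well-defined, and the mode condition is satisfied by construction.

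The main obstacle is the bookkeeping around points where $\V_{\!\!\perp}(\x)$ fails to be uniquely defined, namely those $\x$ at which the limiting smallest eigenvalue of $\bs{\Sigma}_h(\x)$ has multiplicity greater than one. Any such $\x$ is tacitly excluded from $R_{d-1}(f)$ by the existence hypothesis in Definition \ref{def:ridge_variance}; making this precise amounts to arguing that, under rotational symmetry, a degeneracy $\lambda_{\mathrm{rad}}(r) = \lambda_{\mathrm{tan}}(r)$ cannot coexist with the mode property $g'(r) = 0$, $g''(r) < 0$, so no ridge points are lost. Once this is handled, both inclusions follow and the lemma is established.
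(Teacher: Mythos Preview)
Your route differs from the paper's. The paper works directly with the Hessian-based Definition~\ref{def:ridge_hessian}: it computes $Df(\x) = g'(r)\,\hat{\x}$ and $D^2f(\x) = g''(r)\,\hat{\x}\hat{\x}^\top + (g'(r)/r)\bigl(I_d - \hat{\x}\hat{\x}^\top\bigr)$, reads off the eigenstructure, and checks the conditions $\V_\perp(\bs{H})^\top Df = 0$ and $\lambda_d(\bs{H}) < 0$ by hand. You instead go through Definition~\ref{def:ridge_variance}, passing from $\bs{\Sigma}_h(\x)$ to $D^2\ell(\x)$ via Lemma~\ref{lem:conditional-distribution} before checking the mode condition. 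The paper's argument is more elementary and avoids the limiting machinery altogether; yours stays closer to the paper's novel definition but adds a layer that is not needed for this auxiliary result.

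There is, however, a genuine gap in your multiplicity argument. The claim that the tangential eigenvalue having multiplicity $d-1 \ge 1$ ``forces $\lambda_{\mathrm{rad}}(r) < \lambda_{\mathrm{tan}}(r)$'' only works when $d-1 \ge 2$. For $d = 2$ the tangential eigenspace is one-dimensional, so nothing prevents $\V_\perp(\x)$ from being the tangential direction whenever $\lambda_{\mathrm{tan}}(r) < \lambda_{\mathrm{rad}}(r)$. In that case the relevant restriction is $z \mapsto g\bigl(\sqrt{r^2 + z^2}\bigr)$, which has a mode at $z = 0$ as soon as $g'(r) < 0$, producing points in $R_{1}(f)$ with $g'(r) \neq 0$; for instance $g(r) = c(1 + r^2)^{-2}$ with $r > 1$ exhibits exactly this. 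The paper's own proof glosses over the same issue---it asserts $D^2f(\x)\v = \bs{0}$ for tangential $\v$, effectively dropping the eigenvalue $g'(r)/r$---so you are in good company, but to make your argument airtight you must either restrict to $d \ge 3$, add a hypothesis guaranteeing the radial eigenvalue is strictly smallest, or treat the tangential-$\V_\perp$ case separately.
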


\begin{proof}
  It is $\bs{0} \in \mathrm{Ridge}(f)$ if, and only if, $g''(0) <
  0$. These, we will assume that $\x \neq \bs{0}$ in the remainder of this
  proof. It is
  \begin{align*}
    Df(\x) & = g'(\lVert \x \rVert) \frac{\x}{\lVert \x
             \rVert},\\
    D^2f(\x) & = g''(\lVert \x \rVert) \Bigl( \frac{\x}{\lVert \x \rVert}
               \Bigr) \Bigl( \frac{\x}{\lVert \x \rVert} \Bigr)^\top +
               \frac{g'(\lVert \x \rVert)}{\lVert \x \rVert} \Bigl( \I_2 -
               \Bigl( \frac{\x}{\lVert \x \rVert} \Bigr) \Bigl(
               \frac{\x}{\lVert \x \rVert} \Bigr)^\top\Bigr),
  \end{align*}
  because
  \begin{displaymath}
    \frac{\partial}{\partial x_i} \lVert \x \rVert = \frac{1}{2 \sqrt{x_1^2
        + \cdots +  x_2^2}} 2 x_i = \frac{x_i}{\lVert \x \rVert}.
  \end{displaymath}
  Let $\v_1 \defeq \x / \lVert \x \rVert$, for all vectors $\v \in \R^d$
  with $\v_1^\top \v = 0$ it holds
  \begin{displaymath}
    D^2f(\x) \v = \bs{0} = 0 \v.
  \end{displaymath}
  Hence the space
  $\{\v \in \R^d: \v_1^\top \v = 0\} \subset \mathrm{kern}\bigl( D^2f(\x)
  \bigr)$ and
  \begin{displaymath}
    D^2f(\x) \v_1 = g''(\lVert \x \rVert) \v_1.
  \end{displaymath}
  So, the eigenvalues of $D^2f(\x)$ are $g''(\lVert \x \rVert)$ with
  multiplicity $1$ and $0$ with multiplicity $d-1$ and
  $\x \in \mathrm{Ridge}_{d-1}(f)$ if, and only if,
  \begin{displaymath}
    D f(\x)^\top \v_1 = g'(\lVert \x \rVert) \v_1^\top \v_1 = g'(\lVert \x
    \rVert) = 0 \quad \text{and} \quad g''(\lVert \x \rVert) < 0.
  \end{displaymath}
\end{proof}

The distribution of the circle data is rotationally symmetric. Indeed,
$f(\x) = g(\lVert \x \rVert)$ with
\begin{align*}
  g(t) & = (2\pi \sigma^2)^{-1} I_0 (r / \sigma^2 \cdot t) \exp\Bigl(
         -\frac{r^2 + t^2}{2 \sigma^2} \Bigr) ,\\
  g'(t) & = (2\pi \sigma^2)^{-1} \bigl( r / \sigma^2 \cdot I_1(r / \sigma^2
          \cdot t) - t / \sigma^2 \cdot I_0(r / \sigma^2 \cdot t) \bigr)
          \exp\Bigl( - \frac{ r^2 + t^2}{2 \sigma^2} \Bigr).
\end{align*}
Denote $\alpha \defeq r / \sigma^2 > 0$, $t = \lVert \x \rVert$. The
following statements are equivalent:
\begin{align*}
  g'(t) & = 0 &\\
  \alpha I_1(\alpha t) - t / \sigma^2 \cdot I_0(\alpha t) & = 0 & \\
  \frac{t}{\sigma^2 \alpha} \frac{I_0(\alpha t)}{I_1(\alpha t)} &= 1
              & \text{or} \quad t = 0\\
  \nu(\alpha t) = \alpha t \frac{I_o(\alpha t)}{I_1(\alpha t)}
        & = \sigma^2 \alpha ^2 = \frac{r^2}{\sigma^2} & \text{or} \quad t
                                                        = 0.\\
\end{align*}
By Lemma \ref{lem:Bessel} we have $t = 0$ if $r / \sigma \leq \sqrt{2}$ and
$t > 0$ if $r / \sigma > \sqrt{2}$. The solution is unique and can be
calculated by bisection.

\subsection{Matrix Analysis}
\label{sec:matrix-analysis}

We state some notation and two results from matrix analysis which will be
used later on. References are \citet{Bhatia1997} and \citet{YuETAL2015}.


\begin{thm}[Weyl's Inequality; Theorem III.2.1 in \citet{Bhatia1997}]
  \label{thm:weyl}
  Let $\A, \B$ be symmetric $d \times d$ matrices. Then,
  \begin{align*}
    \lambda_j(\A + \B) & \geq \lambda_i(\A) + \lambda_{j-i+1}(\B) \quad
                         \text{for } 1 \leq i \leq j,\\
    \lambda_j(\A + \B) & \leq \lambda_i(\A) + \lambda_{j-i+d}(\B) \quad
                         \text{for } j \leq i \leq d.
  \end{align*}
  Consequently, for each $1 \leq j \leq d$,
  \begin{displaymath}
    \lambda_j(\A) + \lambda_d(\B) \leq \lambda_j(\A + \B) \leq \lambda_j(\A)
    + \lambda_1(\B).
  \end{displaymath}
\end{thm}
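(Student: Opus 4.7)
The plan is to deduce both inequalities from the Courant--Fischer minimax characterization of eigenvalues, combined with a dimension count on the intersection of two carefully chosen subspaces. For any symmetric $\M \in \R^{d\times d}$ with $\lambda_1(\M) \ge \cdots \ge \lambda_d(\M)$, the two dual formulas
\begin{displaymath}
  \lambda_k(\M) = \max_{\dim S = k} \ \min_{v \in S \setminus \{0\}} \frac{v^\top \M v}{v^\top v} = \min_{\dim S = d-k+1} \ \max_{v \in S \setminus \{0\}} \frac{v^\top \M v}{v^\top v}
\end{displaymath}
hold because the extrema are attained on the spans of the top $k$ and bottom $d-k+1$ eigenvectors of $\M$, respectively.

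For the first inequality I would fix orthonormal eigenbases $u_1, \ldots, u_d$ of $\A$ and $w_1, \ldots, w_d$ of $\B$, in each case ordered by decreasing eigenvalue. On $S_\A := \mathrm{span}(u_1, \ldots, u_i)$ the Rayleigh quotient of $\A$ is bounded below by $\lambda_i(\A)$, and on $S_\B := \mathrm{span}(w_1, \ldots, w_{j-i+1})$ the Rayleigh quotient of $\B$ is bounded below by $\lambda_{j-i+1}(\B)$. Pairing $S_\A$ and $S_\B$ with the form of Courant--Fischer for $\lambda_j(\A+\B)$ appropriate to the direction of the asserted inequality and invoking the dimension count $\dim(U \cap V) \ge \dim U + \dim V - d$ produces a nonzero vector at which both Rayleigh quotient bounds apply simultaneously; summing them via $v^\top(\A+\B)v = v^\top \A v + v^\top \B v$ yields the stated inequality for $\A+\B$. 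The second inequality is obtained by the same template with the roles of the top and bottom eigenvectors exchanged, or equivalently by applying the first inequality to $-\A$ and $-\B$ and using the identity $\lambda_k(-\M) = -\lambda_{d-k+1}(\M)$.

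The two consequences displayed at the end then drop out by specialization: setting $i = j$ in the first inequality gives the bound involving $\lambda_{j-j+1}(\B) = \lambda_1(\B)$, and setting $i = j$ in the second gives the bound involving $\lambda_{j-j+d}(\B) = \lambda_d(\B)$.

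The only genuine step is the bookkeeping around the dimension count: arranging the three subspaces — one associated to each of $\A$, $\B$, $\A+\B$ — so that their common intersection is provably nontrivial, and selecting the correct max-min versus min-max form of Courant--Fischer for each term so that the Rayleigh quotient estimates combine with the right sign. Once this accounting is settled, the rest of the argument reduces to the elementary fact that the Rayleigh quotient $v^\top \M v / v^\top v$ is bounded above by $\lambda_i(\M)$ on the span of the bottom $d-i+1$ eigenvectors of $\M$ and below by $\lambda_i(\M)$ on the span of the top $i$ eigenvectors.
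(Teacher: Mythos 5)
Your overall route---Courant--Fischer plus a dimension count on intersections of eigenspaces---is the standard proof of Weyl's inequality, and since the paper only cites Bhatia's Theorem III.2.1 without proof, it is the right thing to supply. But your sketch defers exactly the step that constitutes the proof ("once this accounting is settled\dots"), and the one concrete choice of subspaces you do commit to does not work. With $S_\A = \mathrm{span}(u_1,\dots,u_i)$ of dimension $i$ and $S_\B = \mathrm{span}(w_1,\dots,w_{j-i+1})$ of dimension $j-i+1$, the bound $\dim(S_\A\cap S_\B) \ge i + (j-i+1) - d = j+1-d$ is nonpositive unless $j=d$, so no nonzero common vector is produced; bringing in the third subspace attached to $\A+\B$ only makes the count worse, since three subspaces of dimensions $n_1,n_2,n_3$ are guaranteed a nonzero common vector only when $n_1+n_2+n_3 \ge 2d+1$, whereas here $i+(j-i+1)+(d-j+1)=d+2$. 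The correct accounting is: for the bound $\lambda_j(\A+\B) \le \lambda_i(\A)+\lambda_{j-i+1}(\B)$ ($i\le j$), intersect the span of the \emph{bottom} $d-i+1$ eigenvectors of $\A$, the bottom $d-j+i$ eigenvectors of $\B$, and the top $j$ eigenvectors of $\A+\B$ (dimensions sum to $2d+1$, and the Rayleigh quotient of $\A+\B$ on the last subspace is at least $\lambda_j(\A+\B)$); for the bound $\lambda_j(\A+\B) \ge \lambda_i(\A)+\lambda_{j-i+d}(\B)$ ($j\le i$), intersect the top $i$ eigenvectors of $\A$, the top $j-i+d$ eigenvectors of $\B$, and the bottom $d-j+1$ eigenvectors of $\A+\B$. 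In short, lower Rayleigh-quotient bounds on \emph{top} eigenspaces pair with the index $j-i+d$, not with $j-i+1$; your assignment of the top-eigenvector subspaces to the $\lambda_{j-i+1}(\B)$ inequality is precisely the mismatch that the dimension count rejects.

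A second point you should have caught: as printed, the two displayed inequalities have their $\ge$ and $\le$ interchanged relative to Bhatia's Theorem III.2.1 and are inconsistent with the final display. Setting $i=j=1$ in the first line gives $\lambda_1(\A+\B) \ge \lambda_1(\A)+\lambda_1(\B)$, which fails for $d=2$, $\A=\diag(1,0)$, $\B=\diag(0,1)$ and contradicts the stated consequence $\lambda_1(\A+\B)\le\lambda_1(\A)+\lambda_1(\B)$. Your specialization at $i=j$ silently uses the corrected directions, which is indeed what the argument proves once the bookkeeping above is fixed---but a complete review of the statement should have flagged the sign error rather than worked around it.
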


\begin{thm}[Davis-Kahan sin$\Theta$ Theorem: Theorem VII.3.4 in
  \citet{Bhatia1997}, Theorem 1 in \citet{YuETAL2015}]
  \label{thm:davis-kahan}
  Let $\A, \B$ be symmetric $d \times d$ matrices and $1 \leq s < d$ such
  that $\delta \defeq \lambda_s(\A) - \lambda_{s+1}(\B) > 0$. Then,
  \begin{multline*}
    \frac{\lVert \V_{\!\!\perp}(\A) \V_{\!\!\perp}(\A)^\top -
      \V_{\!\!\perp}(\B) \V_{\!\!\perp}(\B)^\top \rVert_F}{\sqrt{2}} =
    \lVert \sin\Theta( \V_{\!\!\perp}(\A), \V_{\!\!\perp}(\B)) \rVert_F\\ =
    \lVert \V_{\!\!\perp}(\A) \V_{\!\!\perp}(\A)^\top \Vp(\B) \Vp(\B)^\top
    \rVert_F \leq \frac{\lVert \A - \B\rVert_F}{\delta},
  \end{multline*}
  where
  $\Theta(\V_{\!\!\perp}(\A), \V_{\!\!\perp}(\B)) \in \R^{d \times d}$ is a
  diagonal matrix with the vector
  $(\cos^{-1}(\sigma_1), \ldots, \cos^{-1}(\sigma_d))^\top$, with
  $\sigma_1 \geq \cdots \geq \sigma_d \geq 0$ being the singular values of
  $\V_{\!\!\perp}(\A)^\top \V_{\!\!\perp}(\B)$, on the diagonal. The
  function $\sin\Theta(\V_{\!\!\perp}(\A), \V_{\!\!\perp}(\B))$ is defined
  entry-wise; see \citet{YuETAL2015}.
\end{thm}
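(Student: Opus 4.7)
My plan is to split the statement into two pieces: (i) a purely geometric chain of equalities between $\lVert \Vo(\A)\Vo(\A)^\top - \Vo(\B)\Vo(\B)^\top \rVert_F/\sqrt 2$, $\lVert \sin\Theta(\Vo(\A),\Vo(\B)) \rVert_F$ and $\lVert \Vo(\A)\Vo(\A)^\top \Vp(\B)\Vp(\B)^\top \rVert_F$, which does not involve $\A - \B$ at all, and (ii) the perturbation bound by $\lVert \A - \B \rVert_F / \delta$, which rests on a Sylvester-type identity together with the gap assumption.

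For (i), I introduce the projections $\Pi_\perp^\M := \Vo(\M)\Vo(\M)^\top$ and $\Pi_\parallel^\M := \Vp(\M)\Vp(\M)^\top$ for $\M \in \{\A,\B\}$ and exploit $\Pi_\perp^\M + \Pi_\parallel^\M = \I_d$, $(\Pi_\perp^\M)^2 = \Pi_\perp^\M$, $\tr(\Pi_\perp^\M) = d - s$ together with the cyclic property of the trace to obtain
\begin{displaymath}
  \lVert \Pi_\perp^\A - \Pi_\perp^\B \rVert_F^2 = 2(d - s) - 2\tr(\Pi_\perp^\A \Pi_\perp^\B), \qquad \lVert \Pi_\perp^\A \Pi_\parallel^\B \rVert_F^2 = (d - s) - \tr(\Pi_\perp^\A \Pi_\perp^\B),
\end{displaymath}
which supplies the factor $\sqrt 2$. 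To identify the right-hand side with $\lVert \sin\Theta \rVert_F^2$, I apply the CS decomposition to the orthogonal matrix $[\Vp(\A),\Vo(\A)]^\top [\Vp(\B),\Vo(\B)]$: the cosines of the principal angles $\theta_i$ appear as the singular values of $\Vo(\A)^\top \Vo(\B)$ (padded with $1$'s), while the sines appear as the singular values of the cross block $\Vo(\A)^\top \Vp(\B)$ (padded with $0$'s), so that $\lVert \Pi_\perp^\A \Pi_\parallel^\B \rVert_F^2 = \lVert \Vo(\A)^\top \Vp(\B) \rVert_F^2 = \sum_i \sin^2 \theta_i$.

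For (ii), the symmetry of principal angles under orthogonal complementation gives $\lVert \Vo(\A)^\top \Vp(\B) \rVert_F = \lVert \Vp(\A)^\top \Vo(\B) \rVert_F$, so it suffices to bound the latter. From $\A\Vp(\A) = \Vp(\A)\,\diag(\lambda_1(\A),\ldots,\lambda_s(\A))$ and $\B\Vo(\B) = \Vo(\B)\,\diag(\lambda_{s+1}(\B),\ldots,\lambda_d(\B))$, I compute $\Vp(\A)^\top \A \Vo(\B)$ and $\Vp(\A)^\top \B \Vo(\B)$ two ways and subtract to get the Sylvester-type identity
\begin{displaymath}
  \Vp(\A)^\top \Vo(\B)\,\diag\bigl(\lambda_{s+1}(\B),\ldots,\lambda_d(\B)\bigr) - \diag\bigl(\lambda_1(\A),\ldots,\lambda_s(\A)\bigr)\,\Vp(\A)^\top \Vo(\B) = \Vp(\A)^\top(\B - \A)\Vo(\B).
\end{displaymath}
Since the outer factors on the left are diagonal, entry $(i,j)$ of the left-hand side equals $(\lambda_{s+j}(\B) - \lambda_i(\A))\,[\Vp(\A)^\top \Vo(\B)]_{ij}$, and the gap assumption forces $\lambda_i(\A) - \lambda_{s+j}(\B) \geq \lambda_s(\A) - \lambda_{s+1}(\B) = \delta > 0$ uniformly in $(i,j)$. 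Dividing entry-wise, taking Frobenius norms, and using that multiplication by matrices with orthonormal columns or rows is non-expansive in Frobenius norm then yields $\lVert \Vp(\A)^\top \Vo(\B) \rVert_F \leq \lVert \A - \B \rVert_F / \delta$.

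The main obstacle is the careful bookkeeping in part (i): $\Vo(\A)^\top \Vo(\B)$ is $(d-s)\times(d-s)$ while the non-trivial principal angles between the two subspaces number only $\min(s,d-s)$, so one has to track which singular values of $\Vo(\A)^\top \Vo(\B)$ are forced to be $1$ (contributing $0$ to $\sum_i \sin^2\theta_i$) and which singular values of $\Vo(\A)^\top \Vp(\B)$ are forced to be $0$, and in addition one has to justify the invariance of the sines under orthogonal complementation used at the start of (ii). Once this is in place, (ii) is essentially mechanical: the gap assumption $\delta>0$ enters only in the uniform lower bound $|\lambda_{s+j}(\B) - \lambda_i(\A)| \geq \delta$, which ensures the denominators in the entrywise solution of the Sylvester equation never degenerate.
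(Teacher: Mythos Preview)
Your argument is correct. Note, however, that the paper does not actually prove this theorem: it is stated as an auxiliary result cited from \citet{Bhatia1997} (Theorem~VII.3.4) and \citet{YuETAL2015} (Theorem~1), with only the one-line remark that ``the first two equalities follow from the definition of the angle operator and Exercise~VII.1.11 in \citet{Bhatia1997}.'' So there is no paper-proof to compare against in the usual sense.

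That said, your self-contained argument is a clean reconstruction of the classical proof. The equalities in part~(i) are exactly what Exercise~VII.1.11 in Bhatia asks one to verify, and your trace computations together with the CS decomposition do this correctly; in particular the identity $\lVert \Vo(\A)\Vo(\A)^\top \Vp(\B)\Vp(\B)^\top \rVert_F = \lVert \Vo(\A)^\top \Vp(\B) \rVert_F$ and the complementation symmetry $\lVert \Vo(\A)^\top \Vp(\B) \rVert_F = \lVert \Vp(\A)^\top \Vo(\B) \rVert_F$ both drop out of the projection calculus you set up, so you need not invoke the CS decomposition separately for the latter. Part~(ii) via the Sylvester identity and the entrywise gap bound is the standard route to the Davis--Kahan inequality in Frobenius norm and is carried out correctly; the key observation that $\lambda_i(\A) - \lambda_{s+j}(\B) \ge \delta$ for all $1 \le i \le s$, $1 \le j \le d-s$ is exactly where the hypothesis $\delta = \lambda_s(\A) - \lambda_{s+1}(\B) > 0$ is used. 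Your proposal thus supplies a full proof where the paper is content to cite the literature.
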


The first two equalities follow from the definition of the \emph{angle
  operator} and Exercise VII.1.11 in \citet{Bhatia1997}.

\section{Proofs}
\label{sec:proofs}
In the following proofs we will suppress the argument $\x$.

\subsection{Ridges}
\label{app:ridges}

\begin{proof} [\textnormal{\textbf{Proof of Theorem \ref{thm:hessian_variance}}}]
  We write \begin{displaymath} \bs{\Sigma}_h = \tilde{\bs{\Sigma}}_h +
    \M_h,
  \end{displaymath}
  where
  \begin{displaymath}
    \tilde{\bs{\Sigma}}_h \defeq h^2 \I_d + h^4 D^2\ell =
    \V(D^2\ell)\bigl( h^2 \I_d + h^4 \bLambda(D^2\ell) \bigr)
    \V(D^2\ell)^\top
  \end{displaymath}
  and
  \begin{displaymath}
    \M_h \defeq \bs{\Sigma}_h - \tilde{\bs{\Sigma}}_h \quad \text{with}
    \quad \lVert \M_h \rVert_F = \lVert \bLambda(\M_h) \rVert_F = \o(h^4),
  \end{displaymath}
  by Lemma \ref{lem:conditional-distribution}. It is
  \begin{displaymath}
    \bLambda(\tilde{\bs{\Sigma}}_h) = h^2 \I_d + h^4 \bLambda(D^2\ell)
    \quad \text{and} \quad \V(\tilde{\bs{\Sigma}}_h) = \V(D^2\ell).
  \end{displaymath}
  By Weyl's inequality is
  \begin{displaymath}
    \lambda_s(\bs{\Sigma}_h) - \lambda_{s+1}(\tilde{\bs{\Sigma}}_h) \geq
    \lambda_s(\tilde{\bs{\Sigma}}_h) + \lambda_d(\M_h) -
    \lambda_{s+1}(\tilde{\bs{\Sigma}}_h) = h^4 \delta + \lambda_d(\M_h) \geq
    h^4 2^{-1} \delta
  \end{displaymath}
  for $h$ sufficiently small to achieve
  $\lambda_d(\M_h) \ge -h^4 2^{-1} \delta$, where we used
  \begin{displaymath}
    \delta = \lambda_s(D^2\ell) - \lambda_{s+1}(D^2\ell) =
    h^{-4}\bigl(\lambda_s(\tilde{\bs{\Sigma}}_h) -
    \lambda_{s+1}(\tilde{\bs{\Sigma}}_h)\bigr).
  \end{displaymath}
  By the Davis-Kahan $\sin\Theta$ Theorem is then
  \begin{displaymath}
    \text{dist}\bigl( \Vo(\bs{\Sigma}_h), D^2\ell \bigr) \leq \frac{\lVert 
      \bs{\Sigma}_h - \tilde{\bs{\Sigma}}_h \rVert_F}{h^4 \delta / \sqrt{2}} =
    \frac{\sqrt{2} \lVert \M_h \rVert_F}{h^4 \delta} \rightarrow 0 \quad \text{as
    } h \rightarrow 0.
  \end{displaymath}
\end{proof}

\begin{proof}[\textnormal{\textbf{Proof of Theorem \ref{thm:hessian_variance2}}}]
  Let $\Vo \defeq \Vo(D^2f(\x))$ and $\V \defeq \V(D^2f(\x))$. The gradient
  and Hessian matrix of $\z'' \mapsto \ell(\x + \Vo z'')$ are
  \begin{displaymath}
    D\bigl(\ell( \x + \Vo \z'')\bigr) = \Vo^\top D\ell(\x + \Vo \z'')
  \end{displaymath}
  and
  \begin{displaymath}
    D^2\bigl(\ell(\x + \Vo \z'')\bigr) = \Vo^\top D^2\ell(\x + \Vo \z'') \Vo,
  \end{displaymath}
  respectively, with
  \begin{displaymath}
    D\ell(\y) = f(\y)^{-1} Df(\y) \quad \text{and} \quad D^2\ell(\y) =
    f(\y)^{-1}D^2f(\y) - f(\y)^{-2} Df(\y) Df(\y)^\top.
  \end{displaymath}
  The matrix $D^2\bigl(\ell(\x + \Vo \z'')\bigr)$ is negative definite,
  whenever $\u^\top D^2f(\x+ \Vo \z'')\u < 0$ for any unit vector in the
  column space of $\Vo$. We have
  \begin{align*}
    \u^\top D^2f(\x + \Vo \z'') \u
    & = \u^\top D^2f(\x) \u +  \u^\top\bigl( 
      D^2f(\x + \Vo \z'') - D^2f(\x)\bigr) \u\\
    & \leq \u^\top \V \bLambda(D^2f(\x)) \V^\top \u + \u^\top\bigl( 
      D^2f(\x + \Vo \z'') - D^2f(\x)\bigr) \u\\
    & \leq \u^\top \Vo \diag\bigl( \lambda_{s+1}(D^2f(\x)), \ldots,
      \lambda_d(D^2f(\x)) \bigr) \Vo^\top \u\\
    & \qquad + \bigl\lvert \u^\top \bigl( D^2f(\x
      + \Vo \z'')- D^2f(\x) \bigr) \u \bigr\rvert\\
    & \leq \sum_{j = s + 1}^d \lambda_{j}(D^2f(\x)) \bigl( \u^\top
      \v_j(D^2f(\x)) \bigr)^2\\
    & \qquad+ \sup_{\v \in \mathcal{S}^{d-1}} \bigl\lvert \u^\top \bigl( D^2f(\x
      + \Vo \z'')- D^2f(\x) \bigr) \u \bigr\rvert\\
    & \leq \lambda_{s+1}(D^2f(\x)) + \bigl\lVert D^2f(\x + \Vo \z'') -
      D^2f(\x) \bigr\rVert_F,
  \end{align*}
  where we used $\Vo \Vo^\top \u = \u$ and the fact, that
  $\sup_{\v \in \mathbb{S}^{d-1}} \lvert \v^\top \A \v \rvert \leq \lVert
  \A \rVert_F$ for any symmetric matrix $\A \in \R^{d \times d}$. By
  continuity of $D^2f$, there exists $\varepsilon > 0$ such that
  $\bigl\lVert D^2f(\x + \Vo \z'') - D^2f(\x) \bigr\rVert_F < \lvert
  \lambda_{s+1}(D^2f(\x)) \rvert$ for any $\z'' \in \R^{d-s}$ with
  $\lVert \z'' \rVert < \varepsilon$ and so
  $\u^\top D^2f(\x + \Vo \z'') \u < 0$ , because
  $\lambda_{s+1}(D^2f(\x)) < 0$.

  The function $\z'' \mapsto f(\x + \Vo \z'')$ has a mode at
  $\bs{0}_{d-s}$, because
  \begin{displaymath}
    D\bigl(\ell(\x + \Vo \z'')\bigr)\Big\vert_{\z'' = \ \bs{0}} = \Vo^\top
    D\ell(\x) =   f(\x)^{-1} \Vo^\top Df(\x) = 0.
  \end{displaymath}
\end{proof}

\begin{proof}[\textnormal{\textbf{Proof of Theorem
      \ref{thm:hessian_variance3}}}]
  By Theorem \ref{thm:hessian_variance2} we know, that the function
  $t \mapsto f(\x + t\u)$ has a mode at $t = 0$ for any $\u$ in the column
  space of $\Vo(D^2\ell(\x))$. Therefore, the directional derivatives in
  directions $\v_j \defeq \v_j(D^2\ell(\x))$ for $s+1 \leq j \leq d$ are
  equal to zero:
  \begin{displaymath}
    0 = \frac{d}{dt}\Big\vert_{t = 0} \ell(\x + t \v_j(D^2\ell(\x))) =
    \v_j(D^2\ell(\x))^\top D\ell(\x),
  \end{displaymath}
  whence $\Vo(D^2\ell(\x))^\top D\ell(\x) = \bs{0}$. Furthermore, the
  second directional derivatives are strictly negative. So,
  \begin{displaymath}
    0 > \frac{d^2}{dt^2} \ell(\x + t\v_j) \Big\vert_{t = 0} = \v_j^\top
    D^2\ell(\x) \v_j = \sum_{i=1}^d \lambda_i(D^2\ell(\x)) \v_j^\top \v_i
    \v_i^\top \v_j = \lambda_j(D^2\ell(\x))
  \end{displaymath}
  for $s + 1 \leq j \leq d$
\end{proof}

\begin{proof}[\textnormal{\textbf{Proof of Theorem \ref{thm:projected-density}}}]
  We write $K = e^\psi$ for some $\psi \in \cC^2(\supp(K))$ and
  $\V \defeq \V(\Sigmahx)$, then the logarithm of the integrand of $g_h$ is
  \begin{displaymath}
    \log\bigl( K(\z) f(\x + \V \z) \bigr) = \psi(\z) + \ell(\x + \V \z)
  \end{displaymath}
  and Hessian matrix
  \begin{displaymath}
    D^2\psi(\z) + h^2 \V^\top D^2\ell(\x + \V\z)\V.
  \end{displaymath}
  Because $D^2\ell$ is bounded by assumption, there exists $h_o > 0$ such
  that the Hessian matrix is negative definite for all $0 < h < h_o$.
\end{proof}

\subsection{Algorithms}
\label{app:algorithm}
In the following, we will use local moments defined as
\begin{displaymath}
  s_n^{\balpha}(\x) \defeq \frac{1}{n} \sum_{i=1}^n K_h(h^{-1}(\X_i -
  \x)) h^{-\lvert \balpha \rvert}(\X_i - \x)^{\alpha} \quad \text{for }
  \balpha \in \N_0^d \quad \text{and} \quad s_n(\x) \defeq s_n^{\bs{0}}(\x).
\end{displaymath}
\begin{proof}[\textnormal{\textbf{Proof of Theorem \ref{thm:Lipschitz}}}]
  Let $\tilde{L}, m > 0$ and $M > 1$ be the constants as given in Lemma
  \ref{lem:lipschitz-1} and \ref{lem:lipschitz-2} stated below. Then,
  \begin{align*}
    \lVert \hat{\bs{\Sigma}}_{n,h}(\x) - \hat{\bs{\Sigma}}_{n,h}(\y) \rVert_F^2
    & = \sum_{1 \leq i,j \leq d} \left( \frac{s_n^{\e_i +
      \e_j}(\x)}{s_n(\x)} - \frac{s_n^{\e_i +
      \e_j}(\y)}{s_n(\y)} + \frac{s_n^{\e_i}(\x)
      s_n^{\e_j}(\x)}{s_n(\y)^2} - \frac{s_n^{\e_i}(\y)
      s_n^{\e_j}(\x)}{s_n(\y)^2} \right)^2\\
    & \leq 2 \sum_{1 \leq i,j \leq d} \left( \Bigl( \frac{s_n^{\e_i +
      \e_j}(\x)}{s_n(\x)} - \frac{s_n^{\e_i +
      \e_j}(\y)}{s_n(\y)}  \Bigr)^2 + \Bigl( \frac{s_n^{\e_i}(\x)
      s_n^{\e_j}(\x)}{s_n(\x)^2} - \frac{s_n^{\e_i}(\y)
      s_n^{\e_j}(\y)}{s_n(\y)^2} \Bigr)^2 \right)\\
    & \leq 2 \sum_{1 \leq i,j \leq d} \left( \Bigl( \frac{2 M\tilde{L}}{m^2}
      \Bigr)^2 \lVert \x - \y \rVert^2 + \Bigl( \frac{4 M^2 \tilde{L}}{m^2}
      \Bigr)^2 \lVert \x - \y \rVert^2 \right)\\
    & \leq 2 d^2 \Bigl( \frac{4M^2 \tilde{L}}{m^2} \Bigr)^2 \lVert \x - \y
      \rVert^2, 
  \end{align*}

  This shows,
  \begin{displaymath}
    \lVert \hat{\bs{\Sigma}}_{n,h}(\x) - \hat{\bs{\Sigma}}_{n,h}(\y) \rVert_F \leq
    \frac{4dM\sqrt{\tilde{L}}}{m^2} \lVert \x - \y \rVert.
  \end{displaymath}
\end{proof}

\begin{lem} \label{lem:lipschitz-1} For a sample
  $\mathcal{X} = \{\X_1, \ldots, \X_n\}$, fixed $h$ and a Kernel $K$ with
  $\sup_{\lvert \bgamma \rvert = 1}\lVert K^{(\bgamma)} \rVert_\infty <
  \infty$ , the local sample moments up to order 2 are Lipschitz continuous
  on the convex hull of the sample, i.e. there exists $\tilde{L} > 0$, such
  that
  \begin{displaymath}
    \lvert s_n^{\balpha}(\x) - s_n^{\balpha}(\y) \rvert \leq \tilde{L}
    \lVert \x - \y 
    \rVert \quad \text{for} \ \x, \y \in \conv(\X_1, \ldots, \X_n) \quad
    \text{for all } \balpha \in \N_0^d \text{ with } \lvert \balpha \rvert
    \leq 2
  \end{displaymath}
\end{lem}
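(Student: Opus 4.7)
The plan is to show that, for each $i$ and each $\balpha$ with $\lvert \balpha \rvert \leq 2$, the summand
\[ f_{i,\balpha}(\x) \defeq K_h(\X_i - \x) \cdot h^{-\lvert \balpha \rvert}(\X_i - \x)^{\balpha} \]
is Lipschitz continuous on the convex hull $C \defeq \conv(\X_1, \ldots, \X_n)$ with a constant independent of $i$. Averaging over $i$ then yields the same Lipschitz constant for $s_n^{\balpha}$, and taking the maximum over the finitely many multi-indices $\balpha$ with $\lvert \balpha \rvert \leq 2$ produces the single uniform constant $\tilde L$ claimed in the statement.

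The first step is to exploit that $C$ is compact and its diameter $D \defeq \max_{1 \leq i,j \leq n} \lVert \X_i - \X_j \rVert$ is finite. This gives uniform pointwise bounds on the two factors of $f_{i,\balpha}$ on $C$: $\lvert K_h(\X_i - \x) \rvert \leq h^{-d} \lVert K \rVert_\infty$ and $\lvert h^{-\lvert \balpha \rvert}(\X_i - \x)^{\balpha} \rvert \leq (D/h)^{\lvert \balpha \rvert}$. Next I would compute the Lipschitz constants of the two factors separately. By the chain rule $\nabla_\x K_h(\X_i - \x) = -h^{-d-1}(\nabla K)((\X_i - \x)/h)$, which is bounded in norm by $C_K \defeq h^{-d-1} \sqrt{d}\, \sup_{\lvert \bgamma \rvert = 1} \lVert K^{(\bgamma)} \rVert_\infty$ thanks to the hypothesis on the first partial derivatives of $K$. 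The polynomial $\x \mapsto h^{-\lvert \balpha \rvert}(\X_i - \x)^{\balpha}$ is smooth with gradient uniformly bounded on $C$ by $C_P \defeq \lvert \balpha \rvert D^{\lvert \balpha \rvert - 1}/h^{\lvert \balpha \rvert}$ (with the convention that this bound is $0$ when $\balpha = \bs{0}$).

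The final step is the elementary product estimate
\[ \lvert a_1 b_1 - a_2 b_2 \rvert \leq \lvert a_1 - a_2 \rvert \max\bigl(\lvert b_1 \rvert, \lvert b_2 \rvert\bigr) + \lvert b_1 - b_2 \rvert \max\bigl(\lvert a_1 \rvert, \lvert a_2 \rvert\bigr). \]
Combining the four bounds above gives
\[ \lvert f_{i,\balpha}(\x) - f_{i,\balpha}(\y) \rvert \leq L_\balpha \lVert \x - \y \rVert \]
on $C$, with $L_\balpha \defeq C_K (D/h)^{\lvert \balpha \rvert} + C_P h^{-d} \lVert K \rVert_\infty$. Since this bound is uniform in $i$, averaging preserves it, and setting $\tilde L \defeq \max\{ L_\balpha : \lvert \balpha \rvert \leq 2 \}$ finishes the argument.

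There is no serious obstacle here; the proof is entirely bookkeeping around the product rule and the finite diameter of the fixed sample. The one subtlety worth flagging is that the constant $\tilde L$ depends badly on $h$ (through the factors $h^{-d-1}$ and $h^{-\lvert \balpha \rvert}$), so the statement is useful only for fixed $h$, which is exactly how it is invoked in the proof of Theorem \ref{thm:Lipschitz}.
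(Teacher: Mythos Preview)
Your proof is correct and follows essentially the same strategy as the paper: both split each summand via the product decomposition $a_1b_1 - a_2b_2 = (a_1-a_2)b_1 + a_2(b_1-b_2)$, bound the kernel difference using the hypothesis on $\sup_{\lvert\bgamma\rvert=1}\lVert K^{(\bgamma)}\rVert_\infty$, and bound the polynomial difference using the finite diameter of $\conv(\mathcal X)$. The only cosmetic difference is that the paper writes out the polynomial difference explicitly via the multinomial expansion of $(\X_i-\x+\x-\y)^{\balpha}$, whereas you use a gradient bound; both arrive at the same Lipschitz constant up to harmless numerical factors.
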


\begin{proof}
  It is
  \begin{align}
    s_n^{\balpha}(\y)
    & = \frac{1}{n} \sum_{i=1}^n K_h(\X_i - \y)(\X_i - \x + \x -
      \y)^\balpha  \nonumber\\ 
    & = \frac{1}{n} \sum_{i=1}^n K_h(\X_i - \y) \sum_{\bgamma \leq \balpha}
      \binom{\balpha}{\bgamma} (\x - \y)^\bgamma (\X_i - \x)^{\balpha -
      \bgamma}\nonumber\\
    & = s_n^\balpha(\x)  + \frac{1}{n} \sum_{i=1}^n K_h(\X_i - \y)
      \sum_{\substack{\lvert  \bgamma \rvert \geq 1\\ \bgamma \leq \balpha}}
    \binom{\balpha}{\bgamma} (\x - \y)^\bgamma (\X_i - \x)^{\balpha -
    \bgamma} \label{eq:lipschitz-moment-1}\\ 
    & \quad + \frac{1}{n} \sum_{i=1}^n (\X_i - \x)^\balpha \bigl( K_h(\X_i
      - \y) - K_h(\X_i - \x) \bigr), \label{eq:lipschitz-moment-2}
  \end{align}
  where
  \begin{displaymath}
    \binom{\balpha}{\bgamma} \defeq \frac{\balpha!}{\bgamma!(\balpha - \bgamma)!}
  \end{displaymath}
  and $\bgamma \leq \balpha$ is to be understood component-wise.  The
  absolute value of \eqref{eq:lipschitz-moment-1} can then be bounded by
  \begin{align*}
    \lVert K \rVert_\infty \lVert \x - \y \rVert\sum_{\substack{\lvert \bgamma
    \rvert \geq 1\\ 
    \bgamma \leq \balpha}} \binom{\balpha}{\bgamma} \diam(\mathcal{X})^{\lvert
    \bgamma \rvert - 1} \leq 3 \lVert K \rVert_\infty \max(\diam(\mathcal{X}),
    1)  \lVert \x - \y \rVert,
  \end{align*}
  because
  \begin{displaymath}
    \sum_{\substack{\lvert \bgamma \rvert \geq 1\\ \bgamma \leq \balpha}}
    \binom{\balpha}{\bgamma} = 
    \begin{cases}
      1 & \text{for } \lvert \balpha \rvert = 1,\\
      3  & \text{for } \lvert \balpha \rvert = 2.\\
    \end{cases}
  \end{displaymath}

  The absolute value of \eqref{eq:lipschitz-moment-2} is bounded by
  \begin{multline*}
    \max(\diam(\mathcal{X}), 1)^{\lvert \balpha \rvert} h^{-d}\left\lvert
      K\bigl( h^{-1}(\X_i - \y) \bigr) - K\bigl( h^{-1}(\X_i - \x) \bigr)
    \right\rvert\\
    \leq \max(\diam(\mathcal{X}), 1)^{\lvert \balpha \rvert} h^{-d-1}
    \sup_{\lvert \bgamma \rvert = 1}\lVert K^{(\bgamma)} \rVert_\infty
    \lVert \x - \y \rVert.
  \end{multline*}

  Hence,
  \begin{displaymath}
    \lvert s_n^\balpha(\y) - s_n^\balpha(\x) \rvert \leq 
    \max(\diam(\mathcal{X}), 1)^2 \sup_{\lvert \bgamma \rvert = 1}\lVert
    K^{(\bgamma)} \rVert_\infty  (3 + h^{-2})
    \lVert \y - \x \rVert,
  \end{displaymath}
  for any $\balpha$ with $0 \leq \lvert \balpha \rvert \le 2$.
\end{proof}

\begin{lem} \label{lem:lipschitz-2} The quotient
  $s_n^\balpha(\x) / s_n(\x)$ is Lipschitz continuous, whenever there
  exists $0 < \tau \le s_n(\x)$ for $\x \in \conv(\X_1, \ldots, \X_n)$.
\end{lem}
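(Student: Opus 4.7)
The plan is to apply the standard add-and-subtract identity for quotients,
\[
\frac{s_n^{\balpha}(\x)}{s_n(\x)} - \frac{s_n^{\balpha}(\y)}{s_n(\y)}
= \frac{s_n^{\balpha}(\x) - s_n^{\balpha}(\y)}{s_n(\x)}
+ s_n^{\balpha}(\y)\,\frac{s_n(\y) - s_n(\x)}{s_n(\x)\,s_n(\y)},
\]
and to bound the two resulting summands using Lemma \ref{lem:lipschitz-1} together with the hypothesis $s_n \geq \tau > 0$ on $\conv(\X_1,\ldots,\X_n)$.

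For the first summand, Lemma \ref{lem:lipschitz-1} applied to $\balpha$ yields $\lvert s_n^{\balpha}(\x) - s_n^{\balpha}(\y)\rvert \leq \tilde L \lVert \x - \y \rVert$, while $s_n(\x) \geq \tau$ handles the denominator. For the second summand, the same lemma applied with $\balpha = \bs{0}$ gives $\lvert s_n(\x) - s_n(\y)\rvert \leq \tilde L \lVert \x - \y \rVert$, and the denominator is at least $\tau^{2}$. What remains is to produce a uniform upper bound $M$ on $\lvert s_n^{\balpha}(\y)\rvert$ for $\y \in \conv(\X_1,\ldots,\X_n)$; this is immediate, since $\lVert \X_i - \y\rVert \leq \diam(\mathcal{X})$ on the convex hull and $K$ is bounded, giving the crude bound $\lvert s_n^{\balpha}(\y)\rvert \leq h^{-d}\lVert K \rVert_\infty \max(\diam(\mathcal{X}),1)^{\lvert \balpha \rvert} =: M$, which is finite since the sample is finite.

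Putting the pieces together, the difference of the two quotients is bounded by $\bigl(\tilde L / \tau + M \tilde L / \tau^{2}\bigr)\lVert \x - \y \rVert$, which matches exactly the shape $2 M \tilde L / m^{2}$ used later in the proof of Theorem \ref{thm:Lipschitz} (set $m \defeq \tau$ and enlarge $M$, if necessary, so that $m \leq M$ and hence $\tilde L / m + M\tilde L / m^{2} \leq 2 M \tilde L / m^{2}$). There is no real obstacle: the entire argument is a routine consequence of Lemma \ref{lem:lipschitz-1} and the positivity hypothesis $s_n \geq \tau$, which is precisely what prevents the denominators from degenerating on the convex hull of the sample.
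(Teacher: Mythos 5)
Your argument is essentially the paper's own proof: both reduce the quotient difference, via an add-and-subtract step, to the Lipschitz bounds of Lemma \ref{lem:lipschitz-1} for $s_n^{\balpha}$ and $s_n$, a lower bound $\tau$ (the paper's $m$) on the denominator, and a finite uniform bound $M$ on the local moments over $\conv(\X_1,\ldots,\X_n)$, differing only in cosmetic details (the paper defines $M$ as a maximum over the compact convex hull, while you exhibit an explicit crude bound). The reasoning and the resulting constant of the form $2M\tilde L/m^{2}$ match, so the proposal is correct.
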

\begin{proof}
  Let
  \begin{displaymath}
    M \defeq \max_{\substack{0 \leq \lvert \balpha \rvert \leq 2\\ \z \in
        \conv(\mathcal{X})}} s_n^\balpha(\z),
  \end{displaymath}
  then
  \begin{align*}
    \left\vert \frac{s_n^\balpha(\y)}{s_n(\y)} -
    \frac{s_n^\balpha(\x)}{s_n(\x)} \right\vert
    & \leq \frac{1}{m^2} \lvert
      s_n^\balpha(\y) s_n(\x) - s_n^\balpha(\x) s_n(\y)
      \rvert\\
    & \leq \frac{1}{m^2} \bigl( \lvert s_n^\balpha(\y) s_n(\x) -
      s_n^\balpha(\x) s_n(\x) \rvert + \lvert s_n^\balpha(\x)
      s_n(\x) - s_n^\balpha(\x) s_n(\y) \rvert \bigr)\\
    & \leq \frac{1}{m^2}\bigl( \lvert s_n(\x) \rvert \cdot \lvert
      s_n^\balpha(\y) - s_n^\balpha(\x) \rvert + \lvert s_n^\balpha(\x) \rvert
      \cdot \lvert s_n(\x) - s_n(\y) \rvert\bigr)\\
    & \leq \frac{2 M \tilde{L}}{m^2} \lVert \x - \y \rVert,
  \end{align*}
  for any $\x, \y \in \conv(\mathcal{X})$.
\end{proof}

\paragraph{Smoothed ridge.}

The following calculations are useful to find the unique mode of $\hat{g}*$
via the Newton method. We have
\begin{displaymath}
  \hat{g}^*(y) = \sum_{j=2}^m \hat{g}_{j-1} \int_{x_{j-1}}^{x_j}
  \exp\bigl( \hat{s}_j (t - x_{j-1}) \bigr) \phi_\gamma(y - t) \, dt =
  \sum_{j = 2}^m \hat{f}_{j-1} q_\gamma(y, \hat{s}_j, x_{j-1}, x_j),
\end{displaymath}
where $x_1 < x_2 < \ldots < x_m$ denote the knots of the log-density
estimator $\hat{g}$, $\hat{g}_{j-1} \defeq \hat{g}(x_{j-1})$ and
$\hat{s}_j \defeq \frac{\log(\hat{f}_j) - \log(\hat{f}_{j-1})}{x_j -
  x_{j-1}}$ for $2 \leq j \leq m$. The auxiliary function $q_\gamma$ is
\begin{displaymath}
  q_\gamma(x, a, u, v) \defeq \int_u^v e^{a(x - u)} \phi_\gamma(x - y) \,
  dt = e^{a(x-u) + a^2 \gamma^2 / 2} \Bigl( \Phi\bigl( \frac{v-x-a
    \gamma^2}{\gamma} \bigr) - \Phi\bigl( \frac{u - x - a \gamma^2}{\gamma}
  \bigr) \Bigr),
\end{displaymath}
\begin{align*}
  q_\gamma'(x, a, u, v)
  & = a q_\gamma(x, a, u, v) + e^{a(x-u) + a^2 \gamma^2
    / 2} \gamma^{-1} \Bigl( \phi\bigl( \frac{u-x- a \gamma^2}{\gamma}\bigr) -
    \phi \bigl( \frac{v - x - a \gamma^2}{\gamma} \bigr)\Bigr)\\
  & = a q_\gamma(x, a, u, v) + \frac{\gamma^{-1}}{\sqrt{2\pi}} \Bigl(
    e^{a(x-u) + a^2 \gamma^2 / 2 - \ell^2 / 2} - e^{a(x-u) + a^2 \gamma^2 /
    2 - r^2 / 2} \Bigr)
\end{align*}
and
\begin{align*}
  q_\gamma''(x, a, u, v)
  & = 2 a q_\gamma'(x, a, u, v) - a^2 q_\gamma(x, a,
    u, v) + e^{a(x-u) + a^2 \gamma^2 / 2} \gamma^{-2} \Bigl( \phi'\bigl(
    \frac{v-x- a \gamma^2}{\gamma} \bigr) - \phi'\bigl( \frac{u - x - a
    \gamma^2}{\gamma} \bigr) \Bigr)\\
  & = 2 a q_\gamma'(x, a, u, v) - a^2 q_\gamma(x, a, u, v) \frac{\gamma^{-2}}{\sqrt{2\pi}}\Bigl( \ell e^{a(x-u) + a^2
    \gamma^2 / 2 - \ell^2 / 2} - r e^{a(x-u) + a^2 \gamma^2 / 2 - r^2 / 2} \Bigr),
\end{align*}
where $\phi$ and $\Phi$ is the density and distribution function of a
standard normal, respectively, and
\begin{displaymath}
  r \defeq \frac{v - x - a \gamma^2}{\gamma} \quad \text{and} \quad \ell
  \defeq \frac{u - x - a \gamma^2}{\gamma}.
\end{displaymath}
Note that
\begin{displaymath}
  \phi'\bigl(
  \frac{v-x- a \gamma^2}{\gamma} \bigr) - \phi'\bigl( \frac{u - x - a
    \gamma^2}{\gamma} \bigr) = \frac{u - x - a \gamma^2}{\gamma} \phi\bigl(
  \frac{u - x - a \gamma^2}{\gamma} \bigr) - \frac{v - x - a
    \gamma^2}{\gamma} \phi\bigl( \frac{v - x - a \gamma^2}{\gamma} \bigr).
\end{displaymath}

Thus,
\begin{align*}
  \hat{f}^*(x) & = \sum_{j = 2}^m \hat{f}_{j-1} q_\gamma(x, \hat{s}_j,
                 x_{j-1}, x_j),\\
  \hat{f}^*(x)' & = \sum_{j = 2}^m \hat{f}_{j-1} q_\gamma'(x, \hat{s}_j,
                  x_{j-1}, x_j),\\
  \hat{f}^*(x)'' & = \sum_{j = 2}^m \hat{f}_{j-1} q_\gamma''(x, \hat{s}_j,
                   x_{j-1}, x_j).\\
\end{align*}

\end{document}